
\documentclass[letterpaper,twocolumn,10pt]{article}
\usepackage{usenix-2020-09}

\usepackage{tikz}
\usepackage{amsmath}
\usepackage{microtype,breakcites}
\usepackage{wrapfig,floatflt,float}
\usepackage{hyperref}
\usepackage{url}
\usepackage{comment}
\usepackage{graphicx}
\usepackage{multirow}
\usepackage{amsmath}
\usepackage{amssymb}
\usepackage{amsthm}
\usepackage[noend]{algpseudocode}\usepackage{makecell}
\usepackage{caption}
\usepackage{subcaption}
\usepackage{framed, tabu,adjustbox}
\usepackage{xcolor}
\usepackage{color}
\usepackage{stmaryrd}

\usepackage{enumitem}
\usepackage{mdframed}
\usepackage{tikz}
\usepackage{bbm}
\usepackage{ragged2e}
\usepackage{subcaption}
\usepackage{xcolor}

\usepackage{amsthm}

\newtheorem{theorem}{Theorem}
\newtheorem{lemma}{Lemma}

\newcommand{\fixme}[1]{
  \noindent 
  \colorbox{yellow} {\scriptsize FIXME}
  {\bf[}\textcolor{red}{#1}{\bf]} 
}

\usepackage{filecontents}

\definecolor{UniBlau}{cmyk}{1,0.7,0,0}
\definecolor{UniGruen}{cmyk}{0.6,0,1,0}
\definecolor{UniOrange}{cmyk}{0,0.3,1,0}
\definecolor{UniRot}{cmyk}{0.4,1,0,0}
\definecolor{darkred}{rgb}{.6,0,0}
\definecolor{darkgreen}{rgb}{0,.4,0}
\definecolor{darkblue}{rgb}{0,0,.6}

\newif\ifsubmission
\submissionfalse

\ifsubmission
	\newcommand{\commentY}[1] {}
	\newcommand{\commentN}[1] {}
	\newcommand{\commentP}[1] {}
        \newcommand{\commentA}[1] {}
	\newcommand{\EXTRALINES}[1] {}

\else
	\newcommand{\commentY}[1] {\textcolor{magenta}  {{\sf (}{\sl{#1}} {\sf - Yongqin)}}}
	\newcommand{\commentN}[1] {\textcolor{cyan} {{\sf (}{\sl{#1}} {\sf - Nishat)}}}
	\newcommand{\commentP}[1]{\textcolor{violet}{{\sf (}{\sl{#1}} {\sf - Pratik)}}}
 \newcommand{\commentA}[1]{\textcolor{blue}{{\sf (}{\sl{#1}} {\sf - Arpita)}}}
    
	\newcommand{\EXTRALINES}[1] {\textcolor{darkblue} {#1}}

\fi

\begin{document}

\date{}

\newcommand{\name}[1]{\text{CompactTag{#1}}}


\title{\name{}: Minimizing Computation Overheads in Actively-Secure MPC for Deep Neural Networks}



\def\func{{\sf Func}}
\def\bydef{\stackrel{\rm def}{=}}
\newcommand\rand{\ensuremath{f}\xspace}
\newcommand\tcr{\ensuremath{\text{TCR}}\xspace}  
\newcommand\dis{D\xspace}
\newcommand\OOO{{\cal O}\xspace}
\newcommand\RRR{{\cal R}\xspace}
\newcommand\inputspace{\bool^\kappa}
\newcommand\tweakspace{\ensuremath{\mathcal{T}}\xspace}
\newcommand\outputspace{\bool^\kappa}

\mathchardef\mhyphen="2D
\newcommand{\Partyset}{\ensuremath{\mathcal{P}}}
\newcommand{\Party}{\textsf{P}}
\newcommand{\ind}{\stackrel{c}{\approx}}
\newcommand{\grp}[1]{\langle #1 \rangle}
\newcommand{\dangle}[1]{\langle\langle #1 \rangle\rangle}
\newcommand{\sqr}[1]{[ #1 ]}
\newcommand{\ceil}[1]{\lceil #1 \rceil}
\newcommand{\norm}[1]{{\Vert #1 \Vert}}
\newcommand{\subscript}[1]{{\scriptscriptstyle{#1}}}
\newcommand{\vset}[3]{(#1_{#2},\cdots ,#1_{#3})}
\newcommand{\RC}{\textsf{RC}\xspace}

\newcommand{\concat}{\mathbin\Vert}
\newcommand{\indis}{\approx}
\newcommand{\xor}{\oplus}
\newcommand{\Xor}{\bigoplus}
\newcommand{\band}{\odot}
\newcommand{\bAnd}{\bigodot}
 \newcommand{\Order}{\bigO}
\newcommand{\Oracle}[2]{\Omega_{#1}({#2})}
\newcommand{\mac}[1]{\textsf{tag}_{#1}}
\newcommand{\define}{\ensuremath{:=}}
\newcommand{\compindis}{\ensuremath{\stackrel{c}{\approx} }}
\newcommand{\statindis}{\ensuremath{\stackrel{s}{\approx} }}
\def\cross{\times}
\newcommand{\from}{\leftarrow}
\renewcommand{\gets}{\xleftarrow{R}}
\newcommand{\as}[2]{[{#1}]_{#2}}
\newcommand{\SPDZtwok}{\ensuremath{\text{SPD}\nZ_{2^k}}}
\renewcommand{\Order}{O}
\newcommand{\negl}{\mathsf{negl}}
\newcommand{\ssec}{\mu}
\newcommand{\csec}{\kappa}
\newcommand{\numOT}{\tau}
\newcommand{\gen}{\textsf{g}}
\newcommand{\genTag}{\textsf{G}_{\textsf{tag}}}


\newcommand{\ot}[3]{{{#3} \choose 1}\textsf{-OT}_{#1}^{#2}}
\newcommand{\comot}[3]{{{#3} \choose 1}\textsf{-COT}_{#1}^{#2}}
\newcommand{\sen}{\textsf{S}\xspace}
\newcommand{\senotext}{\textsf{S}_{\textsf{Ext}}\xspace}
\newcommand{\recotext}{\textsf{R}_{\textsf{Ext}}\xspace}
\newcommand{\rec}{\ensuremath{\textsf{R}}\xspace}
\newcommand{\OT}{\ensuremath{\textsf{OT}}\xspace}
\newcommand{\committer}{\textsf{C}}
\newcommand{\Ext}{\textsf{Ext}\xspace}
\newcommand{\Equiv}{\textsf{Equiv}\xspace}
\newcommand{\verifier}{\textsf{V}}
\newcommand{\prover}{\textsf{P}\xspace}
\newcommand{\trapdoor}{\textsf{td}\xspace}
\newcommand{\state}{\textsf{st}}

\renewcommand{\boolean}{\{0, 1\}}
\newcommand{\Verify}{\textsf{Verify}}
\newcommand{\BatchVerify}{\textsf{BatchVerify}}
\newcommand{\LocalVerify}{\textsf{LocalVerify}}
\newcommand{\ternary}{\{0, 1, 2\}}
\newcommand{\ls}{\llbracket}
\newcommand{\rs}{\rrbracket}
\newcommand{\notb}{\overline{b}}
\newcommand{\GetMAC}{\textsf{GetMAC}}
\newcommand{\ClientCompute}{\textsf{ClientCompute}}
\newcommand{\ServerCompute}{\textsf{ServerCompute}}
\newcommand{\Output}{\textsf{Output}}

\newcommand{\totcirc}{\textsc{N}} 
\newcommand{\bucketsize}{\textsc{B}} 
\newcommand{\rsec}{\sigma} 
\newcommand{\totexec}{\textsc{t}} 
\newcommand{\evalpad}[2]{\mathcal{E}_{#1, #2}} 
\newcommand{\inputpad}[3]{\textsf{Ikey}_{#1, #2}^{#3}}  
\newcommand{\bucketsec}[1]{\textsf{Sec}_{#1}}
\newcommand{\bucketshare}[2]{\textsc{Share}_{#1, #2}}
\newcommand{\bucketmastersec}{\textsf{MSEC}}

\newcommand{\mask}[1]{\textsf{mask}_{#1}}
\newcommand{\Chkcir}{\textsf{SET}_\textsc{ck}}
\newcommand{\Evalcir}{\textsf{SET}_\textsc{ev}}
\newcommand{\ComString}{\mathcal{S}_\textsf{Com}}
\newcommand{\random}{\textsf{random}}
\newcommand{\permusxorpad}[2]{\delta_{#1}^{#2}}
\newcommand{\permusxorpadparts}[2]{\tau_{#1}^{#2}}
\newcommand{\permusxor}[1]{\Delta_{#1}}
\newcommand{\Par}[1]{\textsf{P}_{\textsc{#1}}} 
\newcommand{\ParNISC}[1]{\mathcal{P}_{{#1}}} 

\newcommand{\polynomial}{\text{poly}}
\newcommand{\polylog}{\text{polylog}}
\newcommand{\negli}[1]{\textsf{neg}(#1)}
\newcommand{\wire}[2]{\omega_{\textsf{$#1$}, {#2}}}
\newcommand{\wirelab}[3]{\textsf{K}_{{#1}, {#2}}^{#3}}
\newcommand{\dpad}[3]{\textsf{d}_{{#1}, {#2}}^{#3}}
\newcommand{\seed}{\textsf{seed}}
\newcommand{\permus}[1]{\textsf{PS}_{#1}}  
\newcommand{\permuspad}[2]{\eta_{#1, #2}} 
\newcommand{\permuscolpad}[1]{\rho_{#1}} 
\newcommand{\permuscom}[1]{\textsf{PSCom}_{#1}} 
\newcommand{\permuinp}[1]{\textbf{X}_{{#1}}} 

\newcommand{\GC}{\ensuremath{\mathbf{GC}}} 
\newcommand{\Gb}{\ensuremath{\mathsf{Gb}}}
\newcommand{\En}{\ensuremath{\mathsf{En}}}
\newcommand{\De}{\ensuremath{\mathsf{De}}}
\newcommand{\Ev}{\ensuremath{\mathsf{Ev}}}
\newcommand{\Ve}{\ensuremath{\mathsf{Ve}}}
\newcommand{\GbC}{\textsf{GC}} 
\newcommand{\ei}{\textsf{e}} 
\newcommand{\di}{\textsf{d}} 
\newcommand{\cheatproof}{\textsf{D}}
\newcommand{\dbox}{\textsf{DBox}}
\newcommand{\cheatproofpad}[2]{\textsf{R}_{#1}^{#2}}
\newcommand{\choice}[1]{\textsf{C}^{\textsc{#1}}}
\newcommand{\reclab}{\textsf{rec}}
\newcommand{\simprob}{\mathsf{p}}



\newcommand{\pk}{\textsf{pk}} 
\newcommand{\sk}{\textsf{sk}} 
\newcommand{\Setup}{\textsf{Setup}} 
\newcommand{\CRSGen}{\textsf{CRSGen}} 
\newcommand{\Gen}{\textsf{Gen}} 
\newcommand{\Histogram}{\textsf{Histogram}} 
\newcommand{\Eval}{\textsf{Eval}}
\newcommand{\BatchEval}{\textsf{BatchEval}}
\newcommand{\BVEval}{\textsf{BVEval}}
\newcommand{\EvalNext}{\textsf{EvalNext}}
\newcommand{\EvalPrefix}{\textsf{EvalPref}}
\newcommand{\CW}{\textsf{cw}}
\newcommand{\OCW}{\textsf{ocw}}
\newcommand{\CS}{\textbf{CS}}
\newcommand{\y}{\textsf{y}}
\newcommand{\Diff}{\textsf{Diff}}
\newcommand{\Same}{\textsf{Same}}
\newcommand{\Enc}{\textsf{Enc}} 
\newcommand{\Dec}{\textsf{Dec}} 
\newcommand{\PK}{\textsc{PK}} 
\newcommand{\SK}{\textsc{SK}} 
\newcommand{\Commit}{\textsc{COM}} 
\newcommand{\Receipt}{\textsc{Receipt}} 
\newcommand{\Decommit}{\textsc{Decommit}} 
\newcommand{\CRSSet}{\textsc{CRSSet}} 

\newcommand{\CRSExtract}{\textsc{CRSExtract}} 
\newcommand{\SetupMessy}{\textsf{SetupMessy}} 
\newcommand{\SetupDec}{\textsf{SetupDec}} 
\newcommand{\KeyGen}{\textsf{KeyGen}} 
\newcommand{\FindMessy}{\textsf{FindMessy}} 
\newcommand{\DecKeyGen}{\textsf{DecKeyGen}} 
\newcommand{\OGen}{\textsf{OGen}} 
\newcommand{\MACstate}[1]{\state^{#1}_{\textsf{MAC}}}
\newcommand{\mkey}[1]{\Delta_{#1}}
\newcommand{\asmac}[1]{\as{\mkey{}}{#1}}
\newcommand{\MACinp}[1]{x_{#1}}

\newcommand{\I}{\mathcal{I}}
\newcommand{\IsMessy}{\textsf{IsMessy}} 
\newcommand{\LWEEnc}{\textsf{LWEEnc}} 
\newcommand{\LWEDec}{\textsf{LWEDec}} 
\newcommand{\LWEKeyGen}{\textsf{LWEKeyGen}} 
\newcommand{\mpad}{\textsf{m}}
\newcommand{\kpad}{\textsf{k}}
\newcommand{\TagPreprocess}{\textsf{Pre-Tag}}
\newcommand{\TriplesPreprocess}{\textsf{Pre-Triples}}

\newcommand{\SumPAD}{\textsf{Pads}}
\newcommand{\SumVal}{\sigma}
\newcommand{\Hist}{\textsf{HIST}}
\newcommand{\party}[1]{\textsf{P}_{#1}}
\newcommand{\HH}[1]{\textsf{HH}^{#1}}
\newcommand{\sHH}[2]{\HH{#1}_{#2}}
\newcommand{\cnt}{\textsf{cnt}}
\newcommand{\sHHcount}[3]{{\cnt}_{(#1, #2)}^{#3}}
\newcommand{\FCOMPARE}{{\mathcal{F}}_\textsf{CMP}}
\newcommand{\prune}[2]{\textsf{prune}_{(#1, #2)}}
\newcommand{\shy}[2]{y_{(#1, #2)}}
\newcommand{\CRF}{ \textsc{CRF}} 
\newcommand{\SPCRF}{\textsc{H}_\textsf{pc}}
\newcommand{\HCOM}{\textsc{HCOM}}
\newcommand{\PHCOM}{\textsc{PHCOM}}
\newcommand{\PRF}{\textsf{PRF}}
\newcommand{\PRG}{\textsf{PRG}}
\newcommand{\threshold}{\mathcal{T}}
\newcommand{\Convert}{\textsf{Convert}}
\newcommand{\sstate}[3]{\state_{(#1, #2)}^{#3}}
\newcommand{\notgamma}{{\overline{\gamma}}}
\newcommand{\preproof}[3]{\tau_{(#1, #2)}^{#3}}
\newcommand{\shash}[3]{h_{(#1, #2)}^{#3}}
\newcommand{\sR}[3]{R_{(#1, #2)}^{#3}}
\newcommand{\Hash}{\textsc{H}}
\newcommand{\ComUC}{\textsc{Com}} 
\newcommand{\decom}{\gamma}
\newcommand{\FMAC}{\mathcal{F}_\textsf{MAC}}
\newcommand{\FHIST}{\mathcal{F}_\textsf{HIST}}
\newcommand{\FHH}{\mathcal{F}_\textsf{HH}}
\newcommand{\pimult}{\pi_\textsf{MUL}}
\newcommand{\honestlist}{\mathcal{H}}
\newcommand{\corruptlist}{\mathcal{C}}

\newcommand{\sizeXrow}{\mathsf{T_1}}
\newcommand{\sizeXcol}{\mathsf{T_2}}
\newcommand{\sizeYrow}{\sizeXcol}
\newcommand{\sizeYcol}{\mathsf{T_3}}
\newcommand{\sizeZrow}{\sizeXrow}
\newcommand{\sizeZcol}{\sizeYcol}
\newcommand{\totsize}{\textsf{T}}

\newcommand{\sizem}{\mathsf{T_1}}
\newcommand{\sizen}{\mathsf{T_2}}
\newcommand{\sizeo}{\mathsf{T_3}}
\newcommand{\mmul}{\odot}
\newcommand{\iprod}[2]{\langle#1 \cdot #2\rangle}

\newcommand{\modeq}[1]{=_{#1}}
\newcommand{\ewmul}{*}

\newcommand{\pihist}{\pi_\textsf{HIST}}
\newcommand{\pihh}{\pi_\textsf{HH}}
\newcommand{\pidpf}{\pi_\textsf{DPF}}
\newcommand{\pivdpf}{\pi_\textsf{VDPF}}
\newcommand{\piidpf}{\pi_\textsf{IDPF}}
\newcommand{\MAC}{\mac}
\newcommand{\idpfpi}[2]{\ensuremath{{\widetilde{\pi}}_{#1}^{(#2)}}}
\newcommand{\new}[1]{\textcolor{black}{#1}}
\newcommand{\badclient}{\textsf{ind}}
\newcommand{\MTroot}{\textsf{R}}
\newcommand{\MTnode}{\textsf{N}}
\newcommand{\MTleaf}{\textsf{L}}
\newcommand{\malMTnode}{\overline{\textsf{N}}}
\newcommand{\Lchild}{\textsf{child}^{\textsf{L}}}
\newcommand{\Rchild}{\textsf{child}^{\textsf{R}}}

\newcommand{\mA}{\ensuremath{\mathbf{A}}}
\newcommand{\mB}{\ensuremath{\mathbf{B}}}
\newcommand{\mM}{\ensuremath{\mathbf{M}}}
\newcommand{\mW}{\ensuremath{\mathbf{W}}}
\newcommand{\mU}{\ensuremath{\mathbf{U}}}
\newcommand{\mC}{\ensuremath{\mathbf{C}}}
\newcommand{\mD}{\ensuremath{\mathbf{D}}}
\newcommand{\mE}{\ensuremath{\mathbf{E}}}
\newcommand{\mX}{\ensuremath{\mathbf{X}}}
\newcommand{\mY}{\ensuremath{\mathbf{Y}}}
\newcommand{\mZ}{\ensuremath{\mathbf{Z}}}
\newcommand{\mT}{\ensuremath{\mathbf{T}}}
\newcommand{\mI}{\ensuremath{\mathbf{I}}}
\newcommand{\mS}{\ensuremath{\mathbf{S}}}
\newcommand{\mQ}{\ensuremath{\mathbf{Q}}}
\newcommand{\mV}{\ensuremath{\mathbf{V}}}
\newcommand{\mR}{\ensuremath{\mathbf{R}}}
\newcommand{\picheck}{\pi_\textsf{check}}
\newcommand{\vsigma}{\boldsymbol{\sigma}}
\renewcommand{\P}{\textsf{P}}
\newcommand{\ve}{\mathbf{e}}
\newcommand{\va}{\mathbf{a}}
\newcommand{\vb}{\mathbf{b}}
\newcommand{\vc}{\mathbf{c}}
\newcommand{\vd}{\mathbf{d}}
\newcommand{\vk}{\mathbf{k}}
\newcommand{\vp}{\mathbf{p}}
\newcommand{\vm}{\mathbf{m}}
\newcommand{\vr}{\mathbf{r}}
\newcommand{\vf}{\mathbf{f}}
\newcommand{\vs}{\mathbf{s}}
\newcommand{\vt}{\mathbf{t}}
\newcommand{\vu}{\mathbf{u}}
 \newcommand{\vv}{\mathbf{v}}
\newcommand{\vw}{\mathbf{w}}
\newcommand{\vx}{\mathbf{X}}
\newcommand{\vy}{\mathbf{Y}}
\newcommand{\vz}{\mathbf{z}}
\newcommand{\vM}{\mathbf{M}}
\renewcommand{\S}{\textsf{S}}
\newcommand{\T}{\textsf{T}}
\newcommand{\U}{\textsf{U}}
\newcommand{\W}{\textsf{W}}
\newcommand{\V}{\textsf{V}}
\newcommand{\Ans}{\textsf{Ans}}
\newcommand{\Bad}{\textsf{Cor}}
\newcommand{\Honest}{\textsf{Hon}}

\newcommand{\client}{\mathcal{C}}
\newcommand{\server}{\mathcal{S}}
\newcommand{\countclient}{\ell}
\newcommand{\countclientcorr}{\countclient'}
\newcommand{\counthistogram}{m}
\newcommand{\key}{\textsf{key}}

\newcommand{\valf}{\mathsf{f}}
\newcommand{\valr}{\mathsf{r}}
\newcommand{\valu}{\mathsf{u}}
\newcommand{\valv}{\mathsf{v}}
\newcommand{\valw}{\mathsf{w}}
\newcommand{\valx}{\mathsf{x}}
\newcommand{\valy}{\mathsf{y}}
\newcommand{\valz}{\mathsf{z}}
\newcommand{\vale}{\mathsf{e}}

\newcommand{\shr}[1]{\ensuremath{\llbracket #1 \rrbracket}}
\newcommand{\compact}[1]{\ensuremath{({#1})}^{\mathsf{c}}}
\newcommand{\spdz}{SPD\ensuremath{\nZ_{2^k} }}
\newcommand{\batchcheck}{\ensuremath{\mathsf{BatchRec}}}
\newcommand{\MatMul}{\ensuremath{\mathsf{MatMul}}}
\newcommand{\Rand}{\ensuremath{\mathsf{Rand}}}
\newcommand{\Mac}{\ensuremath{\mathsf{MAC}}}
\newcommand{\optmac}{\ensuremath{\mathsf{OptMAC}}}
\newcommand{\optmactrunc}{\ensuremath{\mathsf{OptMAC\&Trunc}}}
\newcommand{\compactmatmul}{\ensuremath{\mathsf{CompactMatMul}}}
\newcommand{\Truncation}{\ensuremath{\mathsf{Truncation}}}
\newcommand{\compactmac}{\ensuremath{\mathsf{Compress}}}
\newcommand{\cs}{\ensuremath{\mathsf{cs}}}

\newcommand{\lsum}[1]{\tilde{#1}}

\newcommand{\trunc}[1]{\ensuremath{{#1}^{\valf}}}

\newcommand{\Fmac}{\ensuremath{\mathcal{F}_{\mathsf{MAC}}}}


\newcommand{\C}{\mathcal{C}}
\newcommand{\lf}{\mathcal{L}}
\newcommand{\BtoD}{\texttt{BtoD}}

\newcommand{\CC}{\mathbb{C}}
\newcommand{\F}{\mathcal{F}}
\newcommand{\K}{\textsf{K}}
\newcommand{\M}{\mathcal{M}}
\newcommand{\N}{\textsf{N}}
\renewcommand{\L}{\textsf{L}}
\newcommand{\Q}{\mathbb{Q}}
\newcommand{\R}{\mathbb{R}}
\newcommand{\Zq}{\mathbb{Z}_q}
\newcommand{\X}{\textsf{X}}
\newcommand{\Z}{\textsf{Z}}
\renewcommand{\Q}{\textbf{Q}}
\newcommand{\Y}{\textsf{Y}}
\newcommand{\Field}{\mathbb{F}}
\newcommand{\Ring}{\mathcal{R}}
\newcommand{\Group}{\mathbb{G}}

\newcommand{\nZ}{\mathbb{Z}}
\newcommand{\KeyRing}{\nZ_{2^s}}
\newcommand{\TextRing}{\nZ_{2^k}}
\newcommand{\MacRing}{\nZ_{2^{k+s}}}
\newcommand{\CipherRing}{\nZ_{2^{k+s}}}
\newcommand{\bigOC}{\mathcal{O}}


\newcommand{\pid}{{\mathsf{pid}}}
\newcommand{\sid}{{\mathsf{sid}}}
\newcommand{\FNISC}{\mathcal{F}^\textsc{NISC}}
\newcommand{\Sim}{\textsf{Sim}} 
\newcommand{\Func}{\mathcal{F}}
\newcommand{\Env}{\mathcal{Z}}
\newcommand{\Adv}{\mathcal{A}}
\newcommand{\IntAdv}{\textsf{Adv}_\textsf{Int}}
\newcommand{\real}{\textsf{Real}}
\newcommand{\ideal}{\textsf{Ideal}}
\newcommand{\Ideal}{\ensuremath{\textsc{ideal}}}
\newcommand{\Real}{\ensuremath{\textsc{real}}}
\newcommand{\funcn}[1]{\mathcal{F}_{\textsf{#1}}}
\newcommand{\procn}[1]{\Pi_{\textsf{#1}}}

\newcommand{\FGRO}[1]{\ensuremath{\mathcal{F}_{\textsf{GRO#1}}}}   
\newcommand{\FMCRF}{\ensuremath{\mathcal{F}_{\textsf{M-CRF}}}}   
\newcommand{\FXCRF}{\ensuremath{\mathcal{F}_{\textsf{X-CRF}}}}
\newcommand{\FMULT}{\ensuremath{\mathcal{F}_{\textsf{MULT}}}}   
\newcommand{\FOT}{\ensuremath{\mathcal{F}_{\textsf{OT}}}}  
\newcommand{\FSFROT}{\ensuremath{\mathcal{F}_{\textsf{SF-rOT}}}}
\newcommand{\FZK}{\ensuremath{\mathcal{F}_{\textsf{ZK}}}}   
\newcommand{\FOTExt}{\ensuremath{\mathcal{F}_{\textsf{OT-Ext}}}}   
\newcommand{\FROT}{\ensuremath{\mathcal{F}_{\textsf{rOT}}}}   
\newcommand{\FLOT}{\ensuremath{\mathcal{F}_{\textsf{$\ell$OT}}}}   
\newcommand{\OLE}{\ensuremath{\textsf{OLE}}}
\newcommand{\VOLE}{\ensuremath{\textsf{VOLE}}}
\newcommand{\FOLE}{\ensuremath{\mathcal{F}_{\textsf{OLE}}}}   
\newcommand{\FVOLE}{\ensuremath{\mathcal{F}_{\textsf{VOLE}}}}   
\newcommand{\FCOLE}{\ensuremath{\mathcal{F}_{\textsf{cOLE}}}} 
\newcommand{\FOPE}{\ensuremath{\mathcal{F}_{\textsf{OPE}}}}    
\newcommand{\FWOT}{\ensuremath{\mathcal{F}_{\textsf{OT}}^\textsf{w}}} \newcommand{\FEquality}{\ensuremath{\mathcal{F}_\textsf{EQ}}}
\newcommand{\FNOT}{\ensuremath{\mathcal{F}_{\textsf{N-OT}}}}   
\newcommand{\FNCE}{\ensuremath{\mathcal{F}_{\textsf{NCE}}}}   
\newcommand{\FOTT}{\ensuremath{\mathcal{F}_{\ot{\ell}{m}}}}
\newcommand{\FCOIN}{\ensuremath{\mathcal{F}_\textsf{COIN}}}
\newcommand{\FCRS}{\ensuremath{\mathcal{F}_\textsf{CRS}}}
\newcommand{\coin}{\texttt{coin}}

\newcommand{\FNIWIPOK}{\ensuremath{\mathcal{F}_\textsf{NIWIpok}}}
\newcommand{\FCOMZK}{\ensuremath{\mathcal{F}^\textsf{ExCom$\Delta$ZK}}}
\newcommand{\FKEYGEN}{\ensuremath{\mathcal{F}_\textsf{SetupGen}}}
\newcommand{\FSETUP}{\ensuremath{\mathcal{F}_\textsf{One-time-Setup}}}
\newcommand{\FMPC}{\ensuremath{\mathcal{F}_f}}
\newcommand{\otproof}{\gamma}
\newcommand{\PIKEYGEN}{\ensuremath{\Pi_\textsf{KeyGen}}}
\newcommand{\SKEYGEN}{\ensuremath{\mathcal{S}_\textsf{KeyGen}}}
\newcommand{\FCOMM}{\ensuremath{\mathcal{F}_\textsf{NICOM}}}
\newcommand{\FRO}{\ensuremath{\mathcal{F}_\textsf{RO}}}
\newcommand{\FROone}{\ensuremath{\mathcal{F}_\textsf{RO1}}}
\newcommand{\FROtwo}{\ensuremath{\mathcal{F}_\textsf{RO2}}}
\newcommand{\FROthree}{\ensuremath{\mathcal{F}_\textsf{RO3}}}
\newcommand{\FROfour}{\ensuremath{\mathcal{F}_\textsf{RO4}}}
\newcommand{\FROfive}{\ensuremath{\mathcal{F}_\textsf{RO5}}}
\newcommand{\PICOMM}{\ensuremath{\Pi_\textsf{Commit}}}
\newcommand{\PISYNROBMPC}{\ensuremath{\Pi_\textsf{SYNROBMPC}}}
\newcommand{\PIOFFLINESYN}{\ensuremath{\Pi_{\textsf{Off}\mbox{-}\textsc{Input}}}}
\newcommand{\crs}{\text{CRS}}
\newcommand{\SCOMM}{\ensuremath{\mathcal{S}_\textsc{Commit}}}
\newcommand{\FENCCOMMIT}{\ensuremath{\mathcal{F}_\textsc{EncCommit}}}
\newcommand{\PIENCCOMMIT}{\ensuremath{\Pi_\textsc{EncCommit}}}
\newcommand{\SENCCOMMIT}{\ensuremath{\mathcal{S}_\textsc{EncCommit}}}
\newcommand{\View}{\mathcal{V}}
\newcommand{\pividpf}{\pi_\textsf{VIDPF}}
\newcommand{\Hybrid}[1]{\textsf{HYB}_{\textsc{$#1$}}}
\providecommand{\depthof}{\gamma}
\newcommand{\SSfunction}{{\text{f}_\textsf{ss}}}
\newcommand{\MSSfunction}{{\text{f}_\textsf{mss}}}
\newcommand{\RSSfunction}{{\text{f}_\textsf{rss}}}
\newcommand{\VRSSfunction}{{\text{f}_\textsf{vrss}}}
\newcommand{\Distinguisher}{{\textsf{D}}}
\newcommand{\DistinguisherPRG}{{\textsf{D}}_\textsf{prg}}
\newcommand{\DistinguisherSS}{{\textsf{D}}_\textsf{ss}}
\newcommand{\DistinguisherMSS}{{\textsf{D}}_\textsf{mss}}
\newcommand{\ChallengerMSS}{{\textsf{C}}_\textsf{mss}}
\newcommand{\ChallengerSS}{{\textsf{C}}_\textsf{ss}}
\newcommand{\DistinguisherRSS}{{\textsf{D}}_\textsf{rss}}
\newcommand{\ChallengerRSS}{{\textsf{C}}_\textsf{rss}}
\newcommand{\DistinguisherVRSS}{{\textsf{D}}_\textsf{vrss}}
\newcommand{\ChallengerVRSS}{{\textsf{C}}_\textsf{vrss}}
\newcommand{\NCE}{\textsf{NCE}}

\newcommand{\Distribution}{\mathcal{D}}
\newcommand{\Distributionkey}{\mathcal{D}_\textsf{key}}
\newcommand{\Distributionpad}{\mathcal{D}_\textsf{pad}}
\newcommand{\DistributionMAC}{\mathcal{D}_\MAC}
\newcommand{\pixorpair}{\pi_{\textsf{XORP}}}
\newcommand{\picrs}{\pi_{\textsf{CRS}}}
\newcommand{\pixor}{\pi_{\textsf{XOR}}}
\newcommand{\piequivcom}{\pi_{\textsf{ECOM}}}
\newcommand{\pince}{\pi_{\textsf{NCE}}}
\newcommand{\piuccom}{\pi_{\textsf{COM}}}
\newcommand{\piprot}{\pi_{\textsf{2PC}}}
\newcommand{\piprotbatch}{\pi_{\textsf{B-2PC}}}
\newcommand{\piot}{\pi_{\textsf{OT}}}
\newcommand{\pikos}{\pi_{\textsf{KOS}}}
\newcommand{\otextchall}{\chi}
\newcommand{\otextvchall}{\widetilde{\chi}}
\newcommand{\otchall}{\textsf{chall}}
\newcommand{\otvchall}{\texttt{Chall}}
\newcommand{\otresp}{\textsf{resp}}
\newcommand{\otans}{\texttt{Ans}}
\newcommand{\otvresp}{\texttt{Resp}}

\newcommand{\piotgro}{\pi_{\textsf{sOT-oRO}}}
\newcommand{\piotrand}{\pi_{\textsf{aROT-pRO}}}
\newcommand{\piotpro}{\pi_{\textsf{aOT-pRO}}}

\newcommand{\pireot}{\pi_{\textsf{reOT-CRS}}}
\newcommand{\piotcrs}{\pi_{\textsf{sOT-CRS}}}
\newcommand{\piadapot}{\pi_{\textsf{aOT-CRS}}}
\newcommand{\picomcrs}{\pi_{\textsf{aCOM-CRS}}}

\newcommand{\piR}{\pi_{\textsf{R}}}
\newcommand{\piS}{\pi_{\textsf{S}}}
\newcommand{\pizk}{\pi_{\textsf{ZK}}}
\newcommand{\piazk}{\pi_{\textsf{A-ZK}}}
\newcommand{\piotext}{\pi_{\textsf{OT-Ext}}}
\newcommand{\pisfot}{\pi_{\textsf{SF-OT}}}
\newcommand{\pisfrot}{\pi_{\textsf{SF-ROT}}}
\newcommand{\piole}{\pi_{\textsf{OLE}}}
\newcommand{\pibole}{\pi_{\textsf{B-OLE}}}
\newcommand{\piboot}{\pi_{\textsf{boot}}}
\newcommand{\piope}{\pi_{\textsf{OPE}}}
\newcommand{\pishot}{\pi^{\textsf{sh}}_{\textsf{OT}}}
\newcommand{\pirot}{\pi_{\textsf{ROT}}}
\newcommand{\pinot}{\pi_{\textsf{N-OT}}}
\newcommand{\piiknp}{\pi_{\textsf{IKNP}}}
\newcommand{\pidlnizk}{\pi_{\textsf{DL-NIZK}}}
\newcommand{\piOLEtoOT}{\pi_{\ensuremath{\textsf{OLE} \rightarrow \textsf{OT}}}}
\newcommand{\pibasicOText}{\pi_{\textsf{OText}}}


\newcommand{\DDec}{\ensuremath{\mathsf{DistDec}}}
\newcommand{\RecPrv}{\ensuremath{\mathsf{RecPrv}}}
\newcommand{\RecPrvEnc}{\ensuremath{\mathsf{RecPrvEnc}}}
\newcommand{\RecPubSimple}{\ensuremath{\mathsf{RecPubSimple}}}
\newcommand{\RecPub}{\ensuremath{\mathsf{RecPub}}}

\newcommand{\event}{\textsf{evt}}
\newcommand{\sevent}{\textsf{sevt}}
\newcommand{\tevent}{\textsf{tevt}}
\newcommand{\vsevent}{\textsf{\bf sevt}}
\newcommand{\vtevent}{\textsf{\bf tevt}}
\newcommand{\E}{\textsf{E}}
\newcommand{\val}{\textsf{val}}
\newcommand{\groupid}{\textsf{gid}}
\newcommand{\timestamp}{\textsf{tstmp}}
\newcommand{\matchkeys}{\textsf{mkeys}}
\newcommand{\sh}[1]{\langle{#1}\rangle}
\newcommand{\FLA}{\mathcal{F}_{\textsf{IPA-LA}}}
\newcommand{\FIPA}{\mathcal{F}_{\textsf{IPA}}}
\newcommand{\countmkeys}{\textsf{K}}
\newcommand{\countsevent}{\textsf{N}}
\newcommand{\counttevent}{\textsf{M}}
\newcommand{\countevent}{\textsf{S}}
\newcommand{\countparty}{\textsf{n}}
\newcommand{\atbL}{\widehat{\L}}
\newcommand{\atbE}{\widehat{\E}}
\newcommand{\flag}{\textsf{flag}}
\newcommand{\accept}{\textsf{Accept}}
\newcommand{\reject}{\textsf{Reject}}
\newcommand{\DPF}{\text{DPF}}
\newcommand{\iDPF}{\text{IDPF}}
\newcommand{\vDPF}{\text{VDPF}}
\newcommand{\viDPF}{\text{VIDPF}}

\newcommand{\verified}{\textsf{ver}}
\newcommand{\outBad}{\out_\Bad}
\newcommand{\outHon}{\out_\Honest}
\newcommand{\outDiff}{\out_\textsf{Diff}}

\newcommand{\tabref}[1]{Table~\protect\ref{tab:#1}}
\newcommand{\secref}[1]{Section~\protect\ref{sec:#1}}
\newcommand{\lemref}[1]{Lemma~\protect\ref{lem:#1}}
\newcommand{\figref}[1]{Figure~\ref{fig:#1}}
\newcommand{\figlab}[1]{\label{fig:#1}}

\renewenvironment{proof}{\noindent\textit{Proof.} }{\qed}

\newcommand{\bitset}{\{0,1\}}

\mathchardef\mhyphen="2D

\newcommand{\asn}{\leftarrow}


\newenvironment{myitemize}
{\begin{list}{$\bullet$}{ 
\itemindent=-0.1in
\itemsep=0.0in
\parsep=0.0in
\topsep=0.0in
\partopsep=0.0in}}{\end{list}}
\newcounter{itemcount}

\newenvironment{myenumerate}
{\setcounter{itemcount}{0}\begin{list}
{\arabic{itemcount}.}{\usecounter{itemcount} \itemindent=0.1cm
\itemsep=0.0in
\parsep=0.0in
\topsep=2pt
\partopsep=0.0in}}{\end{list}}

\newenvironment{mydescription}
{\setcounter{itemcount}{1}\begin{list}
{\arabic{itemcount}.}{\usecounter{itemcount} \itemindent=-.3cm
\itemsep=0.0in
\parsep=0.0in
\topsep=2pt
\partopsep=0.0in}}{\end{list}}

\newenvironment{inneritemize}{
	\begin{list}{{$\bullet$}}{
			\setlength\partopsep{0pt}
			\setlength\parskip{0pt}
			\setlength\parsep{1pt}
			\setlength\topsep{2pt}
			\setlength\itemsep{0.5pt}
			\setlength{\itemindent}{2pt}
			\setlength{\leftmargin}{9pt}
		}
	}{
	\end{list}
}

\newcommand{\bool}{\{0,1\}}
\newcommand{\trio}{\{0,1,2\}}
\newcommand{\ppt}{\textsf{PPT}}
\newcommand*{\ovA}[1]{\overline{#1\raisebox{2mm}{}}}
\newcommand*{\ovB}[1]{\overline{#1\raisebox{3mm}{}}}
\newcommand*{\ovC}[1]{\overline{#1\raisebox{4mm}{}}}
\newcommand{\share}[1]{\ls#1\rs}
\newcommand{\LSB}{\textsf{LSB}}
\newcommand{\skey}[3]{\key_{#1}^{(#2, #3)}}
\newcommand{\dkey}[2]{\key_{(#1,#2)}}
\newcommand{\shist}[2]{\Hist_{(#1,#2)}}
\newcommand{\spi}[2]{\pi_{(#1,#2)}}
\newcommand{\stau}[2]{\tau_{(#1,#2)}}
\newcommand{\sY}[2]{\mathbf{Y}_{(#1,#2)}}
\newcommand{\sy}[3]{y_{(#2,#3), #1}}
\newcommand{\out}{\textsf{out}}
\newcommand{\mybox}[3]{
    \begin{figure}[!htbp]
    \centering
    \begin{tikzpicture}
        \node[anchor=text,text width=\columnwidth-0.5cm, draw, rounded corners,
        line width=0.5pt, fill=white, inner sep=1.6mm] (big) {\\#2};
        \node[draw, rounded corners,
        line width=.5pt, fill=white, anchor=west, xshift=5mm] (small) at (big.north west) {#1};
    \end{tikzpicture}
    {#3}
    \end{figure}
}

\newcommand{\myfullbox}[3]{
    \begin{figure*}[!htbp]
    \centering
    \begin{tikzpicture}
        \node[anchor=text,text width=\textwidth-0.4cm, draw, rounded corners,
        line width=0.5pt, fill=white, inner sep=3.0mm] (big) {\\#2};
        \node[draw, rounded corners,
        line width=.5pt, fill=white, anchor=west, xshift=10mm] (small) at (big.north west) {#1};
    \end{tikzpicture}
    {#3}
    \end{figure*}
}

\newenvironment{boxfig}[1]{
	\begin{figure}[htbp!]
		\newcommand{\FigCaption}{#1}
		\begin{center}
			\caption{\FigCaption} 
			\begin{small}
				\begin{tabular}{@{}|@{~~}l@{~~}|@{}}
					\hline
					\rule[-1.5ex]{0pt}{1ex}\begin{minipage}[b]{.97\linewidth} 
						\vspace{1mm}
						\fontsize{8pt}{9pt}
						\smallskip
					}{%
					\end{minipage} \\
					\hline
				\end{tabular}
			\end{small}
			\vspace{-0.48cm}
		\end{center}
		\vspace{-.3cm}
	\end{figure}
}

\newenvironment{protocolbox}[3]
{\begin{titlebox}{Protocol \normalfont #1}{commonbox}{normal}{#2}{#3}}
	{\end{titlebox}}

\newenvironment{titlebox}[5]
{\mdfsetup{
		style=#2,
		innertopmargin=1.1\baselineskip,
		skipabove={\dimexpr0.2\baselineskip+\topskip\relax},
		skipbelow={1em},needspace=3\baselineskip,
		singleextra={\node[#3,right=10pt,overlay] at (P-|O){~{\sffamily\bfseries #1 }};},%
		firstextra={\node[#3,right=10pt,overlay] at (P-|O) {~{\sffamily\bfseries #1 }};},
		frametitleaboveskip=9em,
		innerrightmargin=5pt
	}
	\newcommand{\TitleCaption}{#4}
	\newcommand{\TitleLabel}{#5}
	\begin{mdframed}[font=\small]
		\setlist[itemize]{leftmargin=13pt}\setlist[enumerate]{leftmargin=13pt}\raggedright%
	}
	{\end{mdframed}
	\vspace{-2em} 
	{\captionof{figure}{\small \TitleCaption}\label{\TitleLabel}}
	\smallskip
}

\newcommand{\algoHead}[1]{\vspace{0.2em} \underline{\textbf{#1}} \vspace{0.3em}}

\author{
{\rm Yongqin Wang}\\
University of Southern California \\
\rm \href{mailto:yongqin@usc.edu}{yongqin@usc.edu}
\and
{\rm Pratik Sarkar}\\
Supra Research \\
\rm \href{mailto:pratik93@bu.edu}{pratik93@bu.edu}
\and
{\rm Nishat Koti}\\
Indian Institute of Science Bangalore \\
\rm \href{mailto:kotis@iisc.ac.in}{kotis@iisc.ac.in}
\and
{\rm Arpita Patra}\\
Indian Institute of Science Bangalore \\
\rm \href{mailto:arpita@iisc.ac.in}{arpita@iisc.ac.in}
\and 
{\rm Murali Annavaram}\\
University of Southern California\\
\rm \href{mailto:annavara@usc.edu}{annavara@usc.edu}
} 

\maketitle

\begin{abstract}
Secure Multiparty Computation (MPC) protocols enable secure evaluation of a circuit by several parties, even in the presence of an adversary who maliciously corrupts all but one of the parties. These MPC protocols are constructed using the well-known secret-sharing-based paradigm (SPDZ and \SPDZtwok), where the protocols ensure security against a malicious adversary by computing Message Authentication Code (MAC) tags on the input shares and then evaluating the circuit with these input shares and tags. However, this tag computation adds a significant runtime overhead, particularly for machine learning (ML) applications with computationally intensive linear layers, such as convolutions and fully connected layers. 

To alleviate the tag computation overhead, we introduce \name{}, a lightweight algorithm for generating MAC tags specifically tailored for linear layers in ML. Linear layer operations in ML, including convolutions, can be transformed into Toeplitz matrix multiplications. For the multiplication of two matrices with dimensions $\sizem  \times \sizen$ and $\sizen \times \sizeo$ respectively, $\SPDZtwok$ required $\Order(\sizem  \cdot \sizen \cdot \sizeo)$ local multiplications for the tag computation.
In contrast, \name{} only requires $\Order(\sizem\cdot \sizen + \sizem\cdot \sizeo + \sizen\cdot\sizeo)$ local multiplications, resulting in a substantial performance boost for various ML models.

We empirically compared our protocol to the $\SPDZtwok$ protocol for various ML circuits, including ResNet Training-Inference, Transformer Training-Inference, and VGG16 Training-Inference. $\SPDZtwok$ dedicated around $30\%$ of its online runtime for tag computation. \name{} speeds up this tag computation bottleneck by up to $23\times$, resulting in up to $1.47\times$ total online phase runtime speedups for various ML workloads.


\end{abstract}

\section{Introduction}

Machine learning (ML) has gained significant importance, particularly in fields like finance, healthcare, and retail advertising, thanks to its ability to handle large amounts of data. Increasingly, ML providers rely on cloud-based servers to provide model training and serving results. However, cloud-based systems are vulnerable to data and model leaks.  To protect this data, privacy-preserving technologies like secure multiparty computation (MPC) have been developed.
MPC-based secure training allows clients to share their data with multiple mutually distrustful servers, enabling them to collaboratively train a model without revealing specific data details. The trained model can then be stored in a secret-shared format to enable secure inference. The client shares its data with servers using secret sharing, and the servers execute the model on the confidential input to obtain the final output.



When choosing an MPC protocol for Secure ML, two key factors are the threat model setting and scalability. The most robust threat model setting is the dishonest majority, where an adversary can maliciously corrupt a majority of participating MPC parties and manipulate their behavior. It is assumed that in this setting, the adversary can't access honest parties' inputs, where such honest parties are only a minority. This setting is highly robust and doesn't rely on any other trust assumptions regarding non-collusion between the parties or usage of secure hardware. Scalability is another essential factor, as protocols should easily accommodate a growing number of participants without hampering the protocol's efficiency or necessitating major protocol modifications.
 
 \paragraph{Our setting:} In this work, we study the \textit{$n$-party actively secure dishonest majority} setting, which ensures security against an adversary that can maliciously corrupt a subset $t<n$ of parties.  This setting has been examined using authenticated garbling-based approaches~\cite{CCS:WanRanKat17b,CCS:YanWanZha20} which demands constant rounds of interaction. This setting has also been studied in the literature through the SPDZ line of works~\cite{C:DPSZ12,spdz2k,overdrive}, which are faster in practice but require linear (in the multiplicative depth of the circuit) rounds of interaction. In this work, we introduce \name{}, an optimization that stays within the SPDZ framework and is compatible with all the existing works that use SPDZ. 


 \paragraph{SPDZ framework:} Within the SPDZ framework, parties parse the desired function as an arithmetic circuit comprising of addition and multiplication gates. The protocol follows the offline-online paradigm, where the online phase aims for rapid execution, benefiting from the bulk of computation taking place in the offline phase. More precisely, in the offline phase, parties generate \textit{correlated data} for each multiplication gate and input gate. During the online phase, each party secret shares its inputs among all parties and uses the offline \textit{correlated data} to create Message Authentication Code (MAC) tags on their input shares. These secret shares and tags possess additive homomorphic properties. Consequently, computing an addition gate is straightforward since parties can locally add their shares and tags. For multiplication gates, parties employ the \textit{correlated data} to multiply the shares(of the secret)  and MAC tags. In the final output phase, parties verify the output against the output tag to detect any malicious behavior. 
 
 \paragraph{Motivation:} The computation of  
 MAC tags 
 for multiplication gates 
 introduces significant overhead during the online phase. Deep neural networks (DNNs) rely extensively on convolutions and fully connected layers, which are expensive to perform.  Empirical observations in the state-of-the-art $\SPDZtwok$ reveal that tag computation can account for $30\%$ of the online runtime. To illustrate, multiplying two matrices with dimensions $\sizem \times \sizen$ and $\sizen \times \sizeo$ in $\SPDZtwok$ requires $\Order(\sizem \cdot \sizen \cdot \sizeo)$ local multiplications for the tag computation. This study aims to investigate the feasibility of reducing this cubic computation overhead to quadratic. 
 Further, computation complexity directly impacts the power usage of a system. 
 Thus, reducing computation would directly help reduce the power consumption of secure ML algorithms, where the latter is a growing concern~\cite{semiengineeringPowerConsumption}. Optimizing the power usage allows attaining substantial savings in monetary cost, promotes scalability, and plays an important role in reducing our carbon footprint.

\subsection{Our contributions}

This paper introduces an improved method for tag computation when performing matrix multiplications for actively secure dishonest majority MPC. 

\paragraph{Optimized tag generation:} We introduce \name{}, an optimization within the $\SPDZtwok$ framework that efficiently generates tags when performing matrix multiplications. In contrast to the existing protocol \cite{spdz2k} for matrix multiplication, which involves replicating the computationally expensive local multiplications of secret shared values on the tags, the \name{} algorithm utilizes alternative lightweight techniques for tag generation. Specifically, when multiplying two matrices with dimensions $\sizem \times \sizen$ and $\sizen \times \sizeo$, the tag computation in \name{} requires only quadratic computation, i.e. $\Order(\sizem\cdot \sizen + \sizem\cdot \sizeo + \sizen\cdot\sizeo)$ local multiplications, whereas previous protocols required cubic computation, leading to concrete asymptotic improvements in the online phase. The offline phase of \name{} remains the same as $\SPDZtwok$ without any modifications. Thus, \name{} allows to improve the computational efficiency of matrix multiplication without inflating the communication of the online phase. Moreover, \name{} is also compatible with prime order fields and can be effortlessly applied in the original SPDZ framework \cite{C:DPSZ12}.


\paragraph{Experimental results:} 
Our improvements in asymptotic performance directly manifest as demonstrated in our extensive experiments. For this we have implemented \name{} using the CrypTen \cite{crypten2020} library. Our implementations of \name{} exploit GPUs to accelerate the online phase by leveraging Crypten's GPU support. To analyze the performance of \name{}, we evaluate \name{} on different ML models and compare it against $\SPDZtwok$. We elaborate on this below:
\begin{itemize}
    \item[$\circ$] {\em ML models considered: } We consider three popular DNN models: 1) VGG16, 2) ResNet, and 3) Transformer, and benchmark the training as well as inference phases. 

    \item[$\circ$] {\em Tag computation speedup:} As opposed to $\SPDZtwok$, where the tag computation takes up $30\%$ of the online runtime, it is only $2\%$ when using \name{}. This reduction in tag computation time results from \name{}'s lightweight tag generation, which reduces tag computation by up to $23\times$.
    

    \item[$\circ$] {\em Online computation speedup:} The reduction in tag computation time brings in savings of up to $1.47\times$ in the overall online runtime in comparison to $\SPDZtwok$.  

    \item[$\circ$] {\em Power usage:} \name{} reduces the power consumption of the $\SPDZtwok$ framework up to $36\%$. 

    \item[$\circ$] {\em Scalability:}  We also observe that our overall speedup over $\SPDZtwok$ increases when we increase the interconnection bandwidth between parties or increase the ring size from 32 bits to 64 bits. Furthermore, our tag-computation speedup over $\SPDZtwok$ remains almost the same for an increasing number of parties and does not decrease, thus showcasing the clear scalability of \name{}.
\end{itemize}

\subsection{Related work}


SPDZ~\cite{C:DPSZ12}, MASCOT~\cite{mascot}, and Overdrive~\cite{overdrive} are actively secure dishonest majority MPC over finite fields. The work of \cite{ACNS:BenNieOmr19} also proposed improvements by considering function-dependent preprocessing phases. \name{} is compatible with all these works by replacing ring-based sub-protocols in \name{} with finite field-based sub-protocols. 
The work of \cite{spdk2kml} proposed optimizations for equality testing, comparison, and truncation that work over the ring of integers modulo $2^k$. The work of \cite{C:EGKRS20} proposed MPC optimizations for circuits that involve both arithmetic and boolean operations. 

There is a recent line of works \cite{EscuderoGPSW23,C:GoyPolSon22} in the weaker dishonest majority setting where a constant fraction $\epsilon$ (where $\tfrac{1}{2} < \epsilon < 1$) of the parties are corrupted by the adversary. In their setting, they obtain efficient protocols based on packed secret sharing. Our protocol considers the stronger and more robust setting where the adversary can corrupt up to all but one party. 

A plethora of works focus on optimizations for the online phase~\cite{reluMPC1, reluMPC2, reluMPC3, wangCharacIspass, crypten2020, mpc-pipe}, but those works mostly focus on the passive secure 2PC protocols whereas \name{} is designed specifically for actively secure dishonest majority MPC. The work of ~\cite{mpcformer} proposes a system-level optimization of Softmax specifically for MPC, but it could be applicable for actively secure protocols as well.

There are privacy-preserving ML protocols in the two-party \cite{aby2,PETS:GuptaKCG22,EPRINT:GuptaJMCGPS23, EPRINT:JawalkarGBCGS23} and a small number of parties \cite{aby3,NDSS:KotiPRS22,NDSS:PatraS20,USENIX:KPPS21} setting. 
The recent works of~\cite{PETS:GuptaKCG22,EPRINT:GuptaJMCGPS23, EPRINT:JawalkarGBCGS23} proposed customized secure training and inference protocols in the two-party setting using functional secret sharing based on GPUs. We focus on the generalized $n$-party setting.

The works of \cite{USENIX:CGOS22,USENIX:LMSP21} consider the case where a server holds the training model and a client wants to perform a secure inference using its secret data. We note that \name{} (or even the SPDZ framework) considers a different scenario where none of the participating parties have access to the entire model or the data and hence our threat model is fundamentally different and likely to be more practical in industrial deployments. 

Some works focus on secure hardware~\cite{tramer2019slalom, origami,darkNight, fedvault} or memory access patterns~\cite{original-oram,path-oram,circuitORAM,pageoram,laoram}. Those works fall outside the scope of MPC.

\subsection{Paper organization}
In Section~\ref{sec:prelim}, we present our notations and discuss the $\SPDZtwok$ framework. In Section~\ref{sec:compacttag}, we provide an overview of \name{} and the formal protocol details. Finally, we provide our experimental results in Section~\ref{sec:eval}. Appendix~\ref{sec:security} provides the security analysis of \name{}.
\section{Preliminaries}
\label{sec:prelim}
We provide the  notations and the necessary background in secret-sharing and $\SPDZtwok$ in this section.

\subsection{Security model}
We design protocols in the $n$-party dishonest majority setting. Let $\Partyset = \{\party{1}, \party{2}, \ldots, \party{n}\}$ denote the set of $n$ parties involved in the computation. These parties are connected via pairwise private and authenticated channels. The parties also have access to a broadcast channel. We assume the presence of a probabilistic polynomial time malicious adversary $\Adv$ that can corrupt up to $t<n$ parties in $\Partyset$. Our protocols are secure in the standard real/ideal-world simulation paradigm \cite{JC:Canetti00}.

\subsection{Notation}
We use the following notations in the paper. 
$\TextRing$ denotes the ring of integers $\{0, 1, \ldots, 2^{k}-1\}$ with addition and multiplication operations performed modulo $2^k$.
Boldfont capital letters such as $\mM$ denote a matrix whereas boldfont small letters denote a vector such as $\vv$. $\mM \in \TextRing^{\sizem \cross \sizen}$ denotes a matrix of dimension $\sizem \cross \sizen$ where each element $\mM_{pq}$ in $\mM$, for $p \in \{1, \ldots, \sizem\}, q \in \{1, \ldots, \sizen\}$, belongs to the ring $\TextRing$. 
$\mX \mmul \mY$ denotes a matrix multiplication operation between matrices $\mX \in \TextRing^{\sizem \cross \sizen}$ and $\mY \in \TextRing^{\sizen \cross \sizeo}$.
$\mX \ewmul \mY$ denotes element-wise multiplication operation between elements of the matrices $\mX \in \TextRing^{\sizem \cross \sizen}$ and $\mY \in \TextRing^{\sizem \cross \sizen}$.
$\mX + \mY$ denotes element-wise addition operation between elements of the matrices $\mX$ and $\mY$.
$\csec$ denotes the computation security parameter. 
$\compact{\mM}$ denotes that the matrix $\mM \in \TextRing^{\sizem \cross \sizen}$ has been compressed along one dimension, say along the columns (by linearly combining all the columns under random linear combiners), to obtain $\compact{\mM} \in \TextRing^{\sizem \cross 1}$. 
We use $\valx\modeq{k} \valy$ to denote $\valx=\valy \mod 2^k$.
We use the notation $\as{\valv}{}$ to denote that a value $\valv \in \TextRing$ is additively shared, i.e., there exist values $\as{\valv}{i} \in \TextRing$ for $i \in \{1, \ldots, n\}$ such that $\valv \modeq{k} \sum_i \as{\valv}{i}$.

\subsection{Commitment scheme}
\label{prelim:commitment}
Let $\Commit(x)$ denote the commitment of a value $x$ in the Universally-Composable(UC) model \cite{FOCS:Canetti01}. The commitment scheme $\Commit(x)$ possesses two properties; {\em hiding} and {\em binding}. The former ensures privacy of the value $x$ given just its commitment $\Commit(x)$, while the latter prevents a corrupt party from opening the commitment to a different value $x' \neq x$. 
 In addition, UC-secure commitments \cite{AC:CanSarWan20,AC:CanSarWan22} require a simulator (for a corrupt committer) to extract the message committed by a corrupt committer. Also, it enables a simulator (for an honest committer) to commit to 0 and later open it to any valid message by using the trapdoor. This is abstracted via the ideal functionality $\FCOMM$ for non-interactive commitments in \cite{AC:CanSarWan22}.
A practical realization of  a commitment scheme can be via a hash function $\Hash(\cdot)$ given below, whose security can be proven in the programmable random-oracle model (ROM)---for  $(c, o) =  (\Hash(x||r), \allowbreak x||r) = \Commit (x; r)$.

\subsection{Authenticated secret-sharing semantics}
\label{prelim:sharingsemantics}
We design our protocols using the authenticated additive sharing semantics described in \cite{spdz2k}. 
Here, not only  each secret is additively shared among the parties in $\Partyset$, but there also there exists a message authentication code (MAC) (also referred to as a \textit{tag}) on each secret, generated under a global MAC key. The tag and the global MAC key are also additively shared among the parties in $\Partyset$. We will elaborate on this next. 

Let $s$ be the statistical security parameter. 
Let $\mkey{} \in \KeyRing$ denote a global key that is additively shared among the parties, i.e., $\party{i} \in \Partyset$ holds $\as{\mkey{}}{i} \in \KeyRing$ such that $\mkey{}{} \modeq{s} \sum_i \as{\mkey{}}{i}$. 
Let $\valv \in \TextRing$ denote the secret value that has to be authenticated shared. 
Let $\as{\valv}{i} \in \CipherRing$ denote the additive shares of $\valv \in \TextRing$ in a larger ring $\CipherRing$, i.e. $\lsum{\valv} \modeq{k+s} \sum_i \as{\valv}{i}$ and $\valv \modeq{k} \lsum{\valv}$.
Let $\mac{\valv} \modeq{k+s} \mkey{} \cdot \lsum{\valv}$ denote the tag where $\mac{\valv}$ is also $\as{\cdot}{}$-shared over $\CipherRing$ among parties in $\Partyset$. 
We say that $\valv$ is authenticated secret shared or $\shr{\cdot}$-shared if $\party{i} \in \Partyset$ holds $\as{\valv}{i} \in \CipherRing, \as{\mac{\valv}}{i} \in \MacRing, \as{\mkey{}}{i} \in \KeyRing$. We thus denote $\shr{\valv}$ as the tuple $\left( \as{\valv}{}, \as{\mac{\valv}{}}{}, \asmac{} \right)$.
Observe that 
\begin{align}
    \sum_{i=1}^{n} \as{\mac{\valv}}{i} \modeq{k+s} (\sum_{i=1}^{n} \as{\valv}{i}) \cdot (\sum_{i=1}^{n} \as{\mkey{}}{i})
\end{align}

Notice that $\shr{\cdot}$-sharing scheme satisfies the linearity property, i.e., given constants $c_1, \ldots, c_m \in \CipherRing$ and $\shr{\valx_1}, \ldots, \shr{\valx_m}$, parties can compute $\shr{c_1\valx_1 + \ldots + c_t\valx_m}$ non-interactively by locally multiplying the constant with the additive shares of $\valx_i, \mac{\valx_i}{}$ for $i \in \{1, \ldots, m\}$ that they possess and adding up these resultant values. 

We say that a matrix $\mM \in \TextRing^{\sizem \cross \sizen}$ is $\shr{\cdot}$-shared if every element in $\mM$ is $\shr{\cdot}$-shared. We denote this as $\shr{\mM} = (\as{\mM}{}, \as{\mac{\mM}}{}, \as{\mkey{}}{})$ where $\mac{\mM}$ is the matrix comprising of tags on each element in $\mM$. 
Further, for constants $c_1, c_2 \in \TextRing$ and matrices $\mX, \mY \in \TextRing^{\sizem \cross \sizen}$,  we use the notation $c_1 \shr{\mX} + c_2 \shr{\mY}$ to denote the operation of multiplying each element in $\as{\mX}{}$ with $c_1$ and each element in $\as{\mY}{}$ with $c_2$, followed by element-wise addition of the $\as{\cdot}{}$-shared matrices (with the same operation performed on the tags as well), i.e., $c_1 \shr{\mX} + c_2 \shr{\mY} = (c_1 \as{\mX}{} + c_2 \as{\mY}{}, c_1 \as{\mac{\mX}}{} + c_2 \as{\mac{\mY}}{}, \asmac{})$.


\subsection{Overview of \SPDZtwok \cite{spdz2k}}
\label{subsec:spdzoverview}
We next give a brief overview of some primitives from \spdz{} that we rely on for our construction. We refer an interested reader to \cite{spdz2k} for further details.

\paragraph{Authenticated secret sharing:} 
To design an actively secure protocol, \spdz{} uses secure tags to detect any cheating that occurs during the protocol run. All inputs and the intermediate values that are computed as part of the function evaluation are authenticated secret shared, as described in section \ref{prelim:sharingsemantics}.

\paragraph{Authenticating an additively shared secret:} 
We rely on the ideal functionality $\funcn{\Mac}$ (see Appendix \ref{appendix:fmac}) from \cite{spdz2k}. During the initialization phase, it samples and distributes additive shares ($\as{\cdot}{}$-shares) of the global MAC key, $\mkey{}$. Post this initialization, $\funcn{\Mac}$ uses the global $\mkey{}$ to generate additive shares of a tag (MAC) on a secret that is additively shared among the parties. For this, it takes $\as{\cdot}{}$-shares of a secret $\valv \in \TextRing$ as input and generates $\as{\cdot}{}$-shares of its tag.

%


\paragraph{Sampling public constants:}
We rely on the coin-tossing functionality $\funcn{\Rand}$ (see Appendix \ref{appendix:frand} for details) from \cite{spdz2k}, which randomly samples an element from the ring and makes it available to all parties.

\paragraph{Reconstructing $\shr{\cdot}$-shared values and checking their tags:}
At a high level, to reconstruct a $\shr{\cdot}$-shared value $\valv \in \TextRing$, each party broadcasts its share $\as{\valv}{i}$. Then, everyone computes $\lsum{\valv} \modeq{k+s} \sum_i \as{\valv}{i}$ and checks if $\mkey{} \cdot \lsum{\valv}$ equals $\mac{\valv}$ modulo $2^{k+s}$ without revealing $\mkey{}$. Note that to ensure privacy, as discussed in \cite{spdz2k}, the higher order $s$ bits of $\lsum{\valv}$ are masked before reconstruction. This is because $\valv$ may be a linear combination of other $\shr{\cdot}$-shared values, and the higher order $s$ bits of $\valv$ may leak information about overflows that occurred during the linear combinations. For reconstructing $\valv$, parties perform:

\begin{myenumerate}
    \item Generate $\shr{\cdot}$-shares of a random $\valr \in \KeyRing$ using $\funcn{\Mac}$. 
    \item Compute $\shr{\valw} = \shr{\valv} + 2^k \cdot \shr{\valr}$.
    \item Each $\party{i} \in \Partyset$ broadcasts $\as{\valw}{i}$, and reconstructs $\valw \modeq{k+s} \sum_{i=1}^{n} \as{\valw}{i}$.
    \item Each $\party{i}$ 
    commits to $\as{\cs}{i} \modeq{k+s} \as{\mac{\valw}}{i} - \valw \cdot \asmac{i}$ using commitment randomness via $\FCOMM$ and broadcast the commitments.
    \item After obtaining all the broadcasted commitments, all parties open their commitments and check if $\cs \modeq{k+s} \sum_{i=1}^{n} \as{\cs}{i}$ equals 0.
    \item If $\cs \modeq{k+s} 0$, parties take $\valv \modeq{k} \valw$ as the output, otherwise abort.
\end{myenumerate}

As discussed in \cite{spdz2k}, the above approach has a failure probability of $2^{-s}$.

\paragraph{Batch reconstruction:} 
For ML workloads that may require reconstructing a huge number of secret values, it is useful to check for the correctness of a large batch of values at the end of the protocol in a single shot rather than checking for each value separately. For this, \spdz{} proposes a batch reconstruction procedure for reconstructing and checking $m$ $\shr{\cdot}$-shared values in a single shot using random linear combinations. Batch reconstruction allows parties to reduce the communication as well as round complexity in comparison to performing a single check for each of the $m$ shared values. The detailed procedure, $\procn{\batchcheck}$, 
is described in Figure~\ref{fig:batchcheck}. Since our constructions (discussed in Section \ref{sec:compacttag}) will operate on matrices 
and we compare our technique with this method, this procedure is described to take as input a matrix comprising of $\sizem \cdot \sizen$ elements (instead of considering $m$ values) and the goal is to reconstruct all the entries in $\mM$.


\begin{figure}[!t]
    \begin{framed}
    \centerline{\textbf{Procedure} $\procn{\batchcheck}$}
    \smallskip

    \begin{flushleft}
    
    \justify
    
    \textbf{\textsc{Input:}} $\shr{\mM} = (\as{\mM}{}, \as{\mac{\mM}}{}, \as{\mkey{}}{})$ for $\mM \in \TextRing^{\sizem \cross \sizen}$.  

    \justify 
    
    \textbf{\textsc{Output:}} $\mM$. 

    \justify 
    
    \algoHead{Preprocessing phase:}
    Generate $\shr{\cdot}$-shares (over $\CipherRing$) of an all-zero matrix, $\mR \in \TextRing^{\sizem \cross \sizen}$ using $\funcn{\Mac}$ (see Section \ref{subsec:spdzoverview}) such that each element $\mR_{pq} \in \mR$ for $p \in \{1, \ldots, \sizem\}, q \in \{1, \ldots, \sizen\}$ satisfies $\mR_{pq} \modeq{k} \sum_i \as{\mR_{pq}}{i} \modeq{k} 0$. 
    
    
    \justify 
    
    \algoHead{Online phase:} \\ 
    {\em // Open} \\
    To reconstruct the matrix $\mM$, parties do the following:
    \begin{myenumerate}
        \item Compute $\shr{\mW} = \shr{\mM} +  \shr{\mR}$.
        \item Parse $\shr{\mW} = (\as{\mW}{}, \as{\mac{\mW}}{}, \as{\mkey{}}{})$.
        \item Broadcast shares of $\as{\mW}{}$, and compute $\mW \modeq{k+s} \sum_{i=1}^{n} \as{\mW}{i}$.        
    \end{myenumerate}    

    \justify 
    
    {\em // Tag check}\\ 
    To check for the correctness of the reconstructed $\mM$, parties do the following. 
    \begin{myenumerate}
        \item Sample a public random matrix $\chi \in \KeyRing^{\sizem \cross \sizen}$ by invoking $\funcn{\Rand}$.
        \item Compute $\as{\CS}{} \modeq{k+s} \as{\mac{\mW}}{} \ewmul \chi - \asmac{i} \cdot (\mW \ewmul \chi)$.
        \item Compute and commit 
        to $\as{\cs}{} \modeq{k+s} \sum_{i=1}^{\sizem}\sum_{j=1}^{\sizen} \as{\CS_{i,j}}{}$ using the commitment randomness via $\FCOMM$ and broadcast the commitments. 
        \item After obtaining all the broadcasted commitments, all parties open their commitments and compute $\cs \modeq{k+s} \sum_{i}^{n} \as{\cs}{i}$. 
        \item Check if $\cs \modeq{k+s} 0$. If the check passes, output $\mM \modeq{k} \mW$, else abort.
    \end{myenumerate}

  \end{flushleft}
  \vspace{-5mm}
  \end{framed}
  \vspace{-5mm}
\caption{Procedure for reconstructing all elements in $\mM \in \CipherRing^{\sizem \cross \sizen}$ as a batch and checking their tags in \spdz.}
\label{fig:batchcheck}
\end{figure}

Note that even when dealing with a matrix of values, the $\procn{\batchcheck}$ procedure requires parties to commit to just one value ($\cs$) during the tag check process. This contrasts with the multiple values that would be needed if one were to use the single value reconstruction method mentioned earlier. In this way, $\procn{\batchcheck}$ reduces the communication complexity when reconstructing multiple values. Finally, as discussed in~\cite{spdz2k}, $\procn{\batchcheck}$ fails with a small probability which is upper bounded by $2^{-s+\log(s+1)}$.

\paragraph{Matrix multiplication:} 
Given $\shr{\cdot}$-shares of matrices $\mX\in  \TextRing^{\sizem \cross \sizen}$ and $\mY\in  \TextRing^{\sizen \cross \sizeo}$, consider the case of computing $\shr{\cdot}$-shares of $\mZ \in  \TextRing^{\sizem \cross \sizeo}$ where $\mZ = \mX \mmul \mY$. 
To achieve this, the literature~\cite{crypten2020, spdz2k, spdk2kml, mpc-pipe} has relied on using matrix Beaver triples that are $\shr{\cdot}$-shared. Let $(\mA, \mB, \mC)$ be such a triple where $\mA\in \TextRing^{\sizem \cross \sizen}$, $\mB \in \TextRing^{\sizen \cross \sizeo}$, $\mC\in \TextRing^{\sizem \cross \sizeo}$ and $\mC = \mA \mmul \mB$. The preprocessing phase of multiplication involves generating $\shr{\cdot}$-shares of $\mA, \mB, \mC$, which are used in the online phase to generate $\shr{\cdot}$-shares of $\mZ$. This involves reconstructing $\mE = (\mX-\mA)$ and $\mU = (\mY-\mB)$ and computing $\as{\mZ}{}$ and $\as{\mac{\mZ}}{}$ as follows. 
\begin{align}
    \label{eq:compute-z}
    \as{\mZ}{} &= \as{\mC}{} + \mE \mmul \as{\mB}{} + \as{\mA}{} \mmul \mU + \mE \mmul \mU \\ 
    \label{eq:compute-ztag}
    \as{\mac{\mZ}}{} &= \as{\mac{\mC}}{} + \mE \mmul \as{\mac{\mB}}{} + \as{\mac{\mA}}{} \mmul \mU + \asmac{} (\mE \mmul \mU)
\end{align}
The procedure $\procn{\MatMul}$ for matrix multiplication is described in Figure~\ref{fig:mpcmatmul}. 
%

\begin{figure}[!t]
    \begin{framed}
    \centerline{\textbf{Procedure} $\procn{\MatMul}$}
    \smallskip

    \begin{flushleft}

    \justify
    
    \textbf{\textsc{Input:}} $\shr{\mX} = (\as{\mX}{}, \as{\mac{\mX}}{}, \as{\mkey{}}{})$, $\shr{\mY} = (\as{\mY}{}, \as{\mac{\mY}}{}, \as{\mkey{}}{})$ for $\mX \in  \TextRing^{\sizem \cross \sizen}$, and $\mY \in  \TextRing^{\sizen \cross \sizeo}$.

    \justify 
    
    \textbf{\textsc{Output:}} $\shr{\mZ}$ where $\mZ = \mX \mmul \mY$.

    \justify 
    
    \algoHead{Preprocessing phase:}
    Generate $\shr{\cdot}$-shares of Beaver matrix triple $\mA \in  \TextRing^{\sizem \cross \sizen}$, $\mB \in  \TextRing^{\sizen \cross \sizeo}$, and $\mC \in  \TextRing^{\sizem \cross \sizeo}$ such that $\mC = \mA \mmul \mB$ using the technique described in~\cite{spdz2k,spdk2kml}. 
    
    \justify 
    
    \algoHead{Online phase:} 
    \begin{myenumerate}
        \item Parties execute open phase of $\procn{\batchcheck}$ (Figure \ref{fig:batchcheck}) to reconstruct $\mE = \mX-\mA$ and $\mU = \mY-\mB$.
        \item Parties locally compute 
        \begin{inneritemize}
            \item[$\circ$] $\as{\mZ}{} = \as{\mC}{} + \mE \mmul \as{\mB}{} + \as{\mA}{} \mmul \mU$, and
            \item[$\circ$] $\as{\mac{\mZ}}{} = \as{\mac{\mC}}{} + \mE \mmul \as{\mac{\mB}}{} + \as{\mac{\mA}}{} \mmul \mU$
        \end{inneritemize} 
        \item $\party{1}$ locally computes $\as{\mZ}{1} = \as{\mZ}{1} + \mE \mmul \mU$ and $\as{\mac{\mZ}}{1} = \as{\mac{\mZ}}{1} + \asmac{1} \cdot (\mE \mmul \mU)$
    \end{myenumerate}

    \justify 

    {\em // Verification } 
    \begin{myenumerate}
        \item Parties use the tag check phase from $\procn{\batchcheck}$ (Figure \ref{fig:batchcheck}) to check the validity of opened matrices $\mE$ and $\mU$. 
        \item If the previous step does not abort, output $\shr{\mZ} = (\as{\mZ}{}, \as{\mac{\mZ}}{}, \asmac{})$.
    \end{myenumerate}
  \end{flushleft}
   \vspace{-5mm} 
  \end{framed}
  \vspace{-5mm}
\caption{Procedure of matrix multiplication in \spdz.}
\label{fig:mpcmatmul}
\end{figure}

\subsection{Secure ML operations}
\label{sec:mlmpc}
We next discuss the primitives that are relied on to realize secure ML operations. Note that in ML, convolution and fully connected layers (matrix multiplications) are considered Linear layers. 

\paragraph{Convolution:} Convolution operations are also widely used in ML applications. Like fully connected layers, convolution layer is also a linear layer, and it can even be written in the form of matrix multiplication as discussed in~\cite{im2col,NDSS:KotiPRS22}. For example, consider convolution with a kernel $f\cross f$ over a $w\cross h$ input with $p\cross p$ padding
using $s\cross s$ stride, and the convolution has $i$ input channels and $o$ output channels. This convolution can be reduced to a matrix multiplication between $\sizem\cross \sizen$ and $\sizen\cross \sizeo$ matrices, where
\begin{align}
    \sizem = \frac{(w-f+2p)(h-f+2p)}{s^2}{},\  
    \sizen = i\cdot f\cdot f,\ 
    \sizeo = o
\end{align}


\paragraph{Fixed-point arithmetic:} Operands for ML workloads are real numbers. Fixed point arithmetic provides an efficient and accurate method for representing real numbers over the ring algebraic structure, thereby performing FPA operations on the encoded values over rings \cite{crypten2020, fis, wangCharacIspass, mpc-pipe, mpcformer,aby3, spdk2kml}. 
Here, a fractional value is represented as a $k$-bit number in signed 2's complement notation, where the most significant bit is the sign bit, $\valf$ least significant bits denote the fractional part (also known as precision bits). Operations are performed on the $k$-bit integer, treated as an element of $\TextRing$, modulo $2^{k}$. We refer an interested reader to~\cite{crypten2020} for further details. 
%

\paragraph{Truncation:}
\label{sec:truncation}
For ML workloads, fixed-point encoding is commonly used to represent real numbers. When using fixed-point representation ($\valf$ bits are used for precision) to perform multiplication, the trailing $\valf$ bits of the result of multiplication must be discarded for correctness. The most direct approach to drop the last $\valf$ bits given $\as{\valv}{}$, is to have each party $\party{i} \in \Partyset$ to locally divide $\as{\valv}{i}$ by $l=2^{\valf}$.  However, this method would produce errors if the sum of the shares $\as{\valv}{i}$ wraps around the ring $\CipherRing$, as described in \cite{crypten2020}. 
Elaborately, let $\theta_{\valv}$ denote the number of times the sum of $\as{\valv}{i}$ wraps around $\CipherRing$, i.e., $\valv = \sum_i \as{\valv}{i} -\theta_{\valv} 2^{k+s}$. It becomes evident that this approach would fail when $\theta_{\valv} \neq 0$, because each party $\party{i} \in \Partyset$ will locally truncate its share $\as{\valv}{i}$ which results in them having a sharing of $\valv' \modeq{k} \sum_i \frac{\as{\valv}{i}}{l}$. 
However, parties should instead hold a sharing of $\frac{\valv}{l} = \sum_i \frac{\as{\valv}{i}}{l} - \frac{\theta_{\valv}}{l}2^{k+s}$, and observe that $\valv' \neq \frac{\valv}{l}$. 
%
%
Hence, a different approach is required. We rely on the approach described in \cite{spdk2kml}, which is detailed in Figure~\ref{fig:macTruncation} (described with respect to truncating each element in a given matrix).
At a high level, to truncate $\valv \in \CipherRing$ which is additively shared, this approach relies on reconstructing $\valr - \valv$ for a random value $\valr \in \CipherRing$. This is followed by truncating $\valr - \valv$ in clear to obtain its truncated version, denoted as $\trunc{(\valr - \valv)}$. This value is then subtracted from the truncated version of $\valr$, denoted as $\trunc{\valr}$, to obtain the truncated $\valv$, denoted as $\trunc{\valv}$.

\begin{figure}[!t]
    \begin{framed}
    \centerline{\textbf{Procedure} $\procn{\Truncation}$}
    \smallskip

    \begin{flushleft}

    \justify
    
    \textbf{\textsc{Input:}} $\shr{\mM} = (\as{\mM}{}, \as{\mac{\mM}}{}, \as{\mkey{}}{})$ for $\mM \in  \TextRing^{\sizem \cross \sizen}$.

    \justify 
    
    \textbf{\textsc{Output:}} $\shr{\trunc{\mM}}$ where elements in matrix $\trunc{\mM}$ are the truncated versions of the elements in $\mM$, i.e. $\trunc{(\mM_{pq})} = \frac{\mM_{pq}}{2^{\valf}}$ for $p \in \{1, \ldots, \sizem\}, q \in \{1, \ldots, \sizen\}$. 

    \justify 
    
    \algoHead{Preprocessing phase:}
    Generate $\shr{\cdot}$-shares of a random matrix $\mR \in \TextRing^{\sizem \cross \sizen}$ as well as $\trunc{\mR} = \frac{\mR}{2^\valf}$ as per \cite{spdk2kml}.

    \justify 

    \algoHead{Online phase:} Parties do the following.
    \begin{myenumerate}
        \item Compute $\as{\mD}{} \modeq{} \as{\mR}{} - \as{\mM}{}$ and execute the open phase of $\procn{\batchcheck}$ (Figure \ref{fig:batchcheck}) to reconstruct $\mD$.
        \item Set $\as{\trunc{\mM}}{} = \as{\trunc{\mR}}{}  - \frac{\mD}{2^\valf}$ and $\as{\mac{\trunc{\mM}}}{} = \as{\mac{\trunc{\mR}}}{} - \asmac{} \cdot \frac{\mD}{2^\valf}$.
    \end{myenumerate}

    \justify 

    {\em // Verification}
    \begin{myenumerate}
        \item Parties use the tag check phase from $\procn{\batchcheck}$ (Figure \ref{fig:batchcheck}) to check the validity of opened value $\mD$. 
        \item If the previous step does not abort, output $\shr{\trunc{\mM}} = (\as{\trunc{\mM}}{}, \as{\mac{\trunc{\mM}}}{}, \asmac{})$. 
    \end{myenumerate}
  
  \end{flushleft}
  \vspace{-5mm}
  \end{framed}
  \vspace{-5mm}
\caption{Procedure for truncating the last $\valf$ bits each element in a $\shr{\cdot}$-shared matrix $\mM$ \cite{spdk2kml}.}
\label{fig:macTruncation}
\end{figure}

\paragraph{Non-linear Functions:} 
Besides linear functions, ML workloads also include many non-linear functions like ReLU, Softmax, and MaxPooling. For these, we rely on the procedures described in the work of \cite{spdk2kml}. We refer an interested reader to \cite{spdk2kml} for further details.

\section{\name{}}
\label{sec:compacttag}

In this section, we will describe our improved solution, \name{}, that facilitates efficient matrix multiplication protocol. Recall from Section \ref{subsec:spdzoverview} that when performing matrix multiplication to compute $\shr{\cdot}$-shares of $\mZ = \mX \mmul \mY$, where $\mX\in  \TextRing^{\sizem \cross \sizen}$ and $\mY\in  \TextRing^{\sizen \cross \sizeo}$, one needs to compute $\as{\cdot}{}$-shares of $\mZ$ as well as $\as{\cdot}{}$-shares of $\mac{\mZ}{}$, as described in Eq. \ref{eq:compute-z}, \ref{eq:compute-ztag}. Note that the tags are computed solely for the purpose of detecting malicious entities perturbing the actual computations and thus entail performing $O(\sizem \cdot \sizen \cdot \sizeo)$
local multiplications. For ML workloads that deal with huge matrices with dimensions in the order of thousands, this tag computation adds a significant amount of computational overhead to achieve active security. We design a solution that reduces the tag overhead without affecting the round or communication complexity of the other components in the evaluation. 

We first present a high-level insight of our work followed by details. Instead of computing $\as{\mac{\mZ}{}}{}$ as described in Eq. \ref{eq:compute-ztag}, we compute it optimistically by invoking the procedure $\procn{\optmac}$ which only requires performing $\sizem \cdot \sizeo$ local multiplications. However, the optimistic computation relies on publicly reconstructing a matrix, which can be potentially altered by an adversary, resulting in an incorrect $\as{\mac{\mZ}{}}{}$. To verify the correctness of this optimistically computed tag, we compute another {\em compact} tag on $\mZ$, denoted as $\compact{\mac{\mZ}{}}$. 
The optimistically computed tag is then verified for correctness using the compact tag. Note that computing the compact tag coupled with its verification requires $O(\sizem \cdot \sizen + \sizem \cdot \sizeo + \sizen \cdot \sizeo)$ local multiplications. In this way, the total number of local multiplications required to generate $\as{\cdot}{}$-shares of tag on $\mZ$ is reduced to $O(\sizem \cdot \sizen + \sizem \cdot \sizeo + \sizen \cdot \sizeo)$ as opposed to $O(\sizem \cdot \sizen \cdot \sizeo)$ in the standard approach. This reduction in the computation cost of tag generation is the key to improving performance in ML algorithms (Section \ref{sec:eval}).  

The rest of this section is structured as follows: we first discuss the protocol $\procn{\compactmac}$ that compresses a given matrix along one dimension in Section~\ref{subsec:compress}. This protocol will aid in generating the compact tag.  This is followed by describing the optimistic tag generation protocol $\procn{\optmac}$ in Section~\ref{subsec:optmac}. Finally, we discuss the improved matrix multiplication protocol, \name{}, in Section~\ref{subsec:matmul} and~\ref{subsec:matmultrunc}, which has a reduced computation overhead for tag generation.

\subsection{Compressing a matrix}
\label{sec:mainbody}
\label{subsec:compress}
We use procedure $\procn{\compactmac}$ to compress a large 2-dimensional matrix into a compact form by reducing it along one dimension. Specifically, $\procn{\compactmac}$ takes a 2-dimensional matrix, say $\mM \in \CipherRing^{\sizem \cross \sizen}$ as input and outputs a compressed 1-dimensional matrix, denoted as $\compact{\mM}\in \CipherRing^{\sizem \cross 1}$. 
It also takes a matrix of public constants $\chi \in \KeyRing^{\sizen \cross 1}$ as an input, which is known to all the parties in $\Partyset$. $\chi$ can be obtained by invoking $\funcn{\Rand}$ (see Section \ref{subsec:spdzoverview}). All parties then compute $\mM \mmul \chi$ to obtain a linear combination of the columns of $\mM$, which has the effect of compressing $\mM$ and reducing it along one dimension. The formal protocol appears in Figure \ref{fig:proccomp}. Note that the input matrix $\mM$ may be a publicly known matrix (matrix values known to all parties) or a secret-sharing of a matrix. For an input matrix of dimension $\sizem \cross \sizen$, $\procn{\compactmac}$ requires performing $\sizem \cdot \sizen$ local multiplications and reduces the matrix dimension by a factor of $\sizen$. This compression is the first step in reducing the tag computation complexity.

\begin{figure}[!htbp]
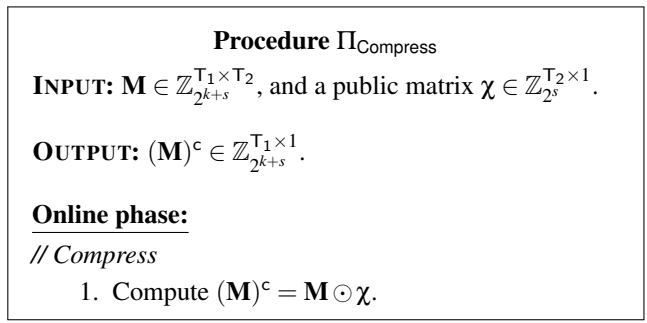

    \begin{framed}
    \centerline{\textbf{Procedure} $\procn{Compress}$}
    \smallskip

    \begin{flushleft}

    \justify
    
    \textbf{\textsc{Input:}} ${\mM}{} \in \CipherRing^{\sizem \cross \sizen}$, and a public matrix $\chi \in \KeyRing^{\sizen\cross 1}$.  

    \justify 
    
    \textbf{\textsc{Output:}} ${\compact{\mM}} \in \CipherRing^{\sizem \cross 1}$. 

    \justify 
    
    \algoHead{Online phase:} \\ 
    {\em // Compress} 
    \begin{myenumerate}
        \item Compute ${\compact{\mM}} = {\mM} \mmul \chi$.
    \end{myenumerate}

  \end{flushleft}
  \vspace{-5mm}
  \end{framed}
  \vspace{-5mm}
\caption{Procedure of compressing a matrix.}
\label{fig:proccomp}
\end{figure}

\subsection{Optimistic tag generation}
\label{sec:expand}
\label{subsec:optmac}
Given a $\as{\cdot}{}$-shared matrix $\mM \in \TextRing^{\sizem \cross \sizen}$, $\procn{\optmac}$ (Figure~\ref{fig:expand}) optimistically generates $\as{\cdot}{}$-shares of a tag on $\mM$. For this, $\shr{\cdot}$-shares of a random matrix $\mR \in \TextRing^{\sizem \cross \sizen}$ are generated in the preprocessing phase, which is used to optimistically generate $\as{\cdot}{}$-shares of tag on $\mM$ in the online phase. The approach is for parties to reconstruct $\mD=\mR - \mM$. Parties then use $\as{\mac{\mR}{}}{}$ and $\as{\mac{\mD}{}}{} = \asmac{} \cdot \mD$ to generate $\as{\mac{\mM}{}}{} = \as{\mac{\mR}}{} - \as{\mac{\mD}}{}$ using the linearity of $\as{\cdot}{}$-sharing.
Note that $\procn{\optmac}$ does not provide security against malicious adversaries because the reconstruction of $\mD$ will not be verified at this point. Hence, we call this step \textit{optimistic} tag generation. An active adversary can introduce errors when reconstructing $\mD$. The correctness of $\mac{\mM}$ depends on the correctness of $\mD$. 
%
%
Thus it is essential that the correctness of the tag generated via $\procn{\optmac}$ must be verified by checking the correctness of the reconstructed $\mD$. This verification process is described next. 

\begin{figure}[!t]
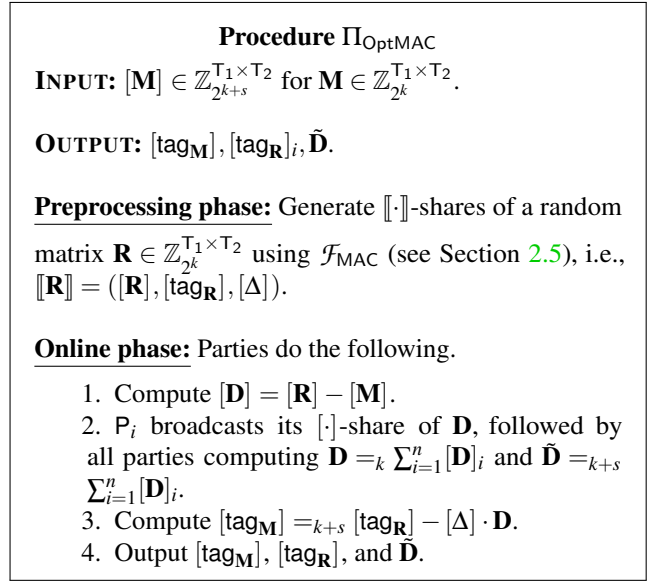

    \begin{framed}
    \centerline{\textbf{Procedure} $\procn{\optmac}$}
    \smallskip

    \begin{flushleft}

    \justify
    
    \textbf{\textsc{Input:}} $\as{\mM}{} \in \CipherRing^{\sizem \cross \sizen}$ for $\mM \in \TextRing^{\sizem \cross \sizen}$.  

    \justify 
    
    \textbf{\textsc{Output:}} $\as{\mac{\mM}{}}{}, \as{\mac{\mR}}{i}, \lsum{\mD}$.

    \justify 
    
    \algoHead{Preprocessing phase:} 
    Generate $\shr{\cdot}$-shares of a random matrix $\mR \in \TextRing^{\sizem \cross \sizen}$ using $\funcn{\Mac}$ (see Section \ref{subsec:spdzoverview}), i.e., $\shr{\mR} = (\as{\mR}{}, \as{\mac{\mR}}{}, \as{\mkey{}}{})$.

    \justify 

    \algoHead{Online phase:}  Parties do the following.               
    \begin{myenumerate}
        \item Compute $\as{\mD}{} = \as{\mR}{} - \as{\mM}{}$.
        \item  
        $\party{i}$ broadcasts its $\as{\cdot}{}$-share of $\mD$, followed by all parties computing $\mD \modeq{k} \sum_{i=1}^{n} \as{\mD}{i}$ and $\lsum{\mD} \modeq{k+s} \sum_{i=1}^{n} \as{\mD}{i}$. 
        \label{step:expand4}
        \item Compute $\as{\mac{\mM}}{} \modeq{k+s} \as{\mac{\mR}}{} - \asmac{} \cdot \mD$.
        \item Output $\as{\mac{\mM}}{}$, $\as{\mac{\mR}}{}$, and $\lsum{\mD}$.
    \end{myenumerate}

  \end{flushleft}
  \vspace{-5mm}
  \end{framed}
  \vspace{-5mm}
\caption{Procedure for optimistically generating $\as{\cdot}{}$-shares of tag on matrix $\mM$.}
\label{fig:expand}
\end{figure}

Observe that $\procn{\optmac}$ requires generating a $\shr{\cdot}$-shared matrix $\mR$ in the preprocessing phase via $\funcn{\Mac}$. In the online phase, it 
requires broadcasting $\as{\cdot}{}$-shares of $\mD$ and incurs a communication cost of $\sizem \cdot \sizen$ elements. 
One should not be misled to think this is an additional overhead incurred by \name{} when using $\procn{\optmac}$. 
In ML workloads (which is the focus of this work) that operate on fixed-point arithmetic (FPA), every multiplication is followed by a truncation operation, which involves.  
Section \ref{subsec:matmultrunc} demonstrates how $\procn{\optmac}$ can be merged with the $\procn{truncation}$, thereby nullifying those overhead.

\subsection{Matrix multiplication with $\procn{\optmac}$}
\label{subsec:matmul}
We next discuss how to reduce the computational overhead due to tag generation using $\procn{\compactmac}$ and $\procn{\optmac}$ when performing matrix multiplication. 
Consider the online computation of $\mZ = \mX \mmul \mY$ where $\mX \in \TextRing^{\sizem \cross \sizen}$ and $\mY \in \TextRing^{\sizen \cross \sizeo}$. As in the standard approach discussed in Section \ref{subsec:spdzoverview} (Figure \ref{fig:mpcmatmul}), parties begin by reconstructing $\mE = \mX - \mA$ and $\mU = \mY - \mB$, followed by computing $\as{\mZ}{}$. Next, instead of computing $\as{\mac{\mZ}}{}$ as per Eq. \ref{eq:compute-ztag}, they invoke $\procn{\optmac}$ to optimistically generate $\as{\mac{\mZ}}{}$, which internally involves an optimistic reconstruction of $\mD = \mR - \mZ$ where $\mR \in \TextRing^{\sizem \cross \sizeo}$. 
To verify the correctness of $\as{\mac{\mZ}}{}$, one must ensure the correct reconstruction of $\mD$. 
Moreover, it is also required to ensure that we do not inflate the computation cost in the process. 
To reduce the computation while facilitating the verification of $\mD$, parties generate a {\em compact} tag on $\mZ$ via a modified version of Eq. \ref{eq:compute-ztag} where the compressed versions of $\mac{\mC}, \mac{\mB}$ and $\mU$ are used. Elaborately, parties invoke $\procn{\compactmac}$ (Figure \ref{fig:proccomp}) on $\mac{\mC}, \mac{\mB}$ and $\mU$ to generate their compressed versions $\compact{\mac{\mC}}, \compact{\mac{\mB}}$ and $\compact{\mU}$, respectively, under the common set of linear combiners, say $\chi \in \KeyRing^{\sizeo \cross 1}$. Parties use those compact operands to compute $\compact{\mac{\mZ}}$ as 
\begin{align}
    \label{eq:compute-zcomptag}
    \as{\mac{\compact{\mZ}}}{} = \as{\mac{\compact{\mC}}}{} + \mE \mmul \as{\mac{\compact{\mB}}}{} + \nonumber \\ 
    \as{\mac{\mA}}{}\mmul \compact{\mU} + \mE \mmul \compact{\mU}
\end{align}
Observe that this computation of $\as{\compact{\mac{\mZ}}}{}$ only requires $O(\sizem \cdot \sizen)$ multiplications which is much less than $O(\sizem \cdot \sizen \cdot \sizeo)$. 
Having generated $\as{\compact{\mac{\mZ}}}{}$, the next step is to verify the correctness of the reconstructed $\mD$.
Parties first generate $\as{\compact{\mac{\mD}}}{}$ 
%
as $\as{\compact{\mac{\mD}}}{} = \as{\compact{\mac{\mR}}}{} - \as{\compact{\mac{\mZ}}}{}$ where $\as{\compact{\mac{\mR}}}{}$ is also generated using $\procn{\compactmac}$ under the same linear combiners. This is followed by steps similar to the ones described in {\em tag check} phase of $\procn{\batchcheck}$ (Figure \ref{fig:batchcheck}) to verify the correctness of $\compact{\mD} = \mD \mmul \chi$ under the tag $\compact{\mac{\mD}}$. 
This verification step is also computationally lightweight, involving $\Order{(\sizem\cdot\sizeo)}$ local multiplications. 
As shown in Appendix~\ref{sec:security}, this verification has the same failure probability as {\em tag check } phase $\procn{\batchcheck}$. Further, note that this verification can be performed in a single shot toward the end of the computation to verify the correctness of several matrix multiplications. This allows us to amortize the cost due to the verification.    

To summarise the working of matrix multiplication, parties begin by computing $\as{\mZ}{}$ using Beaver matrix triples. They then invoke $\procn{\optmac}$ to obtain an optimistic tag on $\as{{\mZ}}{}$. Next, parties compress (via $\procn{\compactmac}$) the necessary matrices using the same public linear combiners and use the compressed operands to compute a compact tag for $\as{\mD}{}$ via lightweight computations. This compact tag is then used to verify the correctness of the optimistically generated tag via a lightweight verification step.

\paragraph{The need for generating $\mac{\mZ}$ despite availability of $\compact{\mac{\mZ}}$: }
Observe that when computing $\mZ = \mX \mmul \mY$, the availability of the compact tag on $\mZ$ ($\mac{\compact{\mZ}}$) suffices to verify the correctness of $\mZ$ while achieving the goal of reducing the online computation cost. A natural question that may arise then is about the need to generate $\mac{\mZ}$ as well. We would like to note that this is because while the compact tag on $\mZ$ ($\mac{\compact{\mZ}}$) suffices to verify the correctness of $\mZ$, it limits the computation to only one layer of matrix multiplication. We explain this with the help of an example. Consider the scenario where two sequential matrix multiplications are required to be performed, where the output of the first matrix multiplication, say $\mX$, is fed as input to the next matrix multiplication, say $\mZ = \mX \mmul \mY$. Consider the case where only the compact tag on $\mX$ ($\compact{\mac{\mX}}$) is generated as part of the first matrix multiplication. Recall that during the computation of $\as{\mZ}{}$, one needs to robustly open the value $\mE = \mX - \mA$, where $\mA$ is part of the matrix Beaver triple $(\mA, \mB, \mC = \mA \mmul \mB)$. Verifying the correctness of each element in $\mE$ requires the knowledge of a tag on each element in $\mX$, i.e., $\mac{\mX}$, and the compact tag on $\mX$ would not suffice. Moreover, since the random linear combiners used to generate the compact tag on $\mX$ are publicly available at this point, using the compact tag on $\mX$ to verify the correctness of the linear combination of elements in $\mE$ (using the same linear combiners that were used when computing the compact tag on $\mX$) will allow an adversary to introduce errors in $\mE$. Thus, to enable performing matrix multiplications consecutively, where the output of one matrix multiplication is fed as input to another (which is a key requirement for DNN models), one needs to maintain the invariant that the inputs to the matrix multiplication are $\shr{\cdot}$-shared, i.e. the $\as{\cdot}{}$-shares of the tag on each element of the input (matrix) are available. Hence, to maintain the invariant, we need to also generate tags on each element of $\mZ$. We generate this via an optimistic approach in $\procn{\optmac}$. The correctness of this optimistically generated tag on $\mZ$ is then verified using the compact tag on $\mZ$.


\subsection{Optimized matrix multiplication with \name{} for DNN}
\label{subsec:matmultrunc}
In the above section, we presented our matrix multiplication protocol that uses the $\procn{\optmac}$. However, this procedure requires generating $\shr{\cdot}$-shares of a random matrix $\mR \in \TextRing^{\sizem \cross \sizen}$ in the preprocessing phase and broadcasting shares of $\mD$ in the online phase. These steps require additional communication costs compared to $\SPDZtwok$. However, in ML algorithms, every multiplication (including matrix multiplication) is followed by a truncation operation when operating on fixed-point arithmetic. Since performing truncation incurs a communication cost equivalent to the cost of $\procn{\optmac}$, we instead design an optimized protocol $\procn{\optmactrunc}$ which can achieve truncation and optimistic tag generation in a single unified approach,
while nullifying the communication cost due to $\procn{\optmac}$.  


$\procn{\optmac}$ requires generating $\shr{\cdot}$-shares of a random matrix $\mR \in \TextRing^{\sizem \cross \sizen}$ in the preprocessing phase, which is also the case in $\procn{\Truncation}$ (see Section \ref{sec:mlmpc}). Further, the online phase of $\procn{\optmac}$ involves broadcasting and reconstructing a matrix $\mD$, similar to as done in $\procn{\Truncation}$. Finally, both $\procn{\optmac}$ and $\procn{\Truncation}$ involve computing a tag on $\mM$ and $\trunc{\mM}$, respectively, as $\as{\mac{\mM}}{} = \as{\mac{\mR}}{} - \asmac{} \cdot \mD$ and $\as{\mac{\trunc{\mM}}}{} = \as{\mac{\trunc{\mR}}}{} - \asmac{} \cdot \frac{\mD}{2^\valf}$. 
Given this similarity, we design $\procn{\optmactrunc}$, which generates an optimistic tag and performs the steps required for truncation in a single shot. In doing so, we are able to retain the computational improvements brought in by optimistic tag computation without inflating the communication cost.

\subsubsection{$\procn{\optmactrunc}$}
$\procn{\optmactrunc}$ takes a shared matrix $\as{\mM}{}$ as input, and outputs $\as{\trunc{\mM}}{}$ (where $\trunc{\mM}$ denotes the truncated version of  matrix $\mM$, i.e., each element in $\mM$ is truncated), as well as $\as{\cdot}{}$-shares of an optimistic tag on $\trunc{\mM}$. 
For this, in the preprocessing phase, it proceeds similarly to $\procn{\Truncation}$ where $\shr{\cdot}$-shares of a random matrix $\mR \in \TextRing^{\sizem \cross \sizen}$ as well as $\trunc{\mR} = \frac{\mR}{2^\valf}$ are generated. In the online phase, it proceeds similarly to $\procn{\optmac}$ except that it computes $\as{\cdot}{}$-shares of the truncated version of $\mM$ as $\as{\trunc{\mM}}{} = \as{\trunc{\mR}}{} - \frac{\mD}{2^\valf{}}$, and the optimistic tag as $\as{\mac{\trunc{\mM}}}{} = \as{\mac{\trunc{\mR}}}{} - \asmac{} \cdot \frac{\mD}{2^\valf{}}$ (similar to the steps of $\procn{\Truncation}$). 
Note as in the case of $\procn{\optmac}$, the correctness of the reconstructed $\mD$ is not verified. Thus, $\procn{\optmactrunc}$ is also only passively secure, and it is the obligation of the matrix multiplication protocol to check the validity of the reconstructed $\mD$. The formal protocol for $\procn{\optmactrunc}$ appears in Figure~\ref{fig:expandtrunc}. 
Note that $\procn{\optmactrunc}$ has the same communication complexity as $\procn{\Truncation}$ while nullifying the added communication cost of $\procn{\optmac}$.


\begin{figure}[!t]
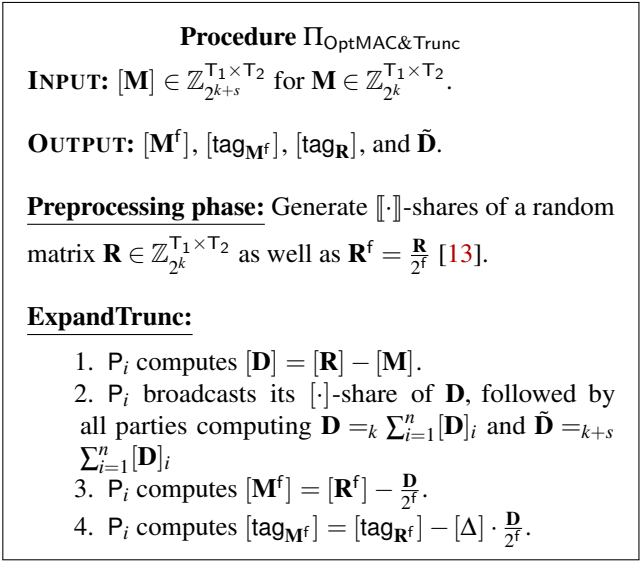

    \begin{framed}
    \centerline{\textbf{Procedure} $\procn{\optmactrunc}$}
    \smallskip

    \begin{flushleft}
    \textbf{\textsc{Input:}} $\as{\mM}{}\in \CipherRing^{\sizem \cross \sizen}$ for $\mM \in  \TextRing^{\sizem \cross \sizen}$.

    \justify 
    
    \textbf{\textsc{Output:}} $\as{\trunc{\mM}}{}$, $\as{\mac{\trunc{\mM}}}{}$, $\as{\mac{\mR}}{}$, and $\lsum{\mD}$. 

    \justify 
    
    \algoHead{Preprocessing phase:}
    Generate $\shr{\cdot}$-shares of a random matrix $\mR \in \TextRing^{\sizem \cross \sizen}$ as well as $\trunc{\mR} = \frac{\mR}{2^\valf}$~\cite{spdk2kml}.

    \justify 
    
    \algoHead{ExpandTrunc:} 
    \begin{myenumerate}
        \item $\party{i}$ computes $\as{\mD}{} = \as{\mR}{} - \as{\mM}{}$.
        \label{step:exptrunc1}
        \item 
        $\party{i}$ broadcasts its $\as{\cdot}{}$-share of $\mD$, followed by all parties computing $\mD \modeq{k} \sum_{i=1}^{n} \as{\mD}{i}$ and $\lsum{\mD} \modeq{k+s} \sum_{i=1}^{n} \as{\mD}{i}$
        \label{step:exptrunc2}
        \item $\party{i}$ computes $\as{\trunc{\mM}}{} = \as{\trunc{\mR}}{} - \frac{\mD}{2^\valf{}}$.
        \label{step:exptrunc3}
        \item $\party{i}$ computes $\as{\mac{\trunc{\mM}}}{} = \as{\mac{\trunc{\mR}}}{} - \asmac{} \cdot \frac{\mD}{2^\valf{}}$.
        \label{step:exptrunc4}
    \end{myenumerate}

  \end{flushleft}
  \vspace{-5mm}
  \end{framed}
  \vspace{-5mm}
\caption{Procedure of optimistic tag generation coupled with truncation.}
\label{fig:expandtrunc}
\end{figure}

\subsubsection{Complete matrix multiplication protocol}
This section will present the complete matrix multiplication procedure using \name{}. Similar to the protocol outlined in Section~\ref{subsec:matmul}, the initial steps of $\procn{\compactmatmul}$ involve computing $\as{\mZ}{}$ using the revealed matrices of $\mE$ and $\mU$. Subsequently, the parties invoke $\procn{\optmactrunc}$ to truncate $\as{\mZ}{}$, producing $\as{\trunc{\mZ}}{}$ and generate an optimistic $\as{\mac{\trunc{\mZ}}}{}$. Following this, parties compress the pertinent operands to compute a compact tag, denoted as $\as{\mac{\compact{\mD}}}{}$, for the optimistically reconstructing $\mD$ within $\procn{\optmactrunc}$. During the verification phase, the MPC parties use $\as{\mac{\compact{\mD}}}{}$ to verify the correctness of $\mD$. We provide the formal protocol details in Figure \ref{fig:matmultrunc}, and our security proof can be found in Appendix~\ref{sec:security}.
\begin{figure}[!t]
    \begin{framed}
    \centerline{\textbf{Procedure} $\procn{\compactmatmul}$ }
    \smallskip

    \begin{flushleft}

    \textbf{\textsc{Input:}} $\shr{\mX} = (\as{\mX}{}, \as{\mac{\mX}}{}, \as{\mkey{}}{})$, $\shr{\mY} = (\as{\mY}{}, \as{\mac{\mY}}{}, \as{\mkey{}}{})$ for $\mX \in  \TextRing^{\sizem \cross \sizen}$, and $\mY \in  \TextRing^{\sizen \cross \sizeo}$. 
    \smallskip
    
    \textbf{\textsc{Output:}} $\shr{\trunc{\mZ}}$ where $\mZ = \mX \mmul \mY$ and $\trunc{\mZ}$ denotes that each element in $\mZ$ is truncated by $\valf$ bits.
    \smallskip

    \justify 
    
    \algoHead{Preprocessing phase:}
    \begin{myenumerate}
        \item Generate $\shr{\cdot}$-shares of Beaver matrix triple $\mA \in  \TextRing^{\sizem \cross \sizen}$, $\mB \in  \TextRing^{\sizen \cross \sizeo}$, and $\mC \in  \TextRing^{\sizem \cross \sizeo}$ such that $\mC = \mA \mmul \mB$ using the technique described in~\cite{spdz2k,spdk2kml}. 
        \item Execute the preprocessing phase of $\procn{\optmactrunc}$ to generate $\shr{\cdot}$-shares of a random matrix $\mR \in \TextRing^{\sizem \cross \sizen}$ as well as $\trunc{\mR} = \frac{\mR}{2^\valf}$.
    \end{myenumerate}
    
    \smallskip

    
    \algoHead{Online phase:} 
    \begin{myenumerate}
        \item Execute open phase of $\procn{\batchcheck}$ (Figure \ref{fig:batchcheck}) to reconstruct $\mE = \mX-\mA$ and $\mU = \mY-\mB$. \label{step:multrunc1}
        \item Locally compute $\as{\mZ}{} = \as{\mC}{} + \mE \mmul \as{\mB}{} + \as{\mA}{} \mmul \mU$. \label{step:multrunc2}
        \item Invoke $\procn{\optmactrunc}$ (Figure \ref{fig:expandtrunc}) on $\as{\mZ}{}$ to truncate $\mZ$ to obtain $\as{\trunc{\mZ}}{}$ as well as optimistically generate the MAC $\as{\mac{\trunc{\mZ}}}{}$. $\procn{\optmactrunc}$ also outputs $\as{\mac{\mR}}{}, \lsum{\mD}$ generated in the process.  \label{step:multrunc3}
        \item $\party{1}$ locally computes $\as{\trunc{\mZ}}{1} = \as{\trunc{\mZ}}{1} + \frac{\mE \mmul \mU}{2^{\valf}}$ and $\as{\mac{\trunc{\mZ}}}{1} = \as{\mac{\trunc{\mZ}}}{1} + \asmac{1} \cdot \frac{\mE \mmul \mU}{2^{\valf}}$. \label{step:multrunc4}
        \item Generate a public constant matrix $\chi \in \KeyRing^{\sizeo \cross 1}$ by invoking $\funcn{\Rand}$.  \label{step:multrunc5}
        \item Invoke $\procn{Compress}$ (Figure \ref{fig:proccomp}) to compress $\as{\mac{\mB}}{}$, $\as{\mac{\mC}}{}$, $\mU$, $\as{\mac{\mR}}{}$ and $\lsum{\mD}$ under the combiners $\chi$ to generate  $\as{\compact{\mac{\mB}}}{}$, $\as{\compact{\mac{\mC}}}{}$, $\compact{\mU}$, $\as{\compact{\mac{\mR}}}{}$ and $\compact{\lsum{\mD}}$, respectively. \label{step:multrunc6}
        \item Parties locally compute $\as{\compact{\mac{\mZ}}}{} = \as{\compact{\mac{\mC}}}{} + \mE \mmul \as{\compact{\mac{\mB}}}{} + \as{\mac{\mA}}{} \mmul \compact{\mU}$ and $\as{\compact{\mac{\mD}}}{} = \as{\compact{\mac{\mR}}}{} - \as{\compact{\mac{\mZ}}}{}$. 
        

    \end{myenumerate}

    {\em // Verification } 
    \begin{myenumerate}
        \item Use the {\em tag check} phase from $\procn{\batchcheck}$ to verify the correctness of the reconstructed $\mE, \mU$.
        \item All parties sample a public random matrix $\hat{\chi} \in \KeyRing^{\sizem\cross1}$ via $\funcn{\Rand}$.
        \item Compute $\as{\CS}{} \modeq{k+s} \asmac{} \cdot {\compact{\lsum{\mD}}}$.
        \item Compute $\as{\CS}{} \modeq{k+2s} \as{\CS}{}\ewmul{\hat{\chi}}  -  {\as{\mac{\compact{\mD}}}{}}\ewmul {\hat{\chi}}$.
        \item Commit to $\as{\cs}{} \modeq{k+2s} \sum_{i=1}^{\sizem} \as{\CS_{i,1}}{}$ via $\FCOMM$. Broadcast commitments. Receive other commitments and decommit to $\as{\cs}{}$.
        \item If reconstructed $\cs \neq 0$ modulo $2^{k+2s}$, or the decommitments fail then abort.
        \item If the previous steps do not abort, output $\shr{\trunc{\mZ}} =(\as{\trunc{\mZ}}{}, \as{\mac{\trunc{\mZ}}}{}, \asmac{})$.
    \end{myenumerate}

  \end{flushleft}
  \vspace{-5mm}
  \end{framed}
  \vspace{-5mm}
\caption{Matrix multiplication with \name{}.}
\label{fig:matmultrunc}
\end{figure}


\subsubsection{Complexity analysis}
\label{sec:complexityanal}
\paragraph{Computation cost}
Without \name{}, the resulting tag $\as{\mac{\mZ}}{}$ is computed using Equation~\ref{eq:compute-ztag} whose computational complexity (number of local multiplications) is at most $3(\sizem \cdot \sizen \cdot \sizeo) + \sizem \cdot \sizeo$. 

When using \name{}, the resulting $\as{\mac{\mZ}}{}$ is generated using a computational lightweight $\procn{\optmactrunc}$. The computational complexity of $\procn{\optmactrunc}$ is $\sizem \cdot \sizeo$ because the size of matrix $\as{\mZ}{}$  is $\sizem \cross \sizeo$. 
However, because $\procn{\optmactrunc}$ is only passive secure, parties need to compute and verify $\as{\compact{\mac{\mD}}}{}$ (computed in $\procn{\compactmatmul}$, Figure \ref{fig:matmultrunc}) against $\lsum{\mD}$ (computed in $\procn{\optmactrunc}$, Figure \ref{fig:expandtrunc}). This requires computing a compact tag on $\mZ$ using Equation~\ref{eq:compute-zcomptag} and entails generating compressed versions of the matrices $\as{\mac{\mB}}{}$, $\as{\mac{\mC}}{}$, $\mU$, $\as{\mac{\mR}}{}$ and $\lsum{\mD}$. Compressing each of the matrices $\as{\mac{\mB}}{}$ and $\mU$ entails performing $\sizen \cdot \sizeo$ local multiplications since these matrices are of dimension $\sizen \cross \sizeo$. Similarly, compressing $\as{\mac{\mC}}{}$, $\as{\mac{\mR}}{}$ and $\lsum{\mD}$, each requires $\sizem \cdot \sizeo$ local multiplication since these matrices are of dimension $\sizem \cross \sizeo$. Finally, computing $\as{\compact{\mac{\mZ}}}{}$ via Equation~\ref{eq:compute-zcomptag} requires $3(\sizem \cdot \sizen) + \sizem$ local multiplications. 
In this way, the total number of local multiplications to be computed via \name{} is $4(\sizem \cdot \sizeo) + 2(\sizen \cdot \sizeo) + 3(\sizem \cdot \sizen) + \sizem$.

Note that a natural question that may arise is why compression of the matrices was not considered along both dimensions to further improve the computation complexity. While at first glance, this may appear to be true, this is not the case. This is because compressing along both the dimensions would additionally require compressing the $\sizem \cross \sizen$ dimension matrices $\as{\compact{\mac{\mC}}}{}$, $\mE$, $\as{\mac{\mA}}{}$ incurring additional $3(\sizem \cdot \sizen)$ local multiplications. While the computation of Equation~\ref{eq:compute-zcomptag} now requires only $4\sizen$ local multiplications as opposed to the previous $3(\sizem \cdot \sizen) + \sizem$ multiplications, observe that overall we still require $4\sizen - \sizem$ additional multiplications. Thus, the computation cost of compressing along both dimensions turns out to be higher than when compressed along a single dimension. Hence, we compress only along a single dimension. 


\paragraph{Communication cost} 
When using SPD$\nZ_{2^k}$ protocols, parties only need to evaluate Equation~\ref{eq:compute-ztag}, and no tag-related communication is induced. On the other hand, when using \name{}, $\procn{\optmac}$ induces one round of communication involving $\sizem \cdot \sizeo$ elements. However, we make the observation that we can merge $\procn{\optmac}$ and $\procn{\Truncation}$. Both these protocols require one round of communicating $\sizem \cdot \sizeo$ elements. By merging both these protocols, the resulting protocol $\procn{\optmactrunc}$ allows the computation of optimistic tag generation as well as truncation to be performed with a single round of communication involving $\sizem \cdot \sizeo$ elements, thereby avoiding the additional communication due to the optimistic generation. Thus, the communication complexity when executing the state-of-art matrix multiplication and truncation protocols vs. when executing $\procn{\compactmatmul}$ with \name{} is the same.

\subsection{Offline phase}
\label{sec:offlineanal}
When performing matrix multiplication with truncation, the preprocessing phase entails generating the following: 
\begin{myenumerate}
    \item $\shr{\cdot}$-shares of Beaver matrix triple $\mA$, $\mB$ and $\mC$ such that $\mC = \mA \mmul \mB$ where $\mA \in \TextRing^{\sizem \cross \sizen}, \mB \in \TextRing^{\sizen \cross \sizeo}, \mC \in \TextRing^{\sizem \cross \sizeo}$.
    \item $\shr{\cdot}$-shares of matrix $\mR \in \TextRing^{\sizem \cross \sizeo}$ and $\trunc{\mR} \in \TextRing^{\sizem \cross \sizeo}$ where $\trunc{\mR} = \frac{\mR}{2^{\valf}}$.
\end{myenumerate}

When using the SPD$\nZ_{2^k}$ \cite{spdz2k} for performing matrix multiplication with truncation, one needs to execute $\procn{\MatMul}$ (Figure \ref{fig:mpcmatmul}) followed by $\procn{\Truncation}$ (Figure \ref{fig:macTruncation}). 
On the other hand, when using \name{}, we realize this operation via $\procn{\compactmatmul}$ to perform both matrix multiplications and truncation in a single shot. The preprocessing materials needed are identical for \name{} and \SPDZtwok{}.

Thus, our preprocessing phase has the same complexity as that of \spdz, and does not require any other material to be generated in the preprocessing phase to aid our computationally more efficient online phase. 


\section{Experimental evaluation}
\label{sec:eval}

In this section, we report our experiments and demonstrate the benefits empirically.

\subsection{Experimental setup}
We have evaluated our design on servers with an AMD EPYC 7502 CPU and an Nvidia Quadro RTX 5000. Each server's network bandwidth is 14Gbps. We gather runtime from the average of \textbf{30} iterations of inference or training.

\paragraph{Models evaluated:} We use ResNet50~\cite{resnet}, a Transformer encoder\cite{bert}, and VGG16~\cite{vgg} to evaluate the performance of \name{} against $\SPDZtwok$. VGG16 and ResNet are popular vision models, and Transformer encoders are commonly used in Natural Language Processing models~\cite{bert, roberta, xlm, xlmr}. Those models commonly consist of multiple Transformer encoders of the same configurations. The Transformer configuration used is the same as BERT~\cite{bert} and XLM~\cite{xlm},  which is H=1024 (hidden vector size), A=16 (attention head count).

\paragraph{MPC setup:} We evaluated  \name{} using 2 sets of MPC parameters: 1) $(k=32, \sigma=26)$ and 2) $(k=64, \sigma=57)$. When using $(k=32, \sigma=26)$, the message space for a matrix $\mM$ will be $\nZ_{2^{32}}^{\sizem \cross \sizen}$, where $\sigma= s - log(s+1)$. We also choose $s=32$ in such a case. Thus, $\sigma$ is $26 = 32 - log(32+1)$, according to Theorem~\ref{theo:main}. Consequently, with $(k=32, \sigma=26)$, $(\as{\mM}{}, \as{\mac{\mM}}{}, \asmac{}) \in (\nZ_{2^{64}}^{\sizem \cross \sizen} \cross \nZ_{2^{64}}^{\sizem \cross \sizen} \cross \nZ_{2^{32}})$. Similarly, when using $(k=64, \sigma=57)$, $(\as{\mM}{}, \as{\mac{\mM}}{}, \asmac{}) \in (\nZ_{2^{128}}^{\sizem \cross \sizen} \cross \nZ_{2^{128}}^{\sizem \cross \sizen} \cross \nZ_{2^{64}})$. The choice of parameters is dictated by the native 32-bit and 64-bit operations on a native CPU architecture, and previous works\cite{spdk2kml,spdz2k} also consider the same set of security parameters. We let $\csec = 128$ be the computational security parameter.

\begin{table*}[!ht]
\centering
\caption{Online run time (seconds/input) analysis for inference and training corresponding to $(k=64, \sigma=57)$. The ``Tag'' and ``Output'' columns represent the time spent on computing the tag ($\as{\mac{\mZ}}{}$) and output share ($\as{\mZ}{}$) for Linear layers (Convolution and Matrix Multiplication).}
\begin{tabular}{|c|c|c|c|c|c|c||c|c|c|c|c|}
\hline
\multicolumn{1}{|c|}{\multirow{3}{*}{Model}} & \multicolumn{1}{l|}{\multirow{3}{*}{Protocol}} & \multicolumn{5}{c||}{Inference}                                                                                                                                                   & \multicolumn{5}{c|}{Training}                                                                                                                                                     \\ \cline{3-12} 
\multicolumn{1}{|l|}{}                       & \multicolumn{1}{l|}{}                          & \multicolumn{3}{c|}{Computation}                                                      & \multicolumn{1}{c|}{\multirow{2}{*}{Comm.}} & \multicolumn{1}{c||}{\multirow{2}{*}{Total}} & \multicolumn{3}{c|}{Computation}                                                       & \multicolumn{1}{l|}{\multirow{2}{*}{Comm.}} & \multicolumn{1}{c|}{\multirow{2}{*}{Total}} \\ \cline{3-5} \cline{8-10}
\multicolumn{1}{|l|}{}                       & \multicolumn{1}{l|}{}                          & \multicolumn{1}{c|}{Tag}  & \multicolumn{1}{l|}{Output} & \multicolumn{1}{l|}{Others} & \multicolumn{1}{c|}{}                      & \multicolumn{1}{c||}{}                       & \multicolumn{1}{c|}{Tag}   & \multicolumn{1}{l|}{Output} & \multicolumn{1}{l|}{Others} & \multicolumn{1}{l|}{}                      & \multicolumn{1}{l|}{}                       \\ \hline \hline
\multirow{3}{*}{ResNet}                      & \SPDZtwok                   & \multicolumn{1}{r|}{4.34} & \multicolumn{1}{r|}{6.57}   & \multicolumn{1}{r|}{2.17}   & \multicolumn{1}{r|}{5.69}                  & \multicolumn{1}{r||}{18.76}                                       & \multicolumn{1}{r|}{11.64} & \multicolumn{1}{r|}{18.88}  & \multicolumn{1}{r|}{6.31}   & \multicolumn{1}{r|}{9.15}                  & 45.99                                       \\ 
                                             & \name{}                     & \multicolumn{1}{r|}{\text{1.26}} & \multicolumn{1}{r|}{6.54}   & \multicolumn{1}{r|}{2.13}   & \multicolumn{1}{r|}{5.67}                  & \multicolumn{1}{r||}{15.59}                                       & \multicolumn{1}{r|}{\text{2.61}}  & \multicolumn{1}{r|}{18.72}  & \multicolumn{1}{r|}{6.47}   & \multicolumn{1}{r|}{9.12}                  & 36.92                                       \\     
                                             
                                             &\textbf{Speedup}& \multicolumn{1}{r|}{\textbf{3.4}$\times$}& - &- &- &\textbf{1.2}$\times$&\textbf{4.45}$\times$& -& -& -& {\textbf{1.25}$\times$} \\                     
                                             \hline \hline

\multirow{2}{*}{Transformer}                 & \SPDZtwok                  & \multicolumn{1}{r|}{4.16} & \multicolumn{1}{r|}{6.31}   & \multicolumn{1}{r|}{1.69}   & \multicolumn{1}{r|}{4.88}                  & \multicolumn{1}{r||}{17.04}                                       & \multicolumn{1}{r|}{12.56} & \multicolumn{1}{r|}{20.35}  & \multicolumn{1}{r|}{0.87}   & \multicolumn{1}{r|}{6.96}                  & 40.74                                       \\ 
                                             & \name{}                      & \multicolumn{1}{r|}{\text{0.18}} & \multicolumn{1}{r|}{6.30}   & \multicolumn{1}{r|}{1.68}   & \multicolumn{1}{r|}{5.00}                  & \multicolumn{1}{r||}{13.15}                                       & \multicolumn{1}{r|}{\text{0.57}}  & \multicolumn{1}{r|}{20.27}  & \multicolumn{1}{r|}{1.21}   & \multicolumn{1}{r|}{6.87}                  & 28.92    \\       
                                               &\textbf{Speedup}& \multicolumn{1}{r|}{\textbf{23}$\times$}& - &- &- &\textbf{1.3}$\times$&\textbf{22}$\times$& -& -& -& {\textbf{1.4}$\times$}               
                                             \\ \hline \hline
\multirow{2}{*}{VGG16}                       & \SPDZtwok                 & \multicolumn{1}{r|}{5.94} & \multicolumn{1}{r|}{8.96}   & \multicolumn{1}{r|}{1.35}   & \multicolumn{1}{r|}{4.95}                  & \multicolumn{1}{r||}{21.20}                                       & \multicolumn{1}{r|}{44.27} & \multicolumn{1}{r|}{73.57}  & \multicolumn{1}{r|}{3.67}   & \multicolumn{1}{r|}{12.22}                 & 133.73                                      \\ 
                                             & \name{}                      & \multicolumn{1}{r|}{\text{1.36}} & \multicolumn{1}{r|}{8.92}   & \multicolumn{1}{r|}{1.63}   & \multicolumn{1}{r|}{4.54}                  & \multicolumn{1}{r||}{16.46}                                       & \multicolumn{1}{r|}{\text{2.52}}  & \multicolumn{1}{r|}{75.23}  & \multicolumn{1}{r|}{4.53}   & \multicolumn{1}{r|}{11.51}                 & 93.80   
                                             \\       
                                               &\textbf{Speedup}& \multicolumn{1}{r|}{\textbf{4.4}$\times$}& - &- &- &\textbf{1.3}$\times$&\textbf{17}$\times$& -& -& -& {\textbf{1.47}$\times$}         
                                               \\ \hline
\end{tabular}
\label{tab:decomp}
\end{table*}

\subsection{Implementation details}
We implement the online phase of $\SPDZtwok$ and \name{} for DNNs on the basis of CrypTen~\cite{crypten2020}, a PyTorch-based~\cite{pytorch} MPC framework. CrypTen provides an easy-to-use API to build DNN models and supports fixed-point operations using the CUDA kernels. However, CrypTen was originally designed for a semi-honest majority MPC protocol and did not separate the online phase and the offline phase. We separated the offline phase and online phase and modified the CrypTen framework such that the MPC operations comply with those specified in~\cite{spdz2k, spdk2kml}. We implemented the entire protocol to execute on GPU parties, which exploit the parallelism in the hardware to accelerate MPC execution. 
We used the distributed communication backend from PyTorch to establish communication channels for our GPU-based implementation.

\paragraph{CUDA matrix operation implementation:}  For CUDA matrix multiplication and convolutions, we follow the same block multiplication technique as CrypTen~\cite{crypten2020} and CryptGPU~\cite{fis} such that 64-bit or 128-bit integer multiplications are carried out using multiple floating point operations. For example, when using double-precision floating points ($52$ bits for the fraction) to carry out computations,  we divide a 64-bit operand into four 16-bit sub-operands, and each sub-operation requires $32$ bits to store the resulting multiplications. Given all operations for matrix multiplication and convolutions are multiply and accumulate operations when using double-precision floating points, 20 bits ($52-32$)  are left for accumulation, allowing $1048576=2^{20}$ accumulations, which is sufficient for ML workloads.


\subsection{Performance analysis}
The key enhancement brought forth by \name{} is its optimized tag computation during the online phase. By incorporating optimistic tag expansion with truncation, 
\name{}'s offline phase remains identical to our baseline detailed in~\cite {spdz2k, spdk2kml}. Hence, to analyze the performance, we focus on the online phase runtime. 
The runtime decomposition is presented in Table~\ref{tab:decomp}. Furthermore, to provide a clearer understanding of the sources of \name{}'s speedups, we offer a visual representation of the runtime decomposition for the Transformer model decomposition in Figure~\ref{fig:decomp-infer} and~\ref{fig:decomp}, demonstrating a $22\times$ speedup for tag computation. Figure~\ref{fig:main-speedups} shows the online speedup of \name{} on three different large ML models for both inference and training. Additionally, we also demonstrate \name{}'s scalability and \name{}'s impact on power consumption. We discuss as follows.


\paragraph{Inference:}
The first row in Figure~\ref{fig:main-speedups} shows the inference performance improvements of \name{}. For the $(k=32, \sigma=26)$ setting, \name{} speed up the tag-related computation by $3.05\times$, $24.77\times$ and $4.16\times$, for ResNet, Transformer, and VGG16, respectively. This accelerated tag computation translates to online speedups of $1.11\times$, $1.22\times$, and $1.21\times$ for ResNet, Transformer, and VGG16, respectively. \name{}'s speedup for tag computation is determined by the model architectures, namely $\sizem$, $\sizen$, and $\sizeo$ in different models. $\sizem$, $\sizen$, and $\sizeo$ in Transformers allow us to have the best tag computation speedups among all models. For example, a layer in the Transformer, whose $\sizem=2048$, $\sizen=1024$, and $\sizeo=1024$, allows us to reduce the number of computations by $384\times$. In this case, the Transformer model's tag generation for final results is almost free. This drastic reduction in tag generation is directly caused by the usage of smaller tags because smaller tags significantly reduce the computational complexity of tag generation.

For the $(k=64, \sigma=57)$ setting, all models see a higher performance improvement over the $(k=32, \sigma=26)$ setting. Table~\ref{tab:decomp} presents the detailed results for this setting. Most noticeably, \name{} reduces the tag-related computation by $23\times$, achieving an overall $1.31\times$ speedup for Transformer. When using the $(k=64, \sigma=57)$ setting, the operand bit becomes $k+s = 64+64=128$, instead of $64$ bits for the $(k=32, \sigma=26)$ setting. With double operand bit width, MPC's computation costs for convolution and matrix multiplication quadrupled, whereas the communication cost only doubled (opening $\mE$ and $\mU$). For example, Figure~\ref{fig:decomp-infer} shows the runtime decomposition of Transformer inference with $(k=64, \sigma=57)$. In Figure~\ref{fig:decomp-infer}, tag computation makes up $24\%$ of the total runtime for \SPDZtwok{}. Compared to the $(k=32, \sigma=26)$'s $21\%$, this tag computation takes a larger proportion of the total runtime. Since \name{} drastically reduces tag computation costs,  \name{} will have better speedups when tag computation accounts for a larger proportion of the total runtime.


\begin{figure}[t]
  \centering
    \includegraphics[width=6.5cm]{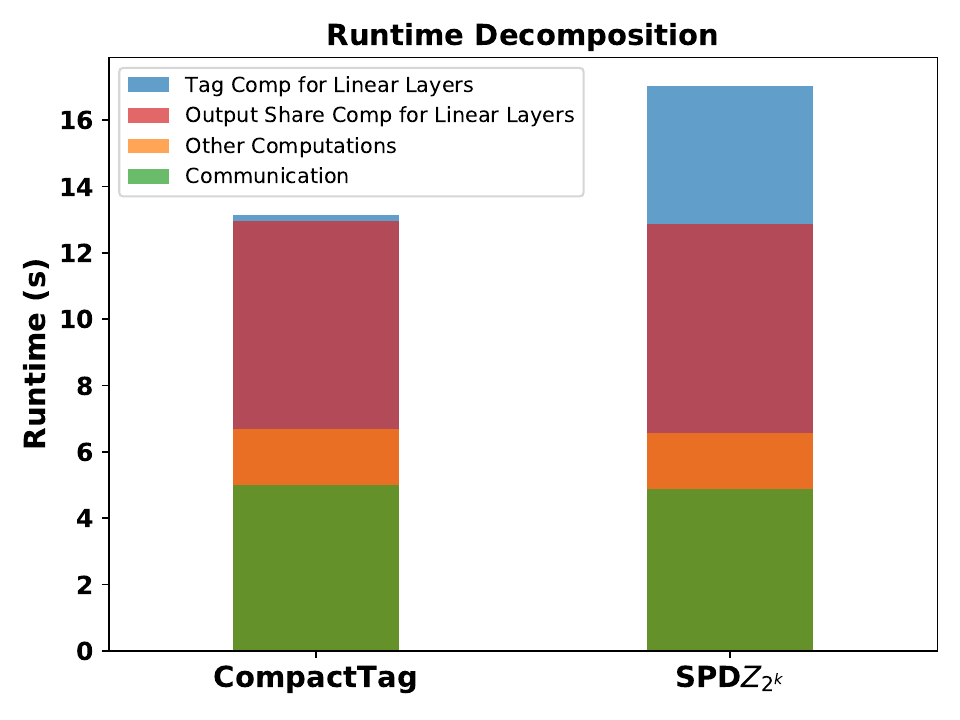}
  \caption{Transformer's Inference Online Phase Runtime Analysis ($k=64, \sigma=57$).}
  \label{fig:decomp-infer} 
\end{figure}

\begin{figure}[t]
  \centering 
    \includegraphics[width=6.5cm]{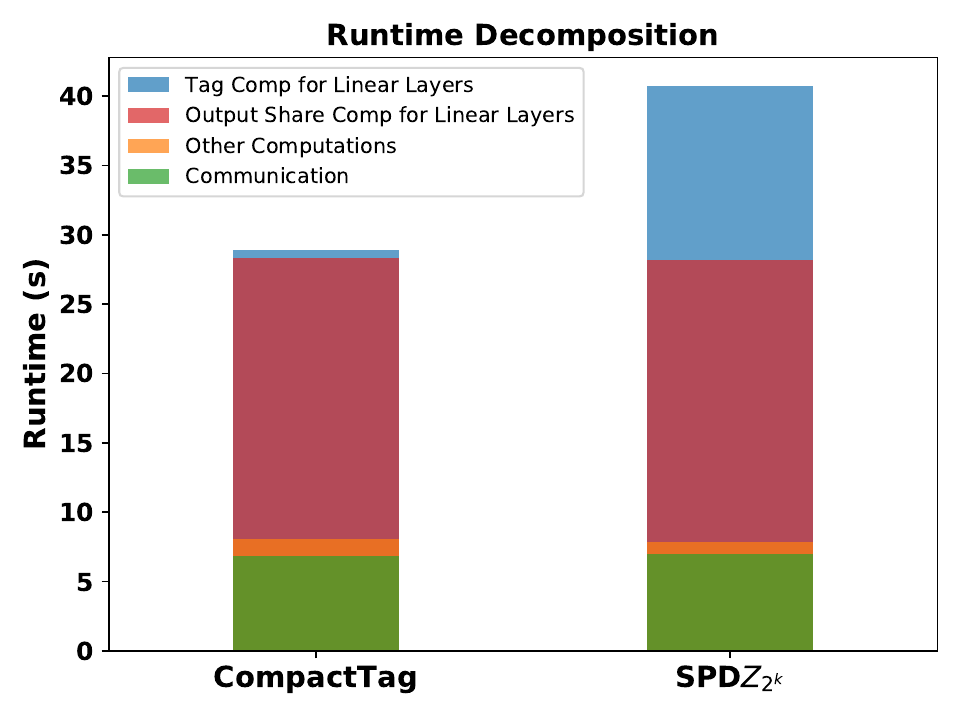}
  \caption{Transformer's Training Online Phase Runtime Analysis ($k=64, \sigma=57$).}
  \label{fig:decomp}
\end{figure}

\paragraph{Training:}
The second row in Figure~\ref{fig:main-speedups} shows the training performance improvements. \name{} speeds up tag-related computation by $4.4\times$, $22\times$, and $17.5\times$ for Resnet, Transformer and VGG16 models, respectively. For all models and all MPC settings, \name{} has higher performance benefits for training than for inference. This result is due to the fact that training has increased computations without a significant increased in communication. For non-linear layers such as ReLU and MaxPooling, the backward function involves an element-wise multiplication between the output gradient and a secret shared input mask. This element-wise multiplication in the backward pass induces much less communication cost than those in the forward pass. Moreover, linear layers will need to perform additional large matrix multiplications. For fully connected layers, during the backward pass, MPC parties need to multiply the output gradients with both the input and the weight matrices. Those operations incur more computations than communications and will also incur more tag computations. Thus, the tag computation will take a larger proportion of the total training runtime than inference. Compared with tag computation in Figure~\ref{fig:decomp-infer}, tag computation takes a larger proportion in Figure~\ref{fig:decomp}. Consequently, \name{} showcases more speedups in training than inference. 

\paragraph{CPU Execution:}
While we focused all our results using GPUs, given their popularity in ML, it is useful to note that our protocol runs equally well on CPUs. In particular, we have implemented our protocol on CPUs and ran the  Transformer training. \name{} reduces the tag computation by $20\times$, resulting in $1.54\times$ online speedup, which is similar to the GPU results shown above.

\begin{figure*}[!h]
  \centering
  \begin{subfigure}[tb]{0.3\linewidth}
    \includegraphics[width=\linewidth, height=4cm]{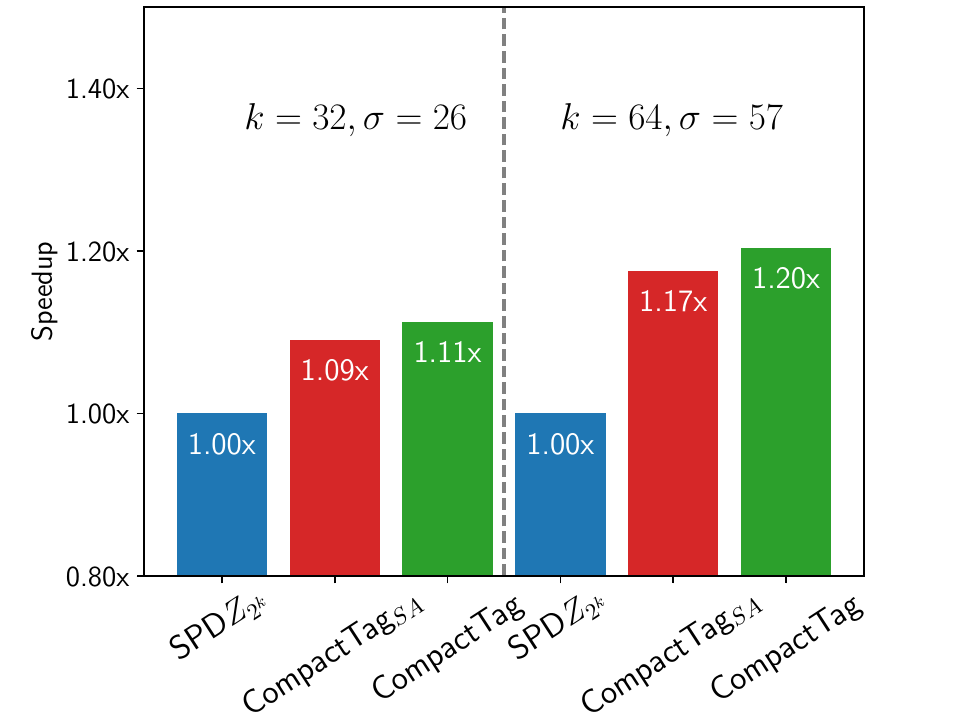}
    \caption{ResNet Inference}
    \label{fig:res-infer}
  \end{subfigure}
  \begin{subfigure}[tb]{0.3\linewidth}
    \includegraphics[width=\linewidth, height=4cm]{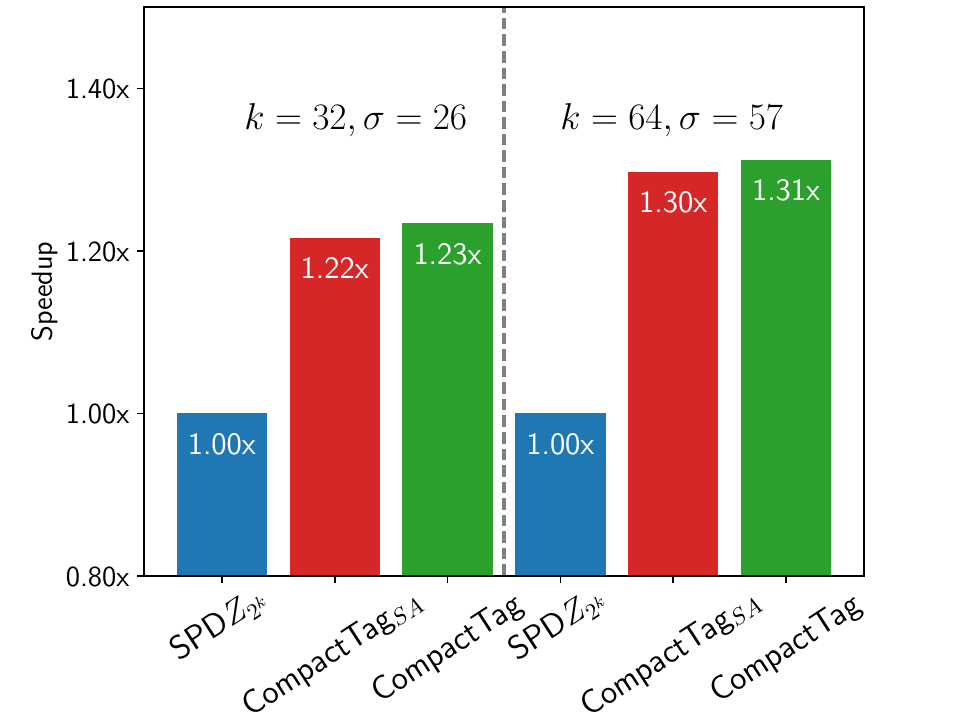}
    \caption{Transformer Inference}
    \label{fig:trans-infer}
  \end{subfigure}
    \begin{subfigure}[tb]{0.3\linewidth}
    \includegraphics[width=\linewidth, height=4cm]{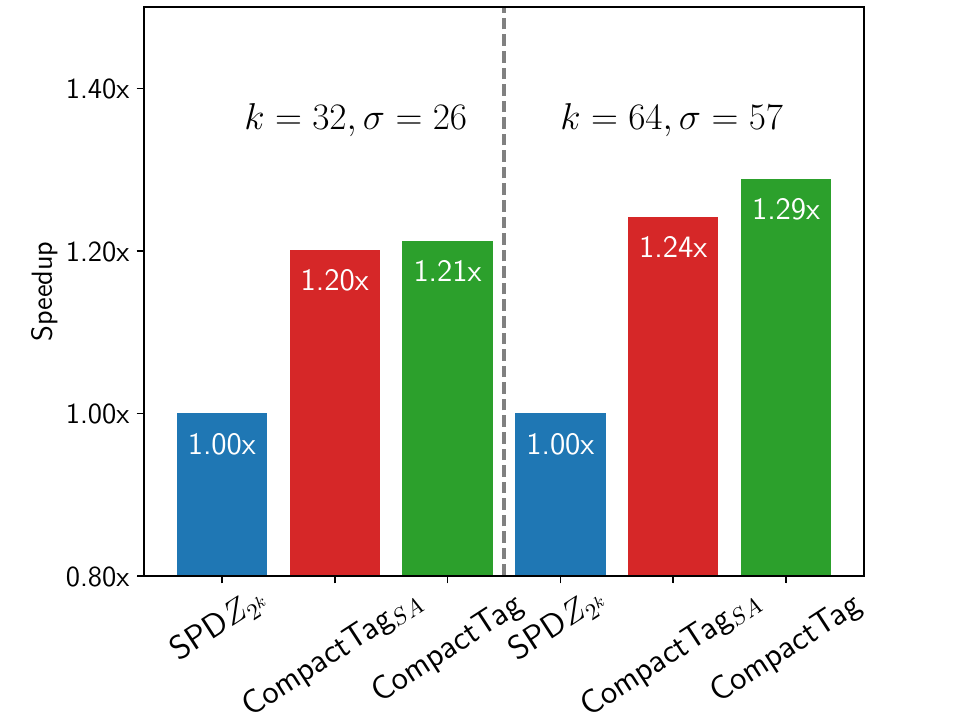}
    \caption{VGG16 Inference}
    \label{fig:vgg-infer}
  \end{subfigure}
  \begin{subfigure}[tb]{0.3\linewidth}
    \includegraphics[width=\linewidth, height=4cm]{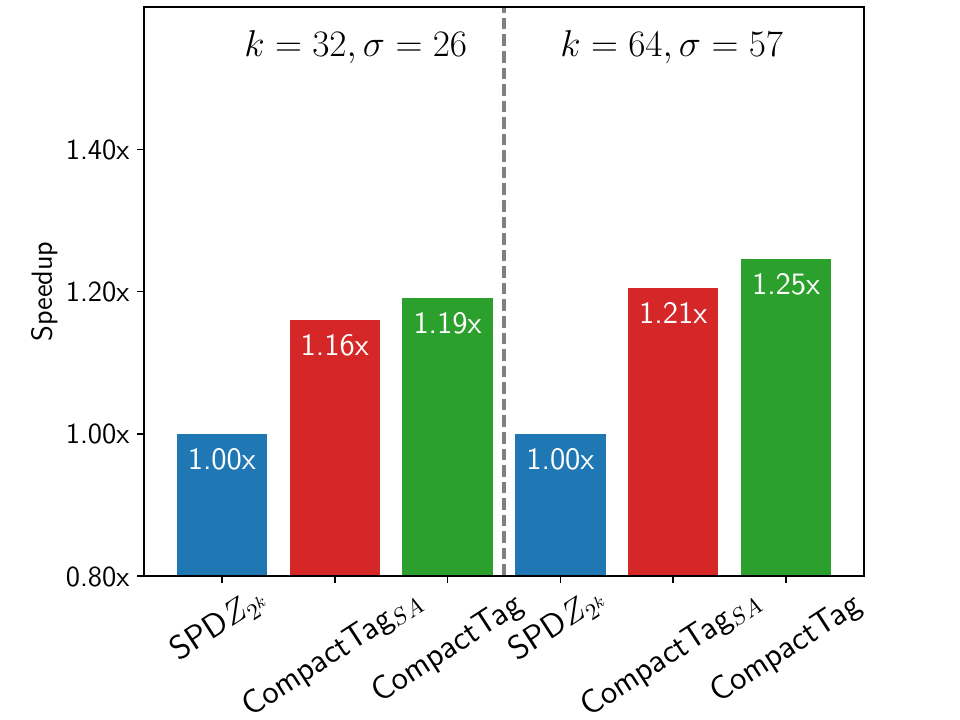}
    \caption{ResNet Training}
    \label{fig:res-train}
  \end{subfigure}
  \begin{subfigure}[tb]{0.3\linewidth}
    \includegraphics[width=\linewidth, height=4cm]{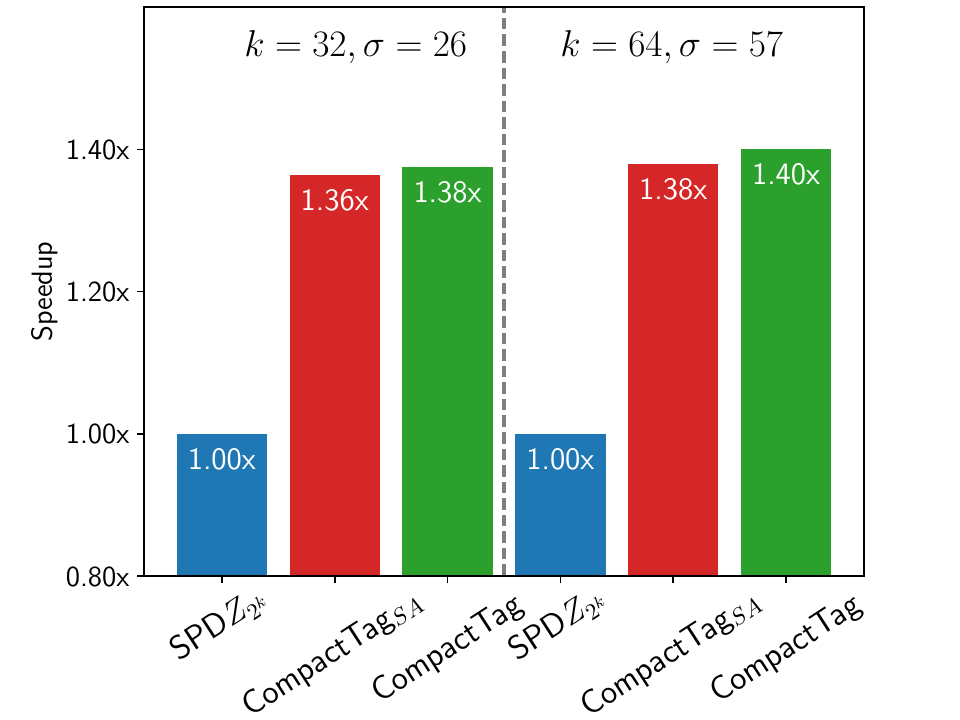}
    \caption{Transformer Training}
    \label{fig:trans-train}
  \end{subfigure}
    \begin{subfigure}[tb]{0.3\linewidth}
    \includegraphics[width=\linewidth, height=4cm]{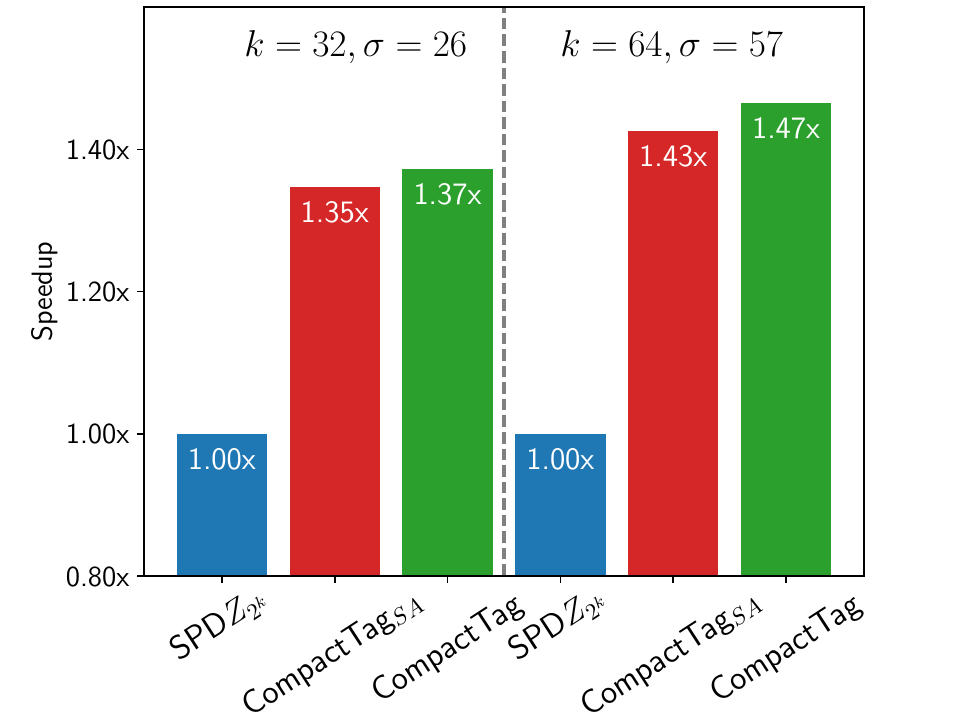}
    \caption{VGG16 Training}
    \label{fig:vgg-train}
  \end{subfigure}
  \smallskip
  \begin{flushleft}
   \textbf{Note:} {{$\name{}_\textsf{SA}$}} is $\name{}$ where $\procn{\optmac}$  generates optimistic tags. \name{}$_\textsf{SA}$ introduces additional communication costs resulting in a performance gap. 
  \vspace{-2mm}
  \end{flushleft}   
  \caption{Average Online-Phase Speedup of $\name{}$ over $\SPDZtwok$ for 2PC.}
  \label{fig:main-speedups}
\end{figure*}

\paragraph{Scaling with more parties \& faster networks:}
Speedups showcased in Figure~\ref{fig:main-speedups} are in the 2PC setting. Table~\ref{tab:scale-pc} shows the \name{}'s speedup w.r.t. \SPDZtwok{} with more numbers of participating parties. When scaling to more parties, we assume a point-to-point connection between every pair of parties. 
The speedups in Table~\ref{tab:scale-pc} are averaged over all three models. 
See Appendix~\ref{sec:app2} for details about specific models.

\begin{table}[!htbp]
\centering
  \resizebox{.48\textwidth}{!}{
\begin{tabular}{|c|c|r|r|r|r|}
\hline
                           &       & \multicolumn{1}{c|}{\footnotesize 2PC} & \multicolumn{1}{c|}{\footnotesize 3PC} & \multicolumn{1}{c|}{\footnotesize 4PC} & \multicolumn{1}{c|}{\footnotesize 5PC} \\ \hline
\multirow{2}{*}{ Inference} & {\footnotesize Total} & $1.26\times$             & $1.20\times$             & $1.17\times$             & $1.14\times$             \\ \cline{2-6} 
                           & {\footnotesize Tag}   & $\mathbf{10.31\times}$   & $\mathbf{10.31\times}$   & $\mathbf{10.31\times}$   & $\mathbf{10.31\times}$   \\ \hline \hline
\multirow{2}{*}{ Training}  & {\footnotesize Total} & $1.38\times$             & $1.32\times$             & $1.28\times$             & $1.25\times$             \\ \cline{2-6} 
                           & {\footnotesize Tag}   & $\mathbf{14.69\times}$   & $\mathbf{14.69\times}$   & $\mathbf{14.69\times}$   & $\mathbf{14.69\times}$   \\ \hline
\end{tabular}
}
\caption{\name{}'s average speedup over $\SPDZtwok$ with varying number of parties.``Tag'' denote \name{}'s speedup on tag computation for linear layers.}
\label{tab:scale-pc}
\end{table}

With more participating parties, computation costs for tag computation stay the same. And the tag computation reduction factors are also unchanged. But with more parties the communication costs will start to grow, reducing the tag computation portion in the overall performance. As such, the end-to-end speedups are reduced slightly as we move from 2-5 parties. 


\begin{figure}[!htbp]
  \centering
  \includegraphics[width=7cm]{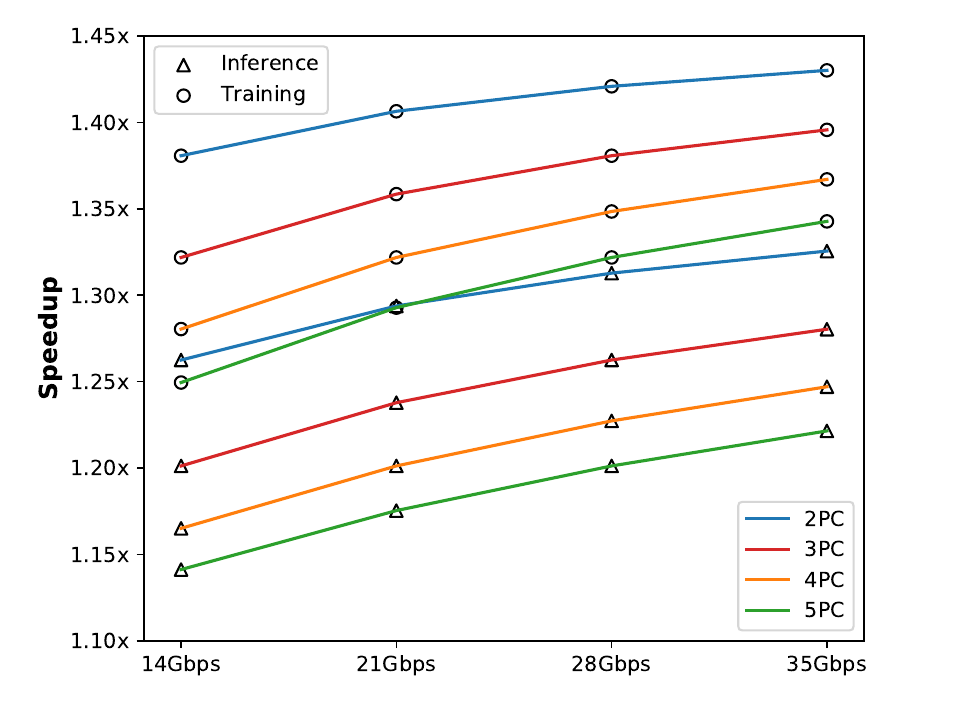}
  \caption{\name{}'s Average Online Performance Improvements over $\SPDZtwok$ on Different Interconnect Speed.}
  \label{fig:tagVspeed}
\end{figure}

\textbf{Faster network:} On the other hand, \name{} benefits from faster network connections. Figure~\ref{fig:tagVspeed} shows \name{}'s speedups with faster interconnections. The speedup is averaged across 3 models. From Figure~\ref{fig:tagVspeed}, we can see that as the interconnection speed increases, the performance improvements of \name{} protocols increase. A faster interconnection speed does not reduce computation runtime but reduces the time spent on data transmission. Reduced communication runtime and constant computation runtime amplify the proportion of tag-related computation in the total runtime. As a result \name{} achieves better performance improvement with faster interconnections.

\paragraph{Power usage:}
Tables \ref{tab:powinfer} and \ref{tab:powtrain} show the reduction in GPU power usage by \name{} in comparison to $\SPDZtwok$. 
Data presented  in this section are collected using  PyJoules~\cite{pyjoules}.

\begin{table}[!htbp]
\centering

\begin{tabular}{|c|c||c||c|}
\hline
\multirow{1}{*}{Configuration} & \multicolumn{1}{c||}{ResNet} & \multicolumn{1}{c||}{Transformer} & \multicolumn{1}{c|}{VGG16} \\ \cline{1-4} 
$k=32, \sigma=26$ & \multicolumn{1}{c||}{$21.87\%$} & \multicolumn{1}{c||}{$31.51\%$} & \multicolumn{1}{c|}{$30.55\%$} \\ \hline
$k=64, \sigma=57$ & \multicolumn{1}{c||}{$23.78\%$} & \multicolumn{1}{c||}{$35.34\%$} & \multicolumn{1}{c|}{$34.54\%$} \\ \hline
\end{tabular}
\caption{\name{}'s power usage reduction over GPUs during inference; the reduction is measure over SPD$\nZ_{2^k}$.}\label{tab:powinfer}
\end{table}

\begin{table}[!htbp]
\centering

\begin{tabular}{|c|c||c||c|}
\hline
\multirow{1}{*}{Configuration} & \multicolumn{1}{c||}{ResNet} & \multicolumn{1}{c||}{Transformer} & \multicolumn{1}{c|}{VGG16} \\ \cline{1-4} 
$k=32, \sigma=26$ & \multicolumn{1}{c||}{$24.31\%$} & \multicolumn{1}{c||}{$30.74\%$} & \multicolumn{1}{c|}{$32.98\%$}\\ \hline
$k=64, \sigma=57$ & \multicolumn{1}{c||}{$28.24\%$} & \multicolumn{1}{c||}{$34.55\%$} &\multicolumn{1}{c|}{$36.42\%$}\\ \hline
\end{tabular}
\caption{\name{}'s power usage reduction over GPUs during training; the reduction is measure over SPD$\nZ_{2^k}$.}\label{tab:powtrain}
\end{table}

Generally, \name{} achieves $21\%$ to $36\%$ power usage reduction. This power usage reduction is primarily due to the decreased computational complexity associated with tag-related calculations. For the $\SPDZtwok$ protocol, large uncompressed tags are used to compute sizeable tags. However, when using \name{}, smaller compressed tags are used to compute small tags. Consequently, a significant reduction in power usage is realized due to this decrease in computational complexity.


\section{Conclusion}
\label{sec:conclusion}

We present \name{}, an optimization in the $\SPDZtwok$ framework that efficiently generates tags for linear layers in DNN. \name{} drastically reduces the tag-related computation in the $\SPDZtwok$ and SPDZ framework for large matrices.
Specifically, when multiplying two matrices with dimensions $\sizem \times \sizen$ and $\sizen \times \sizeo$, the tag computation in \name{} only requires quadratic computation, i.e., $\Order(\sizem\cross \sizen + \sizem\cross \sizeo + \sizen\cross\sizeo)$ local multiplications, as opposed to previous protocols that demand cubic computation. This leads to concrete asymptotic enhancements in the online phase. The offline phase of \name{} remains unchanged from $\SPDZtwok$ without any additional adjustments. Our experiments show that the asymptotic improvements enable \name{} to reduce the online tag computation in the \SPDZtwok{} framework by a \textit{$4.4 \times-22\times$}, resulting in $1.11\times-1.47\times$ inference and training online phase speedups.

\section*{Acknowledgment}
This material is based upon work supported by Defense Advanced Research Projects Agency (DARPA) under Contract Nos. HR001120C0088, NSF award number  2224319, REAL@USC-Meta center, and VMware gift. The views, opinions, and/or findings expressed are those of the author(s) and should not be interpreted as representing the official views or policies of the Department of Defense or the U.S. Government. 
Arpita Patra would like to acknowledge financial support from the Sony Faculty Innovation Award 2021, the J.P Morgan Chase Faculty Award 2022 and the National Security Council, India. 
Nishat Koti would like to acknowledge financial support from the National Security Council, India. 

\bibliographystyle{plain}
\bibliography{crypto/extra,crypto/reference,crypto/abbrev3,crypto/crypto}
\appendix

\section{Additional Functionalities}
We present the additional functionalities for randomness and MAC generation in this section. 

\subsection{$\funcn{Rand}$}
\label{appendix:frand}
This section shows the functionality $\funcn{Rand}$ (Fig. \ref{fig:funcRand}) each party relies on to generate random numbers.

\begin{figure}[!htbp]
    \begin{framed}
    \centerline{\textbf{Functionality} $\funcn{Rand}$}
    \smallskip

    \begin{flushleft}

    \justify
    
    \textbf{\textsc{Input:}} Each party $\party{i}$ invoke this functionality.

    \justify
    \textbf{\textsc{Algorithm:}} Sample a random number $\valr$ from $\Ring$.
    \justify
    
    \textbf{\textsc{Output:}} $\valr$ to each $\party{i}$.
   
  \end{flushleft}
  \vspace{-5mm}
  \end{framed}
  \vspace{-5mm}
\caption{Functionality for generating a random number.}
\label{fig:funcRand}
\end{figure}

\subsection{$\funcn{MAC}$}
\label{appendix:fmac}
This section will show the authentication procedure \name{} relies on. This functionality $\funcn{MAC}$ (Fig. \ref{fig:funcMAC}) is taken from~\cite{spdz2k}.

\begin{figure}[!ht]
    \begin{framed}
    \centerline{\textbf{Functionality} $\funcn{MAC}$}
    \smallskip

    \begin{flushleft}

    \justify
    
    \textbf{\textsc{Initialize:}} After each party $\party{i}$ invoke \textbf{Initialize}, sample random values  $\asmac{i} \leftarrow \KeyRing$ for $i\notin \Adv$ and receive $\asmac{i} \leftarrow \KeyRing$ for $i\in \Adv$. Store the $\mkey{} = \sum_{i=1}^{n} \asmac{i}$ and output $\asmac{i}$ to $\party{i}$.

    \justify
    \textbf{\textsc{Auth:}} After all $\party{i}$ invoke \textbf{Auth} with $\as{\valv}{i} \in \CipherRing$.
    \begin{itemize}
        \item Compute $\valv \modeq{k+s} \sum_{i=1}^{n} \as{\valv}{i}$ and $\mac{\valv} \modeq{k+s} \mkey{} \cdot \valv $.
        \item Wait for $\as{\mac{\valv}}{i}$ for $i \in \Adv$ and sample $\as{\mac{\valv}}{i}$ randomly for $i \notin \Adv$ such that $\mac{\valv} \modeq{k+s} \sum_{i=1}^{n} \as{\mac{\valv}}{i}$.
    \end{itemize}
        
    \justify
    
    \textbf{\textsc{Authentication:}} After all $\party{i}$ invoke \textbf{MAC} with $\as{\valv}{i} \in \CipherRing$.
    \begin{itemize}
        \item Wait for the adversary to send messages (\textbf{guess}, $i$, $S_i$) for all $i\in \Adv$, where $S_i$ efficiently describes a subset of $\{0, 1\}^s$. If $\asmac{i} \in S_i$ for all i then send (\textbf{success}) to $\Adv$. Otherwise, send (\textbf{abort}) to all parties and abort.
        \item $\party{i}$ executes \textbf{Auth}  with $\as{\valv}{i}$, then wait for the adversary to send either \textbf{success} or \textbf{Abort}. If the adversary sends \textbf{success} then send the $\as{\mac{\valv}}{i}$ to $\party{i}$, otherwise abort.

    \end{itemize}
   
  \end{flushleft}
  \vspace{-5mm}
  \end{framed}
  \vspace{-5mm}
\caption{Functionality for generating shares of global key and distributing shares of inputs and tags.}
\label{fig:funcMAC}
\end{figure}

\section{Security Analysis}
\label{sec:security}
In this section, we will analyze the privacy and provide a simulation-based proof for the in Figure~\ref{fig:matmultrunc}.

\paragraph{Privacy argument:}
For procedure $\procn{\compactmatmul}$, the inputs are $\mX$ and $\mY$. All those matrices are authenticated and shared among a set of MPC parties $\Partyset$. Due to the use of secret sharing, the secrecy of input $\mX$ and $\mY$ are provided. During computation of $\as{\mZ}{}$ and $\as{\mac{\mZ}}{}$ matrices $\mE = \mX - \mA$, $\mU = \mY-\mB$, and $\mD = \mR - \mZ$ are revealed to all MPC parties. Given matrix $\mA$, $\mB$, and $\mR$ are all random matrices that are unknown to MPC parties, revealed matrices $\mE$, $\mU$, and $\mD$ will not compromise the privacy of $\mX$, $\mY$, and $\mZ$.

\begin{figure}[!t]
    \begin{framed}
    \centerline{\textbf{Functionality} $\funcn{MatMul}$}
    \smallskip

    \begin{flushleft}
    
    \justify
    
    \textbf{\textsc{Notations:}}  List of honest parties is $\honestlist$, and list of corrupt parties is $\corruptlist$.

    \justify
    
    \textbf{\textsc{Input:}} Each party $\party{i}$ (among $n$ parties) invokes the functionality with inputs $\shr{\mX}_i = (\as{\mX}{i}, \as{\mac{\mX}}{i}, \as{\mkey{}}{i})$ for $\mX \in \TextRing^{\sizem \cross \sizen}$, and  $\shr{\mY}_i = (\as{\mY}{i}, \as{\mac{\mY}}{i}, \as{\mkey{}}{i})$ for $\mY \in \TextRing^{\sizen \cross \sizeo}$.  

    \justify 

    \textbf{\textsc{Corruption:}} Each corrupt party $\party{j} \in \corruptlist$ provides its input shares for $(\as{\mZ}{j}, \as{\mac{\mZ}}{j})$.
    
    \justify

    \textbf{\textsc{Algorithm:}}
    \begin{myenumerate}
        \item Reconstruct $\mkey{} = \sum_{i=1}^{n} \as{\mkey{}}{i}$.
        \item Reconstruct $\mX = \Sigma_{i = 1}^n \as{\mX}{i}$, and $\mac{\mX} \sum_{i=1}^{n} \as{\mac{\mX}}{i}$. Abort if $\mac{\mX} \neq \mX \cdot \mkey{}$.
        \item Reconstruct $\mY = \Sigma_{i = 1}^n \as{\mY}{i}$, and $\mac{\mY} \sum_{i=1}^{n} \as{\mac{\mY}}{i}$. Abort if $\mac{\mY} \neq \mY \cdot \mkey{}$.
        \item Compute ${\mZ} = {\mX \mmul \mY}$ and $\mac{{\mZ}} = {\mZ} \cdot \mkey{}$.
        \item Compute $\widetilde{{\mZ}} = {\mZ} - \sum_{j \in \corruptlist} \as{{\mZ}}{j}$ and $\widetilde{\mac{{\mZ}}} = \mac{{\mZ}} - \sum_{j \in \corruptlist} \as{\mac{{\mZ}}}{j}$.

        \item For each $i \in \honestlist$, sample random $\as{\mZ}{i}$ and $\as{\mac{{\mZ}}}{i}$ values s.t. $\widetilde{{\mZ}} == \sum_{i \in \honestlist} \as{{\mZ}}{i}$ and $\widetilde{\mac{{\mZ}}} == \sum_{i \in \honestlist} \as{\mac{{\mZ}}}{i}$.
    \end{myenumerate}

    \justify
    
    \textbf{\textsc{Output:}} $\shr{{\mZ}}_i = (\as{{\mZ}}{i}, \as{\mac{{\mZ}}}{i}, \as{\mkey{}}{})$ for ${\mZ} \in \TextRing^{\sizem \cross \sizeo}$ to each party $\party{i}$.

  \end{flushleft}
  \vspace{-5mm}
  \end{framed}
  \vspace{-5mm}
\caption{Ideal Functionality $\funcn{MatMul}$ for computing matrix multiplication over authenticated shares of matrices.}
\label{fig:FUNC_MATMUL}
\end{figure}

\paragraph{Proof sketch:} We provide a simulation proof sketch of our protocol from Figure~\ref{fig:matmultrunc} by showing that our protocol implements the ideal functionality $\funcn{MatMul}$ (Figure \ref{fig:FUNC_MATMUL}) for authenticated matrix multiplication. The preprocessing phase simulation is performed by running the simulator from the $\SPDZtwok$ protocol \cite{spdz2k} since our protocol is the same as theirs. The simulator $\Sim$ knows the values $\mA$, $\mB$, $\mC$ and $\mR$ from the simulator of the preprocessing phase. In the online phase, we assume that the inputs are shared in an authenticated way where $\Sim$ knows the global MAC key $\mkey{}$. We describe the simulator for the online phase step-by-step as follows:
   \begin{myenumerate}
       
        \item $\Sim$ executes the open phase of $\procn{\batchcheck}$ (Figure \ref{fig:batchcheck}) to reconstruct $\mE = -\mA$ and $\mU = -\mB$ by setting its share values as 0.

        \item Same as the protocol steps. $\Sim$ locally computes $\as{\mZ}{} = \as{\mC}{} + \mE \mmul \as{\mB}{} + \as{\mA}{} \mmul \mU$. 
        
        \item  Same as the protocol steps. $\Sim$ invokes $\procn{\optmactrunc}$ (Figure \ref{fig:expandtrunc}) on $\as{\mZ}{}$ to truncate $\mZ$ to obtain $\as{\trunc{\mZ}}{}$ as well as optimistically generate the MAC $\as{\mac{\trunc{\mZ}}}{}$. $\procn{\optmactrunc}$ also outputs $\as{\mac{\mR}}{}, \lsum{\mD}$ generated in the process.  
        
        \item Same as the protocol steps. If $\party{1}$ is corrupt, then do nothing else. Else, $\Sim$ simulates party   $\party{1}$ by locally computing $\as{\trunc{\mZ}}{1} = \as{\trunc{\mZ}}{1} + \frac{\mE \mmul \mU}{2^{\valf}}$ and $\as{\mac{\trunc{\mZ}}}{1} = \as{\mac{\trunc{\mZ}}}{1} + \asmac{1} \cdot \frac{\mE \mmul \mU}{2^{\valf}}$. 
        
        \item Same as the protocol steps. $\Sim$ generates a public constant matrix $\chi \in \KeyRing^{\sizeo \cross 1}$ by invoking $\funcn{\Rand}$.  
        
        \item Same as the protocol steps. $\Sim$ invokes $\procn{Compress}$ (Figure \ref{fig:proccomp}) to compress $\as{\mac{\mB}}{}$, $\as{\mac{\mC}}{}$, $\mU$, $\as{\mac{\mR}}{}$ and $\lsum{\mD}$ under the combiners $\chi$ to generate  $\as{\compact{\mac{\mB}}}{}$, $\as{\compact{\mac{\mC}}}{}$, $\compact{\mU}$, $\as{\compact{\mac{\mR}}}{}$ and $\compact{\lsum{\mD}}$, respectively. 
        
        \item Same as the protocol steps. $\Sim$ locally computes $\as{\compact{\mac{\mZ}}}{} = \as{\compact{\mac{\mC}}}{} + \mE \mmul \as{\compact{\mac{\mB}}}{} + \as{\mac{\mA}}{} \mmul \compact{\mU}$ and $\as{\compact{\mac{\mD}}}{} = \as{\compact{\mac{\mR}}}{} - \as{\compact{\mac{\mZ}}}{}$. 
          
    \end{myenumerate}

    {\em // Verification } 
    \begin{myenumerate}
        
        \item $\Sim$ simulates the {\em tag check} phase from $\procn{\batchcheck}$ (Figure \ref{fig:batchcheck}) to extract the $\mX$ and $\mY$ shares of the corrupt parties. In step 1, the parties invoke $\funcn{\Rand}$ to generate a random value. In step 3, $\Sim$ commits to random values on behalf of honest parties and broadcasts it. Upon receiving the commitments from the corrupt parties, the simulator uses the knowledge of $\mA$, $\mB$ and $\mkey{}$ to extract the $\as{\mX}{}$, $\as{\mY}{}$, $\as{\mac{\mX}}{}$ and $\as{\mac{\mY}}{}$ shares of the corrupt parties. 
        $\Sim$ proceeds with the simulation where it passes the checks in steps 4 and 5 of the verification phase in Figure \ref{fig:batchcheck} by programming the commitments to open to simulated values (computed using the knowledge of $\mkey{}$) which pass the checks.

        \item All parties sample a public random matrix $\hat{\chi} \in \KeyRing^{\sizem\cross1}$ via $\funcn{\Rand}$.
        \item $\Sim$ performs nothing.
        \item $\Sim$ performs nothing.
        \item $\Sim$ commits to random values on behalf of the honest parties. $\Sim$ extracts the corrupt parties' shares of $\as{\mZ}{}$ using the knowledge of $\as{\mX}{}$ and $\as{\mY}{}$ of the corrupt parties. Upon receiving the corrupt parties' commitments, $\Sim$ extracts the committed corrupt $\as{\cs}{}$ values. $\Sim$ computes random $\as{\cs}{i}$ values for $i \in \honestlist$ s.t. $\sum_{i \in \honestlist} \as{\cs}{i} + \sum_{j \in \corruptlist} \as{\cs}{j} == 0 \mod 2^{k+2s}$. $\Sim$ programs the commitments s.t. the simulated honest party $\party{i}$ decommits to $\as{\cs}{i}$.

       \item  $\Sim$ extracts $\as{\mac{\mZ}}{j}$ of the corrupt parties using the knowledge of $\as{\mZ}{j}$ and $\as{\mkey{}}{j}$. $\Sim$ invokes $\funcn{MatMul}$ with corrupt parties' inputs  $(\as{\mZ}{j}, \as{\mac{\mZ}}{j})$. If $\funcn{MatMul}$ aborts, then $\Sim$ also aborts, else it continues. 
               
        \item $\Sim$ ends the simulation. 
    \end{myenumerate}

The adversary distinguishes between the real and ideal world if it breaks the consistency checks in Step 1 and Step 5 of the verification phase, where the tags of $\mX$ (via verification of tag of $\mE$), $\mY$ (via verification of tag of $\mU$), and $\mZ$ (via verification of tag of $\mD$) are verified. However, this occurs with negligible probability, as we discuss next.

\paragraph{Correctness of consistency checks:} In this section, we will discuss the correctness of  $\as{\mac{\trunc{\mZ}}}{}$ obtained from procedure $\procn{\optmactrunc}$. In $\procn{\compactmatmul}$, an optimistic tag is generated by invoking $\procn{\optmactrunc}$. In $\procn{\optmactrunc}$, $\as{\mac{\trunc{\mZ}}}{}$ is computed using revealed matrix $\mD$ and local share of $\as{\mac{\mR}}{}$ and $\asmac{}$. Both $\as{\mac{\mR}}{}$ and $\asmac{}$ are obtained and verified during the offline phase, so active adversaries can only introduce errors to the honest parties' $\mD$. Thus, to verify the correctness of $\as{\mac{\trunc{\mZ}}}{}$, MPC parties only need to check the correctness of $\mD$. In $\procn{\compactmatmul}$, during the verification stage, MPC parties will compute and broadcast the checksum $\as{\cs}{}$. This checksum is computed using committed operands ($\asmac{}$, $\hat{\chi}$, and $\as{\mac{\compact{\mD}}}{}$) and a broadcasted matrix $\mD$. MPC parties will abort if the broadcasted checksum is not zero modulo $2^{k+2s}$. The broadcasted checksum can be parsed as:
\begin{align}
\cs =& \sum_{i=1}^{\sizem} \mkey{} \cdot \hat{\chi}_{i} \cdot \compact{\lsum{\mD}}_{i} - \hat{\chi}_{i} \cdot \mac{\compact{\mD}_{i}} \\
   =& \sum_{i=1}^{\sizem} \hat{\chi}_{i} \cdot ( \mkey{} \cdot \compact{\lsum{\mD}}_{i} - \mac{\compact{\mD}_{i}}) \\
   =& \sum_{i=1}^{\sizem} \hat{\chi}_{i} \cdot (\mkey{} \cdot \sum_{j=1}^{\sizeo} \chi_j \cdot \lsum{\mD}_{i,j} -  \sum_{j=1}^{\sizeo} \chi_j \cdot 
 \mac{\mD_{i,j}}) \\
   =& \sum_{i=1}^{\sizem} \hat{\chi}_{i} \cdot \sum_{j=1}^{\sizeo} (\mkey{} \cdot \chi_j \cdot \lsum{\mD}_{i,j} - \chi_j \cdot \mac{\mD_{i,j}})
\end{align}
After adversaries add errors to the matrix $\mD$, the checksum is rewritten as
\begin{align}
   &\sum_{i=1}^{\sizem} \hat{\chi}_{i} \cdot \sum_{j=1}^{\sizeo} (\mkey{} \cdot \chi_j \cdot (\lsum{\mD}_{i,j} + e_{i,j}) - \chi_j \cdot \mac{\mD_{i,j}}) \\
  =&\sum_{i=1}^{\sizem} \hat{\chi}_{i} \cdot \sum_{j=1}^{\sizeo} \mkey{} \cdot \chi_j \cdot e_{i,j} + (\mkey{} \cdot \chi_j \cdot\lsum{\mD}_{i,j}  - \chi_j \cdot \mac{\mD_{i,j}}) \\
  =&\sum_{i=1}^{\sizem} \hat{\chi}_{i} \cdot \sum_{j=1}^{\sizeo} \mkey{} \cdot e_{i,j} \cdot \chi_j
\end{align}
where $e_{i,j}$ is the error added to the $\mD_{i,j}$. With the simplified checksum, we will show the probability of adversaries introducing errors to $\mD$ while the checksum equals 0 mod $2^{k+2s}$.
\begin{theorem}
    \label{theo:main}
    Suppose that $\chi_1, \chi_2, ..., \chi_{\sizem}$ and $\hat{\chi}_1, \hat{\chi}_2, ..., \hat{\chi}_{\sizeo}$ are sampled randomly from $\KeyRing$. Adversaries can choose \textbf{\textit{any}} combination of $e_{i,j}$ where not all $e_{i,j}$ are zero modulo $2^k$, and $\sum_{j=1}^{\sizeo} \mkey{} \cdot e_{i,j} \cdot  \chi_j$ is in modulo of $2^{k+s}$ , such that
    \begin{align}
        \label{eq:main}
        \Pr[\sum_{i=1}^{\sizem} \hat{\chi}_i \cdot \sum_{j=1}^{\sizeo} \mkey{} \cdot e_{i,j} \cdot  \chi_j \modeq{k+2s} 0] \leq 2^{-s+log(s+1)}
    \end{align}
\end{theorem}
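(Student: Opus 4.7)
The plan is to reduce the two-level summation to two independent applications of a standard random-combiner lemma over $\nZ_{2^N}$, in the style of $\SPDZtwok$. Define
\[
E_i \;\define\; \sum_{j=1}^{\sizeo} \mkey{} \cdot e_{i,j} \cdot \chi_j \pmod{2^{k+s}},
\]
so the quantity whose vanishing probability must be bounded is $S \define \sum_{i=1}^{\sizem} \hat{\chi}_i \cdot E_i \pmod{2^{k+2s}}$. The main technical input is a folklore lemma (used in $\SPDZtwok$): for any fixed $a \in \nZ_{2^N}$ with $a \not\equiv 0 \pmod{2^N}$ and a uniformly random $r \in \nZ_{2^s}$,
\[
\Pr_r\bigl[a \cdot r \equiv 0 \pmod{2^N}\bigr] \;\leq\; (s+1)/2^s \;=\; 2^{-s+\log(s+1)},
\]
because the annihilators of $a$ in $\nZ_{2^s}$ form a 2-adic ideal of size at most $s+1$ (determined by the 2-adic valuation of $a$).

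\textbf{Stage 1 (inner level).} Pick any witness $(i^\star,j^\star)$ with $e_{i^\star,j^\star} \not\equiv 0 \pmod{2^k}$, which exists by hypothesis. Since $\mkey{}$ is an honestly-sampled global key in $\KeyRing$, it is fixed and independent of the adversary's error choices; the 2-adic valuation of $e_{i^\star,j^\star}$ is at most $k-1$, so $\mkey{} \cdot e_{i^\star,j^\star}$ viewed in $\nZ_{2^{k+s}}$ is non-zero. Isolate the $\chi_{j^\star}$ term, $E_{i^\star} = \mkey{}\cdot e_{i^\star,j^\star}\cdot \chi_{j^\star} + c$, where $c$ depends only on $\{\chi_j\}_{j\neq j^\star}$. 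Conditioning on those and applying the lemma at $N=k+s$ yields $\Pr[E_{i^\star} \equiv 0 \pmod{2^{k+s}}] \leq 2^{-s+\log(s+1)}$.

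\textbf{Stage 2 (outer level).} Condition on the high-probability event that $E_{i^\star} \not\equiv 0 \pmod{2^{k+s}}$. Viewing $E_{i^\star}$ as an element of $\nZ_{2^{k+2s}}$, it is still non-zero (a non-zero residue mod $2^{k+s}$ lifts to an element of 2-adic valuation at most $k+s-1 < k+2s$). Write $S = \hat{\chi}_{i^\star}\cdot E_{i^\star} + c'$ with $c'$ independent of $\hat{\chi}_{i^\star}$, and apply the lemma at $N=k+2s$ to obtain $\Pr[S\equiv 0 \pmod{2^{k+2s}}] \leq 2^{-s+\log(s+1)}$. Combining the two stages (the additive union-bound constant is absorbed into the $\log(s+1)$ slack, following the same convention as~\cite{spdz2k}) gives the claimed bound.

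The main obstacle is the careful tracking of 2-adic valuations across the two moduli: one must confirm that the transition from $\nZ_{2^{k+s}}$ to $\nZ_{2^{k+2s}}$ preserves non-vanishing of $E_{i^\star}$ so that the second invocation of the lemma is valid, and one must verify that $\mkey{}$ being unknown (but not adversarially chosen) suffices for the 2-adic valuation of $\mkey{}\cdot e_{i^\star,j^\star}$ to be controlled. A secondary subtlety is ensuring that the adversarial strategy cannot exploit correlations between the different $e_{i,j}$'s to cause cancellations before $\chi_j$ is sampled; fixing the witness $(i^\star,j^\star)$ and conditioning on all other $\chi_j$'s sidesteps this cleanly.
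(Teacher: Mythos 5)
Your Stage 2 is fine and is essentially the paper's Lemma~\ref{lemma:1}: because $E_{i^\star}$ is a residue modulo $2^{k+s}$, its $2$-adic valuation is at most $k+s-1$, so modulo $2^{k+2s}$ there are at least $s$ bits of headroom and the single fresh combiner $\hat{\chi}_{i^\star}\in\nZ_{2^s}$ makes the outer sum vanish with probability at most $2^{-s}$. The genuine gap is Stage 1, and it sits inside the ``folklore lemma'' itself: for a \emph{fixed} non-zero $a\in\nZ_{2^N}$ and uniform $r\in\nZ_{2^s}$ it is \emph{not} true that $\Pr[a\cdot r\equiv 0 \bmod 2^N]\le (s+1)2^{-s}$. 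The correct bound is $2^{-\min(N-v_2(a),\,s)}$, which is $1/2$ for $a=2^{N-1}$; the set of annihilating $r$ is an ideal of $\nZ_{2^s}$, hence of size a power of two ($2^{\max(0,\,s-N+v_2(a))}$), never ``at most $s+1$''. In your application $a=\mkey{}\cdot e_{i^\star,j^\star}$ with $N=k+s$, and the adversary is free to choose $e_{i^\star,j^\star}=2^{k-1}$, so $v_2(a)=v_2(\mkey{})+k-1$ and the conditional probability over $\chi_{j^\star}$ alone is $2^{-\min(s+1-v_2(\mkey{}),\,s)}$. For keys with $v_2(\mkey{})>\log(s+1)+1$ (a $\approx 2^{-\log(s+1)-1}$ fraction of keys, which is noticeable) this exceeds $(s+1)2^{-s}$, so the per-key bound you assert is simply false. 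The subtlety you flagged---whether an unknown-but-fixed $\mkey{}$ ``controls'' the valuation of $\mkey{}\cdot e_{i^\star,j^\star}$---is exactly where the argument breaks: it does not, and the $\log(s+1)$ slack in the theorem is precisely the cost of averaging over the secret key's valuation. The inner level therefore needs the full \SPDZtwok{} MAC-check lemma (the paper's Lemma~\ref{lemma:2}), whose probability is taken jointly over $\mkey{}$ and the combiners $\chi_j$, and which cannot be recovered from a fixed-$a$ statement plus conditioning on the other $\chi_j$.

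Structurally your decomposition (fix a witness row, bound the event that its inner sum vanishes, then use the outer combiner otherwise) parallels the paper's proof, which conditions on the whole vector $G_i=\sum_j \mkey{}\,e_{i,j}\,\chi_j \bmod 2^{k+s}$, handles the all-zero branch with Lemma~\ref{lemma:2} applied to a row containing a non-zero error, and the not-all-zero branch with Lemma~\ref{lemma:1}. Two smaller points: (i) your additive combination of the two stages gives $(s+2)2^{-s}$, not $2^{-s+\log(s+1)}$, so the slack cannot just be ``absorbed'' if the constant in the statement is to be kept (the paper instead compares the two mutually exclusive scenarios and takes the larger bound); (ii) your claim that $\mkey{}\cdot e_{i^\star,j^\star}\neq 0$ in $\nZ_{2^{k+s}}$ also fails on the event $\mkey{}=0$, another symptom of reasoning per fixed key rather than over its distribution. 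Repairing the proof amounts to replacing your folklore lemma in Stage 1 by the cited \SPDZtwok{} lemma, at which point you recover the paper's argument.
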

Two lemmas are used to prove this theorem.
\begin{lemma}
    \label{lemma:1}
    Suppose that $\hat{\chi}_1, \hat{\chi}_2, ..., \hat{\chi}_{\sizem}$ are sampled randomly from $\KeyRing$. Adversaries can choose \textbf{\textit{any}} combination of $y_1, y_2, ..., y_{\sizem}$ from $\CipherRing$, where not all $y_i$s are zero modulo $2^{k+s}$. Then, 
    \begin{align}
        \Pr[\sum_{i=1}^{\sizem} \hat{\chi}_i \cdot y_i \modeq{k+2s} 0] \leq 2^{-s}
    \end{align}
\end{lemma}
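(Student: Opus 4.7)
The plan is to reduce the multi-coordinate tail bound to a single-coordinate $2$-adic counting argument. Since the hypothesis guarantees at least one index $i^*$ with $y_{i^*} \not\equiv 0 \pmod{2^{k+s}}$, I would first condition on arbitrary values of $\hat{\chi}_i$ for $i \neq i^*$. Under this conditioning, the event $\sum_{i=1}^{\sizem} \hat{\chi}_i y_i \equiv 0 \pmod{2^{k+2s}}$ becomes the affine-linear event $\hat{\chi}_{i^*}\, y_{i^*} \equiv -C \pmod{2^{k+2s}}$, where $C \in \nZ_{2^{k+2s}}$ is determined by the adversary's choice of $\{y_i\}$ and the fixed $\{\hat{\chi}_i\}_{i\neq i^*}$. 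If I can bound the conditional probability (over uniform $\hat{\chi}_{i^*} \in \nZ_{2^s}$) by $2^{-s}$ for every value of $C$, then averaging over the remaining randomness yields the claimed bound.

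The technical core is to analyze how many $x \in \nZ_{2^s}$ can satisfy $x\, y_{i^*} \equiv -C \pmod{2^{k+2s}}$. I would write $y_{i^*} = 2^e \cdot u$ with $u$ odd and $0 \leq e \leq k+s-1$, where the upper bound on $e$ follows exactly from the hypothesis $y_{i^*} \not\equiv 0 \pmod{2^{k+s}}$. If $-C$ is not divisible by $2^e$ modulo $2^{k+2s}$, the event is infeasible and the conditional probability is zero. Otherwise, dividing both sides of the congruence by $2^e$ yields the equivalent constraint $x \cdot u \equiv C' \pmod{2^{k+2s-e}}$ with $C' = -C/2^e$. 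Since $u$ is odd it is a unit modulo any power of two, and the constraint further simplifies to $x \equiv u^{-1} C' \pmod{2^{k+2s-e}}$.

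At this point the inequality $k + 2s - e \geq s + 1 > s$ is decisive: the modulus of the reduced congruence strictly exceeds $2^s$, so among the $2^s$ candidate values in $\nZ_{2^s}$ at most one can lie in the prescribed residue class modulo $2^{k+2s-e}$. Hence the conditional probability is at most $1/2^s$ for every $C$, and taking expectation over $\{\hat{\chi}_i\}_{i\neq i^*}$ gives the desired bound $2^{-s}$.

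I expect the main subtlety to be the bookkeeping around the three moduli in play ($2^s$, $2^{k+s}$, $2^{k+2s}$) and verifying that the hypothesis $y_{i^*} \not\equiv 0 \pmod{2^{k+s}}$ is used in exactly the right place, namely to guarantee $e \leq k+s-1$ and thereby ensure that the reduced modulus still dominates the size of the sample space of $\hat{\chi}_{i^*}$. Once that inequality is clean, the branch analysis (infeasibility when $2^e \nmid -C$, at-most-one solution otherwise) is routine.
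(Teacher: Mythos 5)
Your proposal is correct and follows essentially the same route as the paper's proof: isolate the coordinate with $y_{i^*}\not\equiv 0 \pmod{2^{k+s}}$, factor out its $2$-adic part $2^e$ with $e\le k+s-1$, invert the odd part, and use that the reduced modulus $2^{k+2s-e}>2^s$ pins $\hat{\chi}_{i^*}$ to at most one value in $\nZ_{2^s}$. Your write-up is simply a more careful rendering (explicit conditioning, explicit divisibility branch) of the paper's argument, which takes $i^*=1$ without loss of generality.
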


\begin{proof}
    Let's assume $2^v$ is the largest power of two dividing $y_i$. We know that $y_i < 2^{k+s}$ by assumption. Thus, we have $v < k+s$. Therefore,
    \begin{align}
        &\Pr[\sum_{i=1}^{\sizem} \hat{\chi}_i \cdot y_i \modeq{k+2s} 0] \\
       =&\Pr[ \hat{\chi}_{1} \cdot y_{1} \modeq{k+2s} \sum_{i=2}^{\sizem} \hat{\chi}_i \cdot y_i] \\
       =&\Pr[ \hat{\chi}_{1} \cdot \frac{y_{1}}{2^v} \modeq{s} \frac{\sum_{i=2}^{\sizem} \hat{\chi}_i \cdot y_i}{2^v}] \\
       =&\Pr[ \hat{\chi}_{1}   \modeq{s} \frac{\sum_{i=2}^{\sizem} \hat{\chi}_i \cdot y_i}{2^v} \cdot (\frac{y_{1}}{2^v})^{-1}] \\
    \leq& 2^{-s}   
    \end{align}
\end{proof}

\begin{lemma}
    \label{lemma:2}
    Suppose that ${\chi}_1, {\chi}_2, ..., {\chi}_{\sizeo}$ are sampled randomly from $\KeyRing$, and $\mkey{}$ are uniformly sampled from $\KeyRing$. Adversaries can choose \textbf{\textit{any}} combination of $e_1, e_2, ..., e_{\sizeo}$ from $\TextRing$, where not all $e_i$s are zero modulo $2^{k}$. Then, 
    \begin{align}
        \Pr[\sum_{i=1}^{\sizeo} \mkey{} \cdot e_{i} \cdot  \chi_i \modeq{k+s} y] \leq 2^{-s+log(s+1)}
    \end{align}
    where $y$ can be any value in $\nZ_{k+s}$.
\end{lemma}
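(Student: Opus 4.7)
The plan is to fix a pivot index $i^*$ at which $e_{i^*}$ has minimum $2$-adic valuation $v := v_2(e_{i^*})$. The hypothesis that some $e_i$ is nonzero modulo $2^k$ ensures $v < k$, so we may write $e_{i^*} = 2^v u$ with $u$ odd, hence invertible modulo every power of two. Since the adversary fixes all $e_i$ before any randomness is drawn, and the bound I will derive is independent of the remaining $\chi_j$ with $j \neq i^*$, I would fix those to arbitrary worst-case values. Setting $S := \sum_i e_i \chi_i$, the remaining randomness is $(\chi_{i^*}, \mkey{})$, which I handle by conditioning on $\chi_{i^*}$ first.

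With $\chi_{i^*}$ fixed, $S$ is determined; let $w := v_2(S)$ and write $S = 2^w u_S$ with $u_S$ a unit modulo $2^{k+s-w}$. The congruence $\mkey{} S \modeq{k+s} y$ forces $\mkey{}$ into a specific residue class modulo $2^{k+s-w}$. This class contains at most one element of $\nZ_{2^s}$ when $w \leq k$, but contains $2^{w-k}$ elements once $w > k$, because the modulus $2^{k+s-w}$ drops below $2^s$. Hence $\Pr_{\mkey{}}[\mkey{} S \modeq{k+s} y \mid \chi_{i^*}]$ is at most $2^{-s}$ when $w \leq k$, at most $2^{w-k-s}$ when $k < w < k+s$, and trivially at most $1$ when $w \geq k+s$. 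For the outer probability, I bound $\Pr_{\chi_{i^*}}[v_2(S) \geq w]$ by noting that $S \equiv 0 \pmod{2^w}$ pins $\chi_{i^*}$ to a unique residue class modulo $2^{w-v}$ (using that $u$ is odd), so at most $2^{\max(0,\, s-w+v)}$ of the $2^s$ values of $\chi_{i^*}$ qualify; this gives $\Pr[v_2(S) \geq w] \leq 2^{v-w}$ for $v \leq w \leq v+s$ and $\Pr[v_2(S) \geq k+s] \leq 2^{-s}$.

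Combining the two bounds by decomposing over $w$ yields
\[
\Pr \leq 2^{-s} + \sum_{w=k+1}^{k+s-1} 2^{v-w} \cdot 2^{w-k-s} + 2^{-s} = 2 \cdot 2^{-s} + (s-1)\cdot 2^{v-k-s}.
\]
Using $v \leq k-1$ gives $2^{v-k-s} \leq 2^{-s-1}$, so the bound is at most $\tfrac{s+3}{2}\cdot 2^{-s} \leq (s+1)\cdot 2^{-s} = 2^{-s+\log(s+1)}$; the last inequality holds for $s \geq 1$, which is always the case since $\KeyRing = \nZ_{2^s}$ is assumed nontrivial.

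The main obstacle I expect is the regime split at $w = k$ in the conditional probability over $\mkey{}$: when $w$ exceeds $k$, the count of admissible $\mkey{} \in \nZ_{2^s}$ jumps from $1$ to $2^{w-k}$ because the modulus $2^{k+s-w}$ falls below the size of the $\mkey{}$ range. The $\log(s+1)$ loss in the exponent arises precisely from summing the worsening conditional bound $2^{w-k-s}$ over the $\Theta(s)$ values $w \in (k, k+s)$; any attempt at a sharper exponent would need to exploit the joint distribution of $v_2(S)$ and $\mkey{}$ beyond this two-step conditioning.
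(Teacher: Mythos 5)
The paper does not actually prove this lemma in-text: it simply defers to the \SPDZtwok{} paper (``Lemma 2 is proven in [spdz2k]''). So your blind attempt is, by necessity, a different route from the paper itself, namely a self-contained re-derivation; in spirit it matches the original \SPDZtwok{} argument (pivot on the index of minimal $2$-adic valuation $v$ of the errors, use the randomness of the challenge $\chi_{i^*}$ to control $v_2(S)$ for $S=\sum_i e_i\chi_i$, then count how many keys $\mkey{}\in\KeyRing$ satisfy $\mkey{}\cdot S \modeq{k+s} y$, with the $\log(s+1)$ loss coming from summing over the $O(s)$ possible valuation levels). Your case analysis of the conditional probability over $\mkey{}$ (at most $2^{-s}$ for $w\le k$, $2^{w-k-s}$ for $k<w<k+s$, trivial for $w\ge k+s$) and of $\Pr[v_2(S)\ge w]$ is the right decomposition, and your final numeric bound $\tfrac{s+3}{2}\cdot 2^{-s}\le (s+1)2^{-s}=2^{-s+\log(s+1)}$ is valid.

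One step is applied outside its stated range of validity, though the conclusion survives. In the displayed sum you use $\Pr[v_2(S)\ge w]\le 2^{v-w}$ for every $w$ from $k+1$ to $k+s-1$, but you derived that bound only for $w\le v+s$; when $v<k-1$ the summation range contains values $w>v+s$, where the pinned residue class modulo $2^{w-v}$ can still contain one element of $\{0,\ldots,2^s-1\}$, so the best one can say there is $\Pr[v_2(S)\ge w]\le 2^{-s}$, which exceeds $2^{v-w}$. The repair is immediate: in that regime pair $2^{-s}$ with the conditional bound $2^{w-k-s}\le 2^{-1}$ (valid since $w\le k+s-1$), so every term of the sum is still at most $2^{-s-1}$, exactly as in your regime $w\le v+s$ where the term is at most $2^{v-k-s}\le 2^{-s-1}$; hence the total $2\cdot 2^{-s}+(s-1)2^{-s-1}\le (s+1)2^{-s}$ is unchanged. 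With that one-line patch your proof is correct.
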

Lemma~\ref{lemma:2} are proven in~\cite{spdz2k}.
With Lemma~\ref{lemma:1} and Lemma~\ref{lemma:2}, we will prove Theorem 1. Let's denote $\sum_{j=1}^{\sizeo} \mkey{} \cdot e_{i,j} \cdot  \chi_j$ as $G_i$. Then, the probability in Theorem 1 can be re-written as:
\begin{align}
    &\Pr[\sum_{i=1}^{\sizem} \hat{\chi}_i \cdot G_i \modeq{k+2s} 0]\\
    \label{eq:totalpro}
   =&\sum_{\forall \{y_i\}_{i=1}^{\sizem}} \Pr[\sum_{i=1}^{\sizem} \hat{\chi}_i \cdot G_i \modeq{k+2s} 0 | \cap_{i=1}^{\sizem} G_i\modeq{k+s}y_i] \nonumber \\
   &\ \ \ \ \ \ \ \ \ \ \ \ \ \ \ \ \ \  \cdot \Pr[\cap_{i=1}^{\sizem} G_i\modeq{k+s}y_i]\\
   \label{eq:condi}
   =& \sum_{\forall \{y_i\}_{i=1}^{\sizem}} \Pr[\sum_{i=1}^{\sizem} \hat{\chi}_i \cdot y_i \modeq{k+2s} 0] \cdot \Pr[\cap_{i=1}^{\sizem} G_i\modeq{k+s}y_i]
\end{align}


Equation~\ref{eq:totalpro} holds due to total probability. Equation~\ref{eq:condi} holds because of the definition of the conditional probability. Now, adversaries can choose from two attack schemes: 1) make all $y_i$s zeros, and 2) not all $y_i$s are zero.

In the first scenario, Equation~\ref{eq:condi} can be written as:
\begin{align}
    &\Pr[\sum_{i=1}^{\sizem} \hat{\chi}_i \cdot 0 \modeq{k+2s} 0] \cdot \Pr[\cap_{i=1}^{\sizem} G_i\modeq{k+s}0] \\
   =& 1\cdot \Pr[\cap_{i=1}^{\sizem} G_i\modeq{k+s}0] \\
   \label{leq:step1}
   \leq& min\{\Pr [G_i\modeq{k+s}0]\}_{i=1}^{\sizem} \\
   \label{leq:step2}
   \leq& min\{\Pr [\sum_{j=1}^{\sizeo} \mkey{} \cdot e_{i,j} \cdot \chi_j \modeq{k+s}0]\}_{i=1}^{\sizem} \\
   \label{leq:step3}
   \leq& 2^{-s+log(s+1)}
\end{align}
Inequality~\ref{leq:step1} holds because the probability of the intersection of all events is less or equal to the minimum probability of all intersected events. Inequality~\ref{leq:step2} holds due to the definition of $G_i$. Inequality~\ref{leq:step3} holds due to Lemma~\ref{lemma:2}.

In the second scenario, Equation~\ref{eq:condi} can be written as:
\begin{align}
    &\sum_{\forall \{y_i\}_{i=1}^{\sizem} \neq \{0\}_{i=1}^{\sizem}} \Pr[\sum_{i=1}^{\sizem} \hat{\chi}_i \cdot y_i \modeq{k+2s} 0] \cdot \Pr[\cap_{i=1}^{\sizem} G_i\modeq{k+s}y_i] \\
   \label{leq:step4}
    \leq &\sum_{\forall \{y_i\}_{i=1}^{\sizem} \neq \{0\}_{i=1}^{\sizem}} 2^{-s} \cdot \Pr[\cap_{i=1}^{\sizem} G_i\modeq{k+s}y_i] \\
   \label{leq:step5}
    \leq & 2^{-s} \cdot \sum_{\forall \{y_i\}_{i=1}^{\sizem} \neq \{0\}_{i=1}^{\sizem}}  \Pr[\cap_{i=1}^{\sizem} G_i\modeq{k+s}y_i] \\
   \label{leq:step6}
    \leq & 2^{-s}
\end{align}
Inequality~\ref{leq:step4} holds due to Lemma~\ref{lemma:1}. Inequality~\ref{leq:step6} holds because the summation of the probability of mutually exclusive events will be less or equal to one. Comparing scenario 1 and scenario 2, adversaries will have a better probability in the first scenario. Consequently,
\begin{align}
    \Pr[\sum_{i=1}^{\sizem} \hat{\chi}_i \cdot \sum_{j=1}^{\sizeo} \mkey{} \cdot e_{i,j} \cdot  \chi_j \modeq{k+2s} 0] \leq 2^{-s+log(s+1)}
\end{align}
\section{\name{} scales with parties}
\label{sec:app2}
This section shows the \name{}'s speedup over \SPDZtwok{} with more number of parties for VGG16, ResNet, and Transformer. The speedup for tag computation for all models stays the same as we scale to more parties.

\begin{table}[!htbp]
\centering
  \resizebox{.48\textwidth}{!}{
\begin{tabular}{|c|c|r|r|r|r|}
\hline
                           &       & \multicolumn{1}{c|}{\footnotesize 2PC} & \multicolumn{1}{c|}{\footnotesize 3PC} & \multicolumn{1}{c|}{\footnotesize 4PC} & \multicolumn{1}{c|}{\footnotesize 5PC} \\ \hline
\multirow{2}{*}{ Inference} & {\footnotesize Total} & $1.29\times$             & $1.25\times$             & $1.22\times$             & $1.20\times$             \\ \cline{2-6} 
                           & {\footnotesize Tag}   & $\mathbf{4.37\times}$   & $\mathbf{4.37\times}$   & $\mathbf{4.37\times}$   & $\mathbf{4.37\times}$   \\ \hline \hline
\multirow{2}{*}{ Training}  & {\footnotesize Total} & $1.47\times$             & $1.42\times$             & $1.38\times$             & $1.35\times$             \\ \cline{2-6} 
                           & {\footnotesize Tag}   & $\mathbf{17.57\times}$   & $\mathbf{17.57\times}$   & $\mathbf{17.57\times}$   & $\mathbf{17.57\times}$   \\ \hline
\end{tabular}
}
\caption{\name{}'s speedup for VGG16 over $\SPDZtwok$ with varying number of parties.``Tag'' denote \name{}'s speedup on tag computation for linear layers.}
\label{tab:scale-vgg}
\end{table}

\begin{table}[!htbp]
\centering
  \resizebox{.48\textwidth}{!}{
\begin{tabular}{|c|c|r|r|r|r|}
\hline
                           &       & \multicolumn{1}{c|}{\footnotesize 2PC} & \multicolumn{1}{c|}{\footnotesize 3PC} & \multicolumn{1}{c|}{\footnotesize 4PC} & \multicolumn{1}{c|}{\footnotesize 5PC} \\ \hline
\multirow{2}{*}{ Inference} & {\footnotesize Total} & $1.30\times$             & $1.21\times$             & $1.16\times$             & $1.13\times$             \\ \cline{2-6} 
                           & {\footnotesize Tag}   & $\mathbf{23.11\times}$   & $\mathbf{23.11\times}$   & $\mathbf{23.11\times}$   & $\mathbf{23.11\times}$   \\ \hline \hline
\multirow{2}{*}{ Training}  & {\footnotesize Total} & $1.43\times$             & $1.35\times$             & $1.29\times$             & $1.25\times$             \\ \cline{2-6} 
                           & {\footnotesize Tag}   & $\mathbf{22.04\times}$   & $\mathbf{22.04\times}$   & $\mathbf{22.04\times}$   & $\mathbf{22.04\times}$   \\ \hline
\end{tabular}
}
\caption{\name{}'s speedup for Transformer over $\SPDZtwok$ with varying number of parties.``Tag'' denote \name{}'s speedup on tag computation for linear layers.}
\label{tab:scale-xformer}
\end{table}

\begin{table}[!htbp]
\centering
  \resizebox{.48\textwidth}{!}{
\begin{tabular}{|c|c|r|r|r|r|}
\hline
                           &       & \multicolumn{1}{c|}{\footnotesize 2PC} & \multicolumn{1}{c|}{\footnotesize 3PC} & \multicolumn{1}{c|}{\footnotesize 4PC} & \multicolumn{1}{c|}{\footnotesize 5PC} \\ \hline
\multirow{2}{*}{ Inference} & {\footnotesize Total} & $1.20\times$             & $1.15\times$             & $1.12\times$             & $1.10\times$             \\ \cline{2-6} 
                           & {\footnotesize Tag}   & $\mathbf{3.44\times}$   & $\mathbf{3.44\times}$   & $\mathbf{3.44\times}$   & $\mathbf{3.44\times}$   \\ \hline \hline
\multirow{2}{*}{ Training}  & {\footnotesize Total} & $1.25\times$             & $1.20\times$             & $1.16\times$             & $1.14\times$             \\ \cline{2-6} 
                           & {\footnotesize Tag}   & $\mathbf{4.50\times}$   & $\mathbf{4.50\times}$   & $\mathbf{4.50\times}$   & $\mathbf{4.50\times}$   \\ \hline
\end{tabular}
}
\caption{\name{}'s speedup for ResNet over $\SPDZtwok$ with varying number of parties.``Tag'' denote \name{}'s speedup on tag computation for linear layers.}
\label{tab:scale-resnet}
\end{table}

\end{document}
